\documentclass[12pt]{article}
\usepackage[english]{babel}    
\usepackage[ansinew]{inputenc} 
\usepackage[T1]{fontenc}       
\usepackage{lmodern,microtype} 
\usepackage{amsmath,amsthm,amssymb,amsfonts} 
\usepackage{dsfont,mathrsfs,ushort} 

\usepackage{titlesec,titling}  
\usepackage[nohead]{geometry}  
\usepackage{setspace,caption}  
\usepackage{enumitem,booktabs} 
\usepackage[comma,longnamesfirst]{natbib} 
\usepackage{pstricks,pst-all}   
\usepackage{tikz}
\usepackage{pst-plot,pst-node,pst-3dplot}

\usepackage{pgfplots}
\usepackage{graphicx}
\usepackage{caption}
\usepackage{subcaption}

\usepackage{hyperref,breakurl}
\hypersetup{breaklinks=true}
\usepackage{longtable}
\usepackage[flushleft]{threeparttable}

\usepackage{algorithmic}
\geometry{margin=1.25in}
\setenumerate{label=\small(\roman*)}
\setstretch{1.15}
\hypersetup{colorlinks=true,pdfnewwindow=true,pdfstartview=FitH,%
	pdftitle="PPPP",pdfauthor="Victor Aguiar Roy Allen and Nail Kashaev",%
	urlcolor=blue!90!red!45!black,citecolor=blue!90!red!45!black,linkcolor=red!90!black}
\DeclareCaptionFont{fancy}{\bfseries\sffamily}
\captionsetup{margin=10pt,labelformat=simple,labelsep=endash,font=small,labelfont=fancy,format=hang}
\allowdisplaybreaks

\providecommand{\psreset}{\psset{%
		linewidth=0.3pt,linestyle=solid,linecolor=black,
		dotsize=2.5pt,dotsep=2.5pt,arrowsize=4pt,
		fillstyle=none,fillcolor=white,
		showpoints=false,arrows=-,linearc=0,framearc=0,
		hatchsep=2pt,hatchwidth=0.2pt,nodesep=4pt,opacity=1}
	\psset{gridcolor=black!60, subgridcolor=black!30}
}

\psreset

\usepackage{graphicx}
\graphicspath{ {figures/} }

\titleformat{\section}[block]{\centering\large\bfseries\sffamily}{\thesection.}{0.5em}{}
\titleformat{\subsection}[block]{\flushleft\bfseries}{\thesubsection.}{0.5em}{}
\titleformat{\subsection}[block]{\flushleft\bfseries\sffamily}{\thesubsection.}{0.5em}{}
\titleformat{\subsubsection}[runin]{\normalsize\itshape}{\bfseries\upshape\sffamily\thesubsubsection.}{0.5em}{}[.--\:]
\renewcommand{\thesubsubsection}{\arabic{section}.\arabic{subsection}.\alph{subsubsection}}
\titlespacing{\section}{0ex}{10ex}{5ex}
\titlespacing{\subsection}{0in}{6ex}{3ex}
\titlespacing{\subsubsection}{0mm}{2ex}{0.5em}

\providecommand{\abstitle}[1]{{\par\vspace*{2ex}\small\bfseries\sffamily #1}\hspace*{1ex}}
\renewenvironment{abstract}%
{\begin{center}\begin{minipage}{0.8\linewidth}%
			\setlength{\parindent}{0.0em}\abstitle{Abstract}\small}%
		{\end{minipage}\end{center}\vfill\clearpage}


\newtheorem{proposition}{Proposition}[section]


\DeclareMathOperator*{\hull}{co}

\providecommand{\Real}{{\mathds{R}}}
\providecommand{\Natural}{{\mathds{N}}}

\providecommand{\tr}{^{\prime}}
\providecommand{\as}{\ensuremath{\mathrm{a.s.}}}
\providecommand{\rand}[1]{\mathbf{#1}}
\providecommand{\rands}[1]{\boldsymbol{#1}}

\newcommand{\norm}[1]{\left\lVert#1\right\rVert}
\newcommand{\Prob}[1]{\mathds{P}\left(#1\right)}
\newcommand{\Exp}[1]{\mathds{E}\left[#1\right]}
\newcommand{\abs}[1]{\left\lvert#1\right\rvert}

\newcommand{\x}{x}\newcommand{\X}{X} 
\newcommand{\y}{y}\newcommand{\Y}{Y} 
\newcommand{\e}{e}\newcommand{\E}{E} 
\newcommand{\p}{p} \renewcommand{\P}{P} 
\newcommand{\yr}{\y_{-z}} 
\newcommand{\yu}{\y_{z}} 
\newcommand{\pu}{\p_{z}}   

  \theoremstyle{remark}
  \newtheorem{rem}{\protect\remarkname}
  \theoremstyle{plain}
  \newtheorem{lem}{\protect\lemmaname}
  \theoremstyle{definition}
  \newtheorem{defn}{\protect\definitionname}
\theoremstyle{plain}
\newtheorem{thm}{\protect\theoremname}

  \theoremstyle{plain}
  
 \theoremstyle{definition}
  \newtheorem{example}{\protect\examplename}
  \theoremstyle{plain}
  \newtheorem{assumption}{\protect\assumptionname}
\newtheorem{prop}{Proposition}


  \providecommand{\assumptionname}{Assumption}

  \providecommand{\definitionname}{Definition}
  \providecommand{\lemmaname}{Lemma}
  
  \providecommand{\remarkname}{Remark}
\providecommand{\corollaryname}{Corollary}
\providecommand{\theoremname}{Theorem}
\providecommand{\examplename}{Example}

\usepackage{xcolor}


\makeatother

\begin{document}
\title{Prices, Profits, Proxies, and Production \thanks{The ``\textcircled{r}'' symbol indicates that the authors' names are in certified random order, as described by \citet{ray2018certified}. An earlier version of this paper was circulated as ``Prices, Profits, and Production: Identification and Counterfactuals.''}}
\author{ 
	Victor H. Aguiar \textcircled{r}
	Nail Kashaev \textcircled{r}
	Roy Allen\thanks{Aguiar: Department of Economics, University of Western Ontario; \href{mailto:vaguiar@uwo.ca}{vaguiar@uwo.ca}. Kashaev: Department of Economics, University of Western Ontario; \href{mailto:nkashaev@uwo.ca}{nkashaev@uwo.ca}. Allen: Department of Economics, University of Western Ontario; \href{mailto:vaguiar@uwo.ca}{rallen46@uwo.ca}.}
	}
\date{First version: October 10, 2018\\ 
This version: June, 2022}
\maketitle
\begin{abstract}
This paper studies nonparametric identification and counterfactual bounds for heterogeneous firms that can be ranked in terms of productivity. Our approach works when quantities and prices are latent, rendering standard approaches inapplicable. Instead, we require observation of profits or other optimizing-values such as costs or revenues, and either prices or price proxies of flexibly chosen variables. We extend classical duality results for price-taking firms to a setup with discrete heterogeneity, endogeneity, and limited variation in possibly latent prices. Finally, we show that convergence results for nonparametric estimators may be directly converted to convergence results for production sets.
\bigskip

\noindent JEL classification: C5, D24.

\bigskip
\noindent Keywords: Counterfactual bounds, cost minimization, nonseparable heterogeneity, partial identification, profit maximization, production set, revenue maximization, shape restrictions.
\end{abstract}

\section*{Introduction}

This paper studies nonparametric identification of production sets and counterfactual bounds for firms, allowing multiple inputs and outputs, in an environment where both quantities and prices can be latent. 
We assume an analyst has data on the values of an optimization problem, such as profits, costs, or revenues, as well as prices or \textit{price proxies}.
\par
Identifying heterogeneous production sets is challenging in situations where the observability of some outputs/inputs or prices is problematic. For instance, in the housing market output quantities and output prices cannot be directly observed because houses provide different services that are hard to measure. However, housing values that can serve as price proxies may be observed \citep{epple2010new}. Other industries, such as health and banking, suffer from similar issues with unobservable inputs or outputs.\footnote{In the health industry,  it is difficult to measure inputs such as drugs since they vary widely in their physical characteristics. However, prices and total costs may be observable \citep{bilodeau2000hospital}. In the banking industry, outputs such as business loans and consumers loans are difficult to measure because a loan is a financial service that entails many unobservable goods and services. However, the price of a loan is observed as well as profits in some settings \citep{berger1993bank}. } The latency of quantities makes standard approaches to estimate production functions not directly applicable. In addition, the latency of prices makes classical approaches using duality theory impossible to apply as well.  In contrast, we require observability of values and prices or price proxies. While these variables are not always observed, they are available in many existing data sets.\footnote{See \citet{epple2010new,combes2017production}, and \citet{albouy2018housing} in the context of housing; \citet{burke2019sell} in the context of agriculture; \citet{nerlove63} and \citet{costelectricityKira2007} in the context of electricity generation; \citet{roberts1996output, foster2008reallocation}, and \citet{doraszelski2013r} in the context of manufacturing.} 
\par 
In order to obtain identification of firm-specific production possibility sets we exploit variation in prices or price proxies across markets and variation of optimization values across firms. Our framework extends classical duality theory by allowing (i) rich forms of complementarity and substitutability between outputs and inputs with discrete heterogeneity across firms, (ii) endogeneity between prices and productivity due to simultaneity and market entry decisions, and (iii) omitted prices of flexibly chosen variables.
Classical duality theory focuses on either a nonstochastic or representative agent framework in which all prices are observed. Important contributions include \citet{shephard1953,fussmcfadden2014production}, and \citet{diewert1982duality} among many others.  
\par 
We assume that firms can be ranked in terms of productivity that can take finitely many values. This assumption is key to unpack heterogeneity in multiple output/input production sets across firms from data such as prices or price proxies and scalar values of an optimization problem. We formalize this by assuming that a firm with higher productivity has access to all the production possibilities of a less productive firm, and more. Our framework covers Hicks-neutral heterogeneity in productivity as a special case. 
\par
Our approach exploits the rich shape constraints in our environment for identification and counterfactual analysis. Leveraging that firms can be ranked according to discrete productivity, we present a new method to identify the structural value function (e.g. profit function). This technique works with bounded measurement error, but allows rich forms of selection into market. We require a weak monotone presence assumption, so that if a firm is present in some market with certain observables, then each more productive firm must be present in some market with the same observables. This handles certain monotone selection rules, e.g. only firms that can make nonnegative profits enter, but is much more general.

We next tackle the important possibility that not all prices are observed. Instead, we use \textit{price proxies}, which are unknown functions of the missing prices. As one example, we show that aggregate market-level quantities can serve as price proxies. We leverage homogeneity of the value function to recover these unknown functions. This technique is new, and is applicable to other settings with homogeneity of a structural function, and is therefore of independent interest.

Once the structural value function is identified, we turn to recoverability of the production sets. Here we leverage the classic insight that the value function serves as the support function of the production set. This allows us to characterize the most that can be said about heterogeneous production sets, even when price variation is limited. Building on this, we present a general framework for counterfactual questions such as sharp bounds on quantities or profits at a new price. Importantly, these bounds hold for each level of productivity, and thus characterize features of the distribution of firm behavior.
\par 
As mentioned previously, relative to classic work on duality we make several contributions by incorporating heterogeneity, endogeneity due to selection, and potential lack of prices.\footnote{Outside of the firm problem, duality has been used in the presence of heterogeneity in discrete choice \citep{mcfadden1981econometric}, matching models \citep{galichon2015cupid}, hedonic models \citep{chernozhukov2017single}, dynamic discrete choice \citep{chiong2016duality}, and the additively separable framework of \citet{allen2019identification}.} Even when prices are observed but contain limited variation, we contribute by providing new results using structural value functions to recover sets and conduct counterfactual analysis. This builds on \citet{farrell1957measurement} and \citet{afriat1972efficiency}, who study efficiency measurement and conditions under which producer datasets are consistent with the hypothesis of optimization. Relatedly, \citet{hanoch1972testing} focuses on finite deterministic datasets of individual firms' profits or costs, and prices. \citet{hanoch1972testing} does not study identification of the production set or the profit function, but focuses on providing necessary and sufficient conditions under which an observed production function is consistent with profit maximization or cost minimization.\footnote{\citet{CHERCHYE2016100} studies the identification of profits and production sets with a finite deterministic dataset on prices and quantities.}   Another paper studying limited price variation is \citet{varian1984nonparametric}, which works with quantities and prices and does not study unobservable heterogeneity.\footnote{See also \citet{cherchye2014non} and \citet{cherchyedemuynck2018}. \citet{cherchyedemuynck2018} differs from this paper because they assume observed input quantities in the context of cost minimization.} While observation of prices and quantities implies observation of profits, the reverse is not true.
\par
This paper contributes to the recent literature on identification and estimation of multi-output production with unobservable heterogeneity (e.g., \citealp{cunha2010estimating}, \citealp{de2016prices}, and \citealp{grieco2016productivity}). We differ since we do not observe quantities and we do not impose separability or parametric restrictions on the shape of production sets.
Because we allow production of multiple outputs in flexible ways, use cross-sectional variation, and do not observe quantities, we also differ from an important recent literature studying single output production in dynamic panel settings using quantities data, including \citet{griliches1995production,olley1996dynamics,levinsohn2003estimating,ackerberg2015identification}, and \citet{gandhi2017identification}.\footnote{As noted in \citet{ackerberg2015identification}, some output and input data often come in the form of sales and expenditures that need to be transformed into quantities. We work directly with total values (e.g. profits, total costs, or revenues).}
\par
We also contribute to the literature studying recoverability of sets. We build on the tight relationship between the structural value function and the production possibility sets of firms, by providing an equality relating estimation error of value functions and estimation error of production possibility sets. This result allows one to adapt consistency results for any nonparametric estimators of the value function for the purpose of set estimation. The result is related to a classical result in convex analysis linking the distance of support functions with the distance of the corresponding sets, which has been exploited previously in the literature on partial identification.\footnote{See, for instance, \citet{beresteanu2008asymptotic,beresteanu2011sharp,kaido2014asymptotically,kaido2016dual}, and \citet{kaido2019confidence}.} 
We cannot apply the classical result since it would require seeing negative prices.
\par
The rest of this paper proceeds as follows. In Section~\ref{sec:model}, we present a model of heterogeneous production in which firms are rankable in terms of productivity. Section~\ref{sec: restricted prof func} shows how to identify the structural value function. In Section~\ref{sec:homog}, we extend our methodology to environments where one observes proxies that determine unobservable prices. Our main identification result for production possibility sets is in Section~\ref{sec: identification general}. Section~\ref{sec:sharpbounds} provides a general framework to conduct sharp counterfactual analysis in production environments. In Section~\ref{sec:estimation}, we show duality between estimation error in value functions and production sets. We conclude in Section~\ref{sec:conclusion}. All proofs can be found in Appendix~\ref{app: proofs}. An estimator of the restricted profit function and an illustrative application are in Appendices~\ref{app: estimation and simulation} and~\ref{app: application}. The Online Appendix contains extensions, simulations, and additional results.

\section{Setup}\label{sec:model}
This paper studies recoverability of the technology of heterogeneous firms given data on the value function of their maximization problems, as well as data on prices or price proxies that alter the maximization problems. 

The technology of heterogeneous firms is described by a correspondence $\Y:\E\rightrightarrows \Real^{d_{\y}}$. Each set $\Y(\e)$ describes the possible input/output (or ``netput'') vectors that are feasible for a firm of type $\e$. The variable $\e$ captures unobservable heterogeneity in productivity. Negative components of $\Y(\e)$ correspond to net demands by the firm and positive components correspond to net supply. This formulation allows us to treat single output and multi-output firms in a common framework.\footnote{An alternative approach is to use transformation functions. See \citet{grieco2016productivity} for a recent application.} We require the following conditions.
\begin{defn} \label{defn:production correspondence}
A correspondence $\Y:\E\rightrightarrows \Real^{d_{\y}}$ is a production correspondence if, for every $\e\in\E$,
\begin{enumerate}
\item $\Y(\e)$ is closed and convex;
\item $\Y(\e)$ satisfies \emph{free disposal}: if $\y$ is in $\Y(\e)$, then any $\y^*$ such that $\y^*_j\leq \y_j$ for all $j\in \{1,\cdots,d_\y\}$ is also in $\Y(\e)$;
\item $\Y(\e)$ satisfies \emph{the recession cone property}: if $\{\y^m\}$ is a sequence of points in $\Y(\e)$ satisfying $\norm{\y^m} \to \infty$ as $m\to\infty$, then accumulation points of the set $\{ \y^m / \norm{\y^m} \}_{m = 1}^{\infty}$ lie in the negative orthant of $\Real^{d_\y}$.
\end{enumerate}
\end{defn}
These conditions rule out infinite profits and ensure that the maximization problems we consider have a solution.\footnote{See \citet{kreps2012}, p. 199 for more details.}

We study the general \textit{restricted profit maximization problem}
\[
\pi_r(\yr,\p_z,\e)=\max_{\yu : (\yr, \yu) \in \Y(\e)} \p_z \tr \yu\,,
\]
where $\yr$ is a vector of restricted or fixed variables, $\yu$ denotes the variables of choice, and $p_z$ is a vector of prices of $\yu$. The variable of choice $\yu$ is constrained to belong to the convex set $\Y_r(\yr,\e)$ defined as

\[
    \Y_r(\yr,\e)=\left\{\yu\in\Real^{d_{\yu}}\::\:(\yr,\yu)\in\Y(\e)\right\}\,.
\]
We refer to $\Y_r(\yr,\cdot)$ as the restricted production correspondence.\footnote{More formally, it is only a multi-valued mapping because it can be empty for certain combinations of $\yr$ and $\e$. We note that the results in this paper do not need the full strength of $Y(\cdot)$ being a production correspondence. Instead, we require that the set $\Y_r(\yr,\e)$ be closed and convex, satisfy free disposal, and satisfy the recession cone property.} 

The behavioral restriction of this model is that given $\yr$, the firm chooses $\yu$ to maximize restricted profits, taking prices $\pu$ as given. In the special case where $\yr$ is not present, this is the usual profit maximization setup. When $\yr$ consists of inputs, this covers revenue maximization. When $\yr$ consists of outputs, this is cost minimization once we interpret negative $\yu$ as inputs and write
\[
\max_{\yu :\: (\yr, \yu) \in \Y(\e)} \p_z \tr \yu = - \min_{\yu:\: (\yr, \yu) \in \Y(\e)} \p_z \tr (-\yu).
\]
We emphasize that throughout, $\yr$ can be a vector, and so we cover cost minimization with multiple inputs, and revenue maximization with multiple outputs.

Overall, we consider firms that are price-taking in the variables of choice $\yu$, and study a static problem without uncertainty. We note though that in principle the production set $\Y(\e)$ is general enough to describe paths of production possibilities throughout time, as would arise if there is investment.

\subsection{Setting and Data}\label{subsec: setting and data}
We study identification in settings in which an analyst observes many realizations of certain values of the restricted profit maximization problem as prices vary. In the most general version, we observe noisy measurements of restricted profits, which are the values of the restricted problem. Specifically, we consider the setup
\[
\rands{\pi}_r = \pi_r (\rand{\y}_{-z}, \rand{\p}_{z}, \rand{\e})+\rands{\eta}\:\as,
\]
where $\rands{\pi}_r$ and $\rand{\y}_{-z}$ are observed,\footnote{We use bold font for random variables and random vectors and regular font for their realizations.} $\rands{\eta}$ is unobserved measurement error, and $\rand{\e}$ is unobservable productivity level. For each component of $\rand{\p}_{z}$, the analyst either observes the corresponding price, or more generally observes a price proxy $\rand{x}_j$ that is linked to the unobserved price by the relationship $\rand{p}_{z,j} = g_j(\rand{x}_j, \tilde{\rand{x}})$, where $\tilde{\rand{x}}$ consists of some control variables. We provide further examples and discussion of such proxies in Section~\ref{sec:homog}.
\par
As an example of observables for cost minimization of hospitals \citep{bilodeau2000hospital}, the analyst observes total cost on variable inputs $\rand{\y}_{z}$ (labor, supplies, food for patients, drugs, and energy), input prices or input-price proxies, fixed outputs (inpatient care and outpatient visits), and the fixed inputs (number of physicians and capital). We emphasize that we do not need to observe the quantities $\rand{\y}_{z}$ of the flexibly chosen variables.\footnote{As discussed in the introduction, for additional data sets, see \citet{nerlove63,roberts1996output,costelectricityKira2007,foster2008reallocation,epple2010new, doraszelski2013r,albouy2018housing,burke2019sell}, and \citet{combes2017production}.}

Now we turn to the description of the sources of variation in our setup. Although we do not fully flesh out an equilibrium model incorporating selection, we provide an informal discussion of these forces. First, prices can vary across markets due to variation in endowments or the income or tastes of consumers. Our results apply when an analyst observes a single firm from each market, and has observations from many markets. Our results also apply when an analyst observes multiple firms in each market. We focus on the former case to simplify presentation, so that we can avoid market-level subscripts.

\section{Recoverability of Restricted Profit Function}\label{sec: restricted prof func}
Our ultimate goal is to learn about the production correspondence. We proceed in three steps. In this section, we first identify the restricted profit function (or value function) for heterogeneous firms assuming that the prices are perfectly observed. In Section~\ref{sec:homog} we show how to apply our analysis to the general case with unobserved prices. 
In subsequent sections we show how to use information on the restricted profit function to recover features of the production correspondence and describe the most that can be learned concerning counterfactual questions.

Identifying the restricted profit function for heterogeneous firms is challenging. The value function is nonseparable in latent productivity. Both the restricted variables $\rand{\y}_{-z}$ and prices $\rand{\p}_{z}$ may be endogenous. This leads to simultaneity and selection biases. We consider a setting without panel data or instruments. We present a new technique to identify the restricted profit function that addresses these challenges. The key restrictions of the technique are that (i) heterogeneity is one dimensional and allows us to rank firms, and (ii) there are finitely many types of firms. 

\subsection{Production Monotonicity}\label{subsec: monotonicity and independence}
It is well-known that the firm problem admits a representative agent, and in principle this observation can be used to recover a \textit{representative agent} restricted profit function. Even a representative agent analysis here is nontrivial because of challenging selection/simultaneity issues discussed previously. Here, we wish to recover not only a representative agent restricted profit function, but also recover the heterogeneous structural restricted profit functions. Recovering heterogeneous structural functions allows us to a conduct rich counterfactual analysis concerning how different types of firms are differentially affected by a policy. 
\par
To get traction on this problem, we assume firms are rankable in terms of productivity. We think of heterogeneous productivity as an ability to produce more with a given level of inputs (or produce the same output using lower levels of inputs). In other words, the production set of a firm with lower value of productivity is a subset of the production set of a firm with a higher productivity (see Figure~\ref{fig:nested sets}). Note that $\Y_r(\yr,\e)\subseteq\Y_r(\yr,\tilde\e)$ if and only if $\pi_r(\yr,\pu,\e)\leq \pi_r(\yr,\pu,\tilde\e)$ for all $\pu$. This means that more productive firms have access to a bigger set of production possibilities, and will make more profits or pay lower costs given prices. We formalize this monotonicity by the following ranking assumption on the restricted profit function.
\begin{figure}
\begin{center}
\begin{tikzpicture}[scale=0.6]
\draw[thick,->] (-2,0) -- (6,0) node[anchor=north west] {$\y_i$};
\draw[thick,->] (5,-1) -- (5,4) node[anchor=south east] {$\y_o$};
\draw[-,thick] (5,0) to [out=110,in=-5] (-1,4) to [out=-5,in=-1] (-2,4.1);
\draw[-,thick] (5,0) to [out=100,in=-5] (-1,5) to [out=-5,in=-1] (-2,5.1);
\draw[thick,->] (-1,4) -- (-1,4.9);
\draw[thick,->] (1,3.6) -- (1,4.4);
\node at (-3,5.1) {$Y(\tilde\e)$};
\node at (-3,4.1) {$Y(\e)$};
\end{tikzpicture}
\end{center}
\caption{Nested Production Sets. $\tilde\e>\e$. \label{fig:nested sets}}
\end{figure}
\begin{assumption}[Strict Monotonicity] \label{assm:strict monot}
For every $\yr$, $\pu$, $\e$, and $\tilde\e$ in the support, if $\e<\tilde\e$, then $\pi_r(\yr,\pu,\e)<\pi_r(\yr,\pu,\tilde\e)$.
\end{assumption}
Strict monotonicity of structural functions has been considered previously in e.g. \citet{matzkin2003nonparametric}. Assumption~\ref{assm:strict monot} is satisfied in many settings. For instance, it is satisfied in a standard single output production function setting with Hicks-neutral productivity. To be more specific, let the single output be $y_o$ and let inputs be $l$ and $k$, interpreted as labor and capital. Then the set $\Y(\e)$ is described by tuples $(y_o, -l, -k)$ that satisfy $y_o \leq f(l, k, \e)$, where $f$ is the production function. If $f(l,k,\e) = A(e) \bar{f}(l,k)$ for some nonnegative, strictly increasing function $A$, and $\bar{f}$ is a nonnegative strictly convex function, then $f(l, k, \e)$ is strictly increasing in $\e$. In this case, $\pi(\p,\cdot)$ satisfies Assumption~\ref{assm:strict monot}.

More generally, the function $f(l, k, \e) = A_o(\e) \bar{f}(A_l(\e) l, A_k(\e) k)$ for strictly increasing functions $A_o$, $A_l$, and $A_k$ fits into our setup.\footnote{\citet{li2017constructive} study a related setup with random coefficients Cobb-Douglas technology, imposing that the ratio of random coefficients is a monotone function of a single latent scalar random variable.} A more general setup would allow a different shock to enter $A_o, A_l$, and $A_k$ (e.g. \citealp{doraszelski2018measuring}) and would be outside of our framework. Overall, while Hicks-neutral heterogeneity is a special case of our framework when there is a \textit{single} output, it is considerably more restrictive than needed for the monotonicity assumption to hold.

The assumption that production sets are nested in $\e$ is equivalent to the profit function being weakly increasing in $\e$. Thus, value functions are the ``right'' structural function in which to impose monotonicity if we think of higher productivity as leading to more production possibilities. One may draw the intuition that in general other structural functions are monotone in unobservable heterogeneity. This intuition is false without more structure.

\begin{example}[Nonmonotonicity of Inputs/Outputs] \label{ex:nonmonotone-supply}
Consider the production sets depicted in Figure~\ref{fig:nonmonotonicsupply}. Each production set is given by $Y(e_i)=\{(y_o,l)\tr\in\Real\times\Real_{+}\::\:y_o\leq f(l, e_i)\}$, where  $f(l, e_1)<f(l, e_2)<f(l, e_3)$ for all $l>0$. Here, $\pi(p,e_1)<\pi(p,e_2)<\pi(p,e_3)$ for all positive $p$ and Assumption~\ref{assm:strict monot} is satisfied. Given the price vector $p=(p_o,p_k)\tr$ in Figure~\ref{fig:nonmonotonicsupply}, the optimal levels of inputs and outputs are nonmonotone in productivity since $l^*(p,e_1)< l^*(p,e_3)< l^*(p,e_2)$ and $y^*_{o}(p,e_1)< y^*_{o}(p,e_3)< y^*_{o}(p,e_2)$. For a numerical example see Online Appendix~C.
\begin{figure}
	\begin{center}
		\begin{tikzpicture}[scale=.7]
		\draw[-,thick] (5,-2) to [out=110,in=-40] (4,-.6) to [out=150,in=-1] (-2,1);
		\draw[-,thick] (5,-2) to [out=100,in=-38] (1.5,2.2) to [out=150,in=-1] (-2,3.2);
		\draw[-,thick] (5,-2) to [out=95,in=-40] (3.5,1.75) to [out=150,in=-1] (-2,3.5);
		\draw[dashed] (6,-2) -- (-1,3);
		\draw[dashed] (6,-.9) -- (-1,4);
		\draw[dashed] (6,0) -- (-1,5);
		\draw [fill=black] (4,-.6) circle[radius=.1];
		\draw [fill=black] (1.5,2.2) circle[radius=.1];
		\draw [fill=black] (3.5,1.75) circle[radius=.1];
		\node at (-3,1) {$e_1$};
		\node at (-3,3.1) {$e_2$};
		\node at (-3,3.6) {$e_3$};
		\node at (7.5,-2) {$\pi(p,e_1)$};
		\node at (7.5,-0.9) {$\pi(p,e_2)$};
		\node at (7.5,0) {$\pi(p,e_3)$};
		\end{tikzpicture}
	\end{center}
	\caption{Nonmonotonic supply. \label{fig:nonmonotonicsupply}}
\end{figure}
\end{example}
Failures of monotonicity in the optimal choice of input or output have been discussed as well in \citet[Section 4]{pakes1996dynamic}. Thus, rather than focus on the structural functions describing optimal input/output choices, this paper focuses instead on the restricted profit function, which \textit{is} monotone in a scalar unobservable under the assumption that production sets are nested in $\e$.

\subsection{Discrete Heterogeneity and Monotone Selection}
With this setup, we consider a new technique to identify the restricted profit function allowing endogeneity.
The reason endogeneity is a central concern in such problems is that constraints may be endogenous. For example, in the cost minimization problem, output ($\rand{\y}_{-z}=\rand{\y}_o$) is typically a choice variable for the firm. An additional endogeneity concern is that firms may choose in which markets to operate. This can induce a selection issue, though we emphasize that once a market is chosen, the input/output vector is determined taking market prices as fixed. As discussed in Section~\ref{subsec: setting and data}, price variation in our setting arises because firms operate in different markets, which have different endowments or consumer tastes.
\par
The key restriction we impose is that there are finitely many types of firms. We formalize this as follows.
\begin{assumption}[Finite Heterogeneity]\label{assm:disc heter}
$E=\{1,2,\dots,d_{e}\}$, where $d_{e}$ is finite and unknown to the researcher.
\end{assumption}
This assumption allows us to identify structural functions without instruments. If instruments are available, continuous heterogeneity can be tackled by existing techniques \textit{provided} there is no measurement error; see for example Online Appendix~B. We emphasize that heterogeneity here is in terms of the production types, but due to measurement error in the data we may see continuous distributions of the restricted values, even when we condition on all other observables. In this modeling decision we are close to structural dynamic discrete choice literature that often assumes unobserved discrete heterogeneity that is smoothed out by some continuous idiosyncratic noise (e.g. extreme value distributed preference shock). See, for instance, \citet{arcidiacono2011conditional}.\footnote{For applications of discrete unobserved heterogeneity, see \citet{fox2016nonparametric} in multinomial choice and \citet{bonhomme2015grouped} with panel data.} We are not aware of any identification results that allow for both measurement error and continuous nonseparable structural unobserved heterogeneity in cross-sectional data.

We allow rich selection into markets, but impose a monotonicity restriction relating the types of firms that can be present, conditional on certain observables.
\begin{assumption}[Monotone Presence] \label{assm: monotone selection}
\[
\Prob{\rand{\e}=\e|\rand{\y}_{-z}=\yr,\rand{\p}_z=\pu}>0\implies\Prob{\rand{\e}=\tilde\e|\rand{\y}_{-z}=\yr,\rand{\p}_z=\pu}>0
\]
for all $\yr$, $\pu$, $\e$, and $\tilde\e$ in the support such that $\e<\tilde\e$.
\end{assumption}
This means that if we see a firm of type $\e$ active in \textit{some} market and producing $y_{-z}$, conditional on $\rand{\p}_z=p_z$, then for any higher productivity $\tilde\e$, there is some market in which the higher type is active at the same value of conditioning variables. In principle, this other ``market'' could be the same market in which the firm with productivity $\e$ is present. The key restriction is that since we also condition on quantities, we need the higher type to \textit{also} produce the same quantities.

As an example, consider  the (unrestricted) profit function. Suppose entry depends on whether a firm obtains nonnegative profits. Specifically,
\[
\e \text{ enters } \iff \pi(\p,\e) \geq 0,
\]
where there are no restricted variables. Since we assume monotonicity of $\pi$ in $\e$, this is a monotone threshold rule, and satisfies Assumption~\ref{assm: monotone selection}.

Assumption~\ref{assm: monotone selection} is considerably more general than a one-sided selection rule. Importantly, it is only about the \textit{support} of $\rand{\e}$ conditional on some other variables. The reason we require this is that while reasonable selection rules into \textit{markets} may result in a one-sided threshold rule, here we also need to allow selection into the quantities of the restricted variables $y_{-z}$. For example, as $\e$ increases the optimal quantity of the restricted variables may change. Assumption~\ref{assm: monotone selection} allows this and is satisfied if, for example, there are other unobserved variables that shift the optimal choice of restricted variables $y_{-z}$ (e.g. unobserved prices of the restricted variables).

\subsection{Identification}
We now turn to identification of the restricted profit function. First, recall that we observe potentially mismeasured restricted profits:
\[
\rands{\pi}_r = \pi_r (\rand{\y}_{-z}, \rand{\p}_{z}, \rand{\e})+\rands{\eta}.
\]
If $\rands{\eta}$ is independent of  $(\rand{\y}_{-z}, \rand{\p}_z, \rand{\e})$, then Assumption~\ref{assm:disc heter} implies that the conditional distribution of $\rands{\pi}_r$ can be written as a finite mixture of shifted distributions of $\rands{\eta}$:
\begin{align*}
    F_{\rands{\pi}_r|\rand{\y}_{-z}, \rand{\p}_z}(\cdot|y_{-z},p_z)=\sum_{e\in E}F_{\rands{\eta}}(\cdot-\pi_r(y_{-z},p_z,e))\Prob{\rand{e}=e|\rand{y}_{-z}=y_{-z},\rand{p}_z=p_z},
\end{align*}
where $F_{\rands{\pi}_r|\rand{\y}_{-z}, \rand{\p}_z}(\cdot|y_{-z},p_z)$ is the conditional cumulative distribution function (c.d.f.) of $\rands{\pi}_r$ conditional on $\rand{y}_{-z}=y_{-z}$ and $\rand{p}_z=p_z$, and $F_{\rands{\eta}}$ is the c.d.f. of $\rands{\eta}$. There are numerous ways to identify the above finite mixture model under different sets of assumptions that may be valid in different environments (see, for instance, \citealp{kitamura2018nonparametric} and references therein). However, most of these results use either repeated measurements (i.e. panels) or use variation in conditioning variables, and require some form of exclusion restrictions (e.g., some conditioning variables affect $\pi_r(y_{-z},p_z,e)$ but do not affect $\Prob{\rand{e}=e|\rand{y}_{-z}=y_{-z},\rand{p}_z=p_z}$), or the presence of instruments. We propose a new set of assumptions to identify the above finite mixture in cross-sections, without instruments and exclusion restrictions. Moreover, our approach is constructive and the assumptions are easy to interpret. 

Let $\Delta\pi_r(\yr,\pu,\e) = \pi_r(\yr,\pu,\e)-\pi_r(\yr,\pu,\e-1)$ denote the restricted profit difference between firms with adjacent productivity. We impose the following assumption on the measurement error $\rands{\eta}$.
\begin{assumption}\label{assm: measurement error}
\begin{enumerate}
    \item $\rands{\eta}$ is independent of $(\rand{\y}_{-z}, \rand{\p}_z, \rand{\e})$, mean zero, has connected support, and satisfies $\Prob{\abs{\rands{\eta}}\leq K/2}=1$ for some $K<\infty$;
    \item (Separatedness) There exists $(\yr^*,\pu^*,\e^*)$ in their support such that
    \[
    K<\begin{cases}
    \Delta\pi_r(\yr^*,\pu^*,\e^*+1),&\text{ if } \e^*=1,\\
    \Delta\pi_r(\yr^*,\pu^*,\e^*),&\text{ if } \e^*=d_{\e},\\
    \min\left\{\Delta\pi_r(\yr^*,\pu^*,\e^*+1),\:\Delta\pi_r(\yr^*,\pu^*,\e^*)\right\},&\text{ otherwise. }
    \end{cases}
    \]
\end{enumerate}
\end{assumption}
We note that multiplicative measurement error can be handled by similar independence and separatedness assumptions.\footnote{The bounded support and separatedness conditions in Assumption~\ref{assm: measurement error} can be relaxed using results in \citet{schennach2016recent} if one has access to repeated cross-sections.}

Assumption~\ref{assm: measurement error}(i) means that the measurement error is classical. It also imposes a location normalization on the boundedly-supported measurement error.\footnote{For examples of papers studying boundedly-supported measurement errors see \citet{hu2010deconvolution,d2010identification}, and \citet{hu2017injectivity}.} The bounded support assumption is empirically relevant in many settings. For instance, revenues and costs cannot be negative, which provides a one-sided bound. Assumption~\ref{assm: measurement error}(ii) is more substantial. This assumption imposes that the gap between the structural profits of the types adjacent to $e^*$ must be sufficiently small compared with the support of measurement error. This can be restrictive in certain empirical settings but is essential for this method. We argue that boundedness and separatedness are appropriate in our empirical illustration in Appendix~\ref{app: application}.

Note that Assumption~\ref{assm: measurement error}(ii) has to be imposed on one triplet $(\yr^*,\pu^*,\e^*)$ only. Thus, in general, the measurement error may completely change the ranking of restricted profits. Moreover, this triplet does not need to be known. A simple sufficient condition for Assumption~\ref{assm: measurement error}(ii) that uses shape restrictions of the restricted profit function is stated in the following result.
\begin{lem}[Rich Support]\label{lemma: rich support}
If Assumption~\ref{assm:strict monot} holds and there exist $\yr^*$ and $\pu^*$ such that $\cup_{\lambda\geq 1}\{\lambda\pu^*\}$ is in the support of $\rand{\p}_{z}$ conditional on $\rand{\y}_{-z}=\yr^*$, then Assumption~\ref{assm: measurement error}(ii) is satisfied.
\end{lem}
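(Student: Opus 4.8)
The plan is to exploit homogeneity of degree one of the restricted profit function in prices. Recall that for price-taking behavior, $\pi_r(\yr, \cdot, \e)$ is positively homogeneous of degree one in $\pu$: for any $\lambda > 0$, $\pi_r(\yr, \lambda\pu, \e) = \lambda\,\pi_r(\yr, \pu, \e)$. This is immediate from the definition, since scaling the price vector $\pu$ by $\lambda > 0$ scales the objective $\p_z\tr\yu$ by $\lambda$ without changing the feasible set $\Y_r(\yr, \e)$, hence scales the maximized value by $\lambda$.

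Given this, the key observation is that the profit difference $\Delta\pi_r(\yr^*, \pu, \e)$ inherits homogeneity of degree one in $\pu$, being a difference of two such functions. So for any $\lambda > 0$, $\Delta\pi_r(\yr^*, \lambda\pu^*, \e) = \lambda\,\Delta\pi_r(\yr^*, \pu^*, \e)$. First I would fix any $\e^*$ in the support of $\rand{\e}$ conditional on $\rand{\y}_{-z} = \yr^*$ (or, since the hypothesis gives an entire ray, pick $\e^*$ with the freedom to later also require $\e^*-1$ and $\e^*+1$ are in the relevant conditional support; this is where the ``rich support'' hypothesis that the whole ray $\cup_{\lambda>0}\{\lambda\pu^*\}$ lies in the support conditional on $\yr^*$ does work, together with Monotone Presence to guarantee adjacent types are also present). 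By Assumption~\ref{assm:strict monot}, each relevant $\Delta\pi_r(\yr^*, \pu^*, \cdot)$ evaluated at the adjacent pair is strictly positive. Then choosing $\lambda$ large enough makes $\lambda\,\Delta\pi_r(\yr^*, \pu^*, \cdot) > K$, and the point $\pu^\dagger := \lambda\pu^*$ lies in the support of $\rand{\p}_z$ conditional on $\rand{\y}_{-z} = \yr^*$ precisely because the full ray does. Taking $(\yr^*, \pu^\dagger, \e^*)$ as the required triplet then verifies the three-case inequality in Assumption~\ref{assm: measurement error}(ii).

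The steps, in order: (1) state and justify homogeneity of degree one of $\pi_r(\yr, \cdot, \e)$ in $\pu$; (2) deduce homogeneity of degree one of $\Delta\pi_r(\yr^*, \cdot, \e)$ in $\pu$; (3) use Assumption~\ref{assm:strict monot} to get strict positivity of the relevant profit difference(s) at $(\yr^*, \pu^*, \e^*)$ (handling the three cases $\e^* = 1$, $\e^* = d_\e$, and otherwise — in the middle case take the minimum over the two finitely many positive values); (4) pick $\lambda > 0$ large enough that $\lambda$ times this strictly positive quantity exceeds $K$; (5) invoke the ray-support hypothesis to conclude $\lambda\pu^*$ is in the conditional support, so $(\yr^*, \lambda\pu^*, \e^*)$ is the desired triplet.

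The main obstacle — really a bookkeeping subtlety rather than a deep difficulty — is making sure the chosen $\e^*$ and its neighbors are genuinely in the conditional support so that Assumption~\ref{assm: measurement error}(ii) is about quantities that exist and so that Assumption~\ref{assm:strict monot} can be applied to them; one must be a little careful about which of $\e^*-1$, $\e^*$, $\e^*+1$ the three-case statement requires, and pick $\e^*$ (e.g. any interior type, or if $d_\e = 1$ then the problem is degenerate) so that the needed adjacent differences are both well-defined and strictly positive. Once the support is pinned down, the rest is a one-line scaling argument off the homogeneity of $\pi_r$ in prices.
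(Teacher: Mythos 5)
Your proposal is correct and follows essentially the same route as the paper's proof: both exploit homogeneity of degree one of $\pi_r(\yr^*,\cdot,\e)$ in prices to scale the strictly positive profit differences $\Delta\pi_r(\yr^*,\pu^*,\cdot)$ (guaranteed by Assumption~\ref{assm:strict monot}) by a large enough $\lambda$, and then use the ray-support hypothesis to place $\lambda\pu^*$ in the conditional support. Your additional care about which adjacent types lie in the conditional support is a reasonable bookkeeping refinement that the paper glosses over, but it does not change the argument.
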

This exploits homogeneity in prices, i.e. $\pi_r(\yr^*,\lambda\pu^*,\e)=\lambda\pi_r(\yr^*,\pu^*,\e)$ for all $\e$ and $\lambda>0$. The idea behind Lemma~\ref{lemma: rich support} is that although the difference between profits evaluated at a particular price may not be big enough to offset the effect of the measurement error (e.g. $\Delta\pi_r(\yr^*,\pu^*,\e^*+1)\leq K$), by exploiting homogeneity we always can find $\lambda^*$ big enough such that 
\[
\Delta\pi_r(\yr^*,\lambda^*\pu^*,\e^*+1)=\lambda^*\Delta\pi_r(\yr^*,\pu^*,\e^*+1)>K.
\]
The conditions of Lemma~\ref{lemma: rich support} guarantee that an extreme price $\lambda^*\pu^*$ can be found in the support for every finite $K$. Thus, the support of prices does not have to be unbounded, just sufficiently large relative to the initial difference. 

Now we can state our main identification result for the restricted profit function.
\begin{thm}\label{prop: profit func discret I}
Suppose Assumptions~\ref{assm:strict monot}-\ref{assm: measurement error} hold. Then using $F_{\rands{\pi}_r|\rand{y}_{-z},\rand{p}_{z}}$, $\pi_r$ is identified over the joint support of $\rand{y}_{-z}$, $\rand{\p}_{z}$, and $\rand{\e}$.
\end{thm}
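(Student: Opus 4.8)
The plan is to identify $\pi_r$ by first pinning down the \emph{number} of types $d_e$ and the conditional distribution of $\rand{\e}$ given $(\rand{\y}_{-z},\rand{\p}_z)$ at a base point, then propagating this knowledge using monotonicity and homogeneity. Fix $(\yr,\pu)$ in the support. By Assumption~\ref{assm: measurement error}(i), the observed distribution $F_{\rands{\pi}_r\mid\rand{\y}_{-z}=\yr,\rand{\p}_z=\pu}$ is a finite mixture of translates of the law of $\rands{\eta}$, with mixture locations $\{\pi_r(\yr,\pu,\e):\e\in\supp(\rand{\e}\mid\yr,\pu)\}$ and weights $\Prob{\rand{\e}=\e\mid\yr,\pu}$. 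The difficulty is that two adjacent locations may be closer than $K$, so the mixture is not directly deconvolvable at an arbitrary point. Here is where the \emph{separatedness} point $(\yr^*,\pu^*,\e^*)$ from Assumption~\ref{assm: measurement error}(ii) enters: at that point, the relevant adjacent gap exceeds $K$, so the supports of the corresponding translated error laws are disjoint, and I can read off $\pi_r(\yr^*,\pu^*,\e^*)$ (and its neighbor) as, e.g., the right endpoint of one connected component of $\supp(\rands{\pi}_r\mid\yr^*,\pu^*)$ shifted back by $K/2$ using the location normalization $\Exp{\rands{\eta}}=0$ together with the known bound.

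Next I would use \emph{homogeneity in prices}, $\pi_r(\yr,\lambda\pu,\e)=\lambda\pi_r(\yr,\pu,\e)$, exactly as in the discussion following Lemma~\ref{lemma: rich support}: rescaling $\pu$ by large $\lambda$ blows up \emph{all} adjacent profit gaps at $\yr$ past $K$ (using strict monotonicity, Assumption~\ref{assm:strict monot}, to know the gaps are positive), so along the ray $\{\lambda\pu:\lambda>0\}$ the mixture becomes fully separated for $\lambda$ large. This lets me (a) determine $|\supp(\rand{\e}\mid\yr,\pu)|$ as the number of connected components of the support of $\rands{\pi}_r$ along that ray for large $\lambda$, and (b) recover each $\pi_r(\yr,\lambda\pu,\e)$, hence each $\pi_r(\yr,\pu,\e)$ by dividing by $\lambda$. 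The global type count $d_e$ is then $\max$ over base points of these local counts — but to justify this I need to know the supports ``line up'' across base points, which is precisely what Monotone Presence (Assumption~\ref{assm: monotone selection}) delivers: it guarantees that the conditional support of $\rand{\e}$ is an initial segment $\{1,\dots,k(\yr,\pu)\}$ is \emph{not} quite what it says — rather it says the support is ``upward closed from its minimum,'' so combined with strict monotonicity I can match the $j$-th smallest location at one base point with the $j$-th type consistently, and identify which types appear where.

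The remaining step is to extend identification from the rays through separatedness-friendly base points to the \emph{entire} joint support. For a general $(\yr,\pu,\e)$ in the support: by Monotone Presence the types present at $(\yr,\pu)$ form an up-set, and by the ray argument at $(\yr,\pu)$ I recover $\pi_r(\yr,\lambda\pu,\e')$ for every $\e'$ present there and every $\lambda$ large, hence $\pi_r(\yr,\pu,\e')$ for all such $\e'$; since $\e$ itself is present at $(\yr,\pu)$, this includes $\pi_r(\yr,\pu,\e)$. The labeling of which recovered value is ``$\pi_r(\yr,\pu,\e)$'' for the intended $\e$ is fixed by the consistent matching established above (via the known triplet $(\yr^*,\pu^*,\e^*)$ as an anchor and monotone interlacing of supports across overlapping base points). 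I expect the \textbf{main obstacle} to be exactly this bookkeeping — showing that the per-base-point orderings of locations glue into a single global productivity index $\{1,\dots,d_e\}$, using only Monotone Presence and strict monotonicity and no richness of variation in $\yr$ — together with carefully arguing that along a ray ``large $\lambda$'' makes \emph{all} gaps (not just the anchored one) exceed $K$, which needs the gaps bounded away from zero uniformly, obtainable from continuity/homogeneity once the finitely many types are fixed.
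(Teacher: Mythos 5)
There is a genuine gap, and it is concentrated in your second step. The theorem assumes separatedness only at the \emph{single} anchor triplet $(\yr^*,\pu^*,\e^*)$, and it makes no assumption that the ray $\{\lambda\pu:\lambda>0\}$ lies in the support of $\rand{\p}_z$ given $\rand{\y}_{-z}=\yr$ at a generic base point. Ray richness is the content of Lemma~\ref{lemma: rich support}, which is only a \emph{sufficient condition} for Assumption~\ref{assm: measurement error}(ii) at the one anchor point; it is not a hypothesis of the theorem. So your plan to ``blow up all adjacent gaps past $K$'' by taking $\lambda$ large cannot be executed at a general $(\yr,\pu)$: the distribution $F_{\rands{\pi}_r\mid\yr,\lambda\pu}$ is simply not observed for $\lambda\neq 1$, and without separation at $(\yr,\pu)$ the translated error laws overlap and your support-counting argument gives nothing. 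The paper closes exactly this hole by a deconvolution step you are missing: at the anchor point, the isolated mixture component lets you identify $\pi_r(\yr^*,\pu^*,\e^*)$ as a conditional mean (using $\Exp{\rands{\eta}}=0$) and hence recover the \emph{entire distribution} of $\rands{\eta}$; then at every other $(\yr,\pu)$ in the support, independence and boundedness let you identify the moment generating function of the discrete random variable $\pi_r(\yr,\pu,\rand{\e})$ as the ratio of the MGF of $\rands{\pi}_r\mid\yr,\pu$ to the (now known, everywhere nonvanishing) MGF of $\rands{\eta}$. A finitely supported distribution is determined by its MGF, so all mixture locations and weights are identified pointwise with no separation and no price rescaling required.

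Two smaller issues. First, reading the anchor location off as ``the right endpoint of a connected component shifted back by $K/2$'' is not valid: Assumption~\ref{assm: measurement error}(i) only says $\Prob{\abs{\rands{\eta}}\leq K/2}=1$, so the support of $\rands{\eta}$ need not reach $\pm K/2$; the conditional-mean argument is the correct route and you should commit to it. Second, your worry about ``gluing'' per-base-point orderings into a global index is largely unnecessary once you have the locations everywhere: Monotone Presence says the conditional support of $\rand{\e}$ is upward closed, so the top type $d_\e$ is present wherever any type is, and Strict Monotonicity then lets you match locations to types from the top down ($\max$ of the support is $\pi_r(\yr,\pu,d_\e)$, the next is $\pi_r(\yr,\pu,d_\e-1)$, and so on), with only low types possibly absent. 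No cross-base-point interlacing argument is needed.
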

Here, we may not be able to identify the structural restricted profit function for certain arguments outside of the support. This is particularly relevant for low types; there may be many combinations of prices and quantities such that low types do not produce either because it is infeasible for them or unprofitable.

Importantly, Theorem~\ref{prop: profit func discret I} only imposes a mild restriction on the stochastic dependence between unobservable heterogeneity $\rand{\e}$ and observed $\rand{\y}_{-z}$ and $\rand{\p}_z$. In particular, in cost minimization settings, the output level and input prices can be related to the distribution of productivity in flexible ways. What is key is the monotonicity restriction on selection into markets described in Assumption~\ref{assm: monotone selection}.
\par
The intuition behind Theorem~\ref{prop: profit func discret I} is that without restricting the dependence structure, monotonicity in the restricted profit function implies that firms always can be ranked.  The assumption of the discrete heterogeneity allows us to match firms with the same ranking across different markets, and thereby construct the restricted profit function.
\par
Theorem~\ref{prop: profit func discret I} can be used to weaken assumptions usually made in analysis of restricted profit maximizing behavior. For instance, with cost minimization, \citet{bilodeau2000hospital} focuses on a parametric setup with additively separable heterogeneity and assumes that fixed variables are exogenous. While working with the same observables, our methodology does not require parametric restrictions, and does not assume exogeneity.

\begin{rem}[Testability]
Theorem~\ref{prop: profit func discret I} identifies the restricted profit function $\pi_r$ without using the shape restrictions that characterize such functions. Thus, the assumptions in this paper are testable. Specifically, for each $\e$, the identified function $\pi_r(\yr,\p_z,\e)$ must be convex, monotonically decreasing, and homogeneous of degree $1$ in the prices of the flexible variables $\p_z$. These implications can be tested with data on the values of the restricted problem $\rands{\pi}_r$, the restricted quantities $\rand{\y}_{-z}$, and prices $\rand{\p}_z$.
\end{rem}

\section{Unobservable Prices and Proxies}\label{sec:homog}
In Section~\ref{sec: restricted prof func}, we showed how to identify the restricted profit function when the entire vector of prices of flexibly chosen variables, $\rand{\p}_{z}$, is observed. In many empirical applications not all prices are observed. This may cause concern about \textit{omitted price bias} (see \citealp{zellner1966specification,klette1996inconsistency,katayama2003plant}, and \citealp{epple2010new}). However, the researcher may have access to some observable proxies that are informative about unobservable prices. For example, the rental rate of capital may be linked to market-specific characteristics such as short-term and long-term interest rates. Wages may be linked to the unemployment level or aggregate labor supply. \citet{de2016prices} uses output price, market shares, product dummies, firm location, and export status as proxies for unobservable input prices. In the housing market, an analyst may use location as a price proxy for a house as in \citet{combes2017production}.\footnote{Hedonic pricing models also exhibit similar structure. However, in that literature it is assumed that both prices and proxies are observed. See, for instance, \citet{ekeland2004identification}.}
\par
This section studies how to identify the function linking prices proxies to unobserved prices through
\[
\rand{\p}_{z,j} = g_j (\rand{\x}_j,\tilde{\rand{\x}})\:\as,
\]
where $g_j$ is an unknown function and $\rand{\p}_{z,j}$ is a component of the vector of prices $\rand{\p}_{z}$ of the flexibly-chosen variables. We show how to identify $g_j$ using the fact that the restricted profit function is homogeneous of degree $1$, though as discussed in the Introduction, the technique we present is new and applies to any degree of homogeneity.\footnote{Homogeneity has been used for identification in \citet{matzkin1992nonparametric}, which differs in techniques and setting.} We assume that every price has its own \emph{excluded} proxy $\rand{\x}_j$, which is a proxy that affects its own price and does not affect any other prices. The vector of common proxies $\tilde{\rand{\x}}$  may include common market characteristics such as size of the market or other macroeconomic characteristics. Importantly, since $g_j$ is fully nonparametric, $\tilde{\rand{\x}}$ can include categorical variables such as location (e.g. country or state) and time (e.g. month or year) identifiers. The special case in which price is observed corresponds to $g_j(\x_j,\tilde\x)=x_j$, where $x_j$ is the price of $y_j$. To simplify the exposition we drop $\tilde{\rand{\x}}$ from the notation, and analysis may be interpreted conditional on $\tilde{\rand{\x}}=\tilde{\x}$. For instance, we write $g_j(\rand{\x}_j)$ instead of $g_j (\rand{\x}_j,\tilde{\rand{\x}})$. We denote $\x=(\x_j)_{j=1,\dots,d_{\y_{z}}}\in\X$ and $g(\x) = (g_j(\x_j))_{j=1,\dots,d_{\y_{z}}}$.
\par
Note that we assume prices are not a function of $\e$ or any other unobservables. Importantly, this rules out measurement error in prices. In our setup prices vary across markets, but are constant within a given market. Price-taking behavior implies that prices can be a function of the distribution of $\rand{\e}$ in a market, but not the firm-specific productivity $\e$.
\par
We first present an informal outline how to identify $g$ when one observes unrestricted profits, so that there are no restricted variables and the subscript $z$ can be dropped. If the function $g$ were known, we could identify $\pi$ directly by previous arguments. What remains is to identify $g$.
Recall that the profit function $\pi(\cdot,\e)$ is homogeneous of degree $1$, which from Euler's homogeneous function theorem yields the system of equations
\[
\sum_{j = 1}^{d_\y} \partial_{p_j} \pi(\p,\e)\p_j = \pi(\p,\e)\,.\footnote{Recall that we work with the unrestricted profit function for notational simplicity, but the restricted profit function is also homogeneous of degree $1$ in prices.}
\]
Replacing prices with price proxies, we obtain
\begin{equation} \label{eq:eulers1}
    \sum_{j=1}^{d_\y}\partial_{p_j}\pi(g(\x),\e)g_{j}(\x_j)=\pi(g(\x),\e)\,.
\end{equation}
Define $\tilde{\pi} (\x,\e) = \pi(g(\x),\e)$. Because $x_j$ is exclusive to $p_j$, the cross-partial derivatives satisfy $\partial_{x_j}{g_k(x_k)}=0$ for $j\neq k$. We thus have
\[
\partial_{\x_j}\tilde\pi(\x,\e)=\sum_{k}\partial_{p_k}\pi(g(\x),\e)\partial_{x_j}g_{k}(\x_k)=\partial_{p_j}\pi(g(\x),\e)\partial_{x_j}g_{j}(\x_j)\,.
\]
Plugging this in to (\ref{eq:eulers1}) we obtain
\begin{align} \label{eq:eulers2}
    \sum_{j=1}^{d_\y}\partial_{x_j}\tilde{\pi}(\x,\e) \frac{g_{j}(\x_j)}{\partial_{x_j} g_j(x_j)}=\tilde{\pi}(x,\e)\,.
\end{align}
Assume for now that $\tilde{\pi}(\cdot,\e)$ is identified. Thus the only unknowns involve $g$. By varying $x$, holding everything else fixed, Equation~\ref{eq:eulers2} can be used to generate a system of equations. We show that when a certain rank condition is satisfied, it is possible to identify the entire function $g$ using an appropriate scale/location normalization. We note that if all prices are observed except one, then we may directly apply Equation~\ref{eq:eulers2} to learn about $g_j$.

To formalize this, we impose location/scale conditions and some regularity conditions on $g$. 
\begin{assumption}\label{assm:hedonic function}
\begin{enumerate}
    \item $g_{1}(\x_{1})=\x_1$ for all $\x_{1}$, i.e. the price of the $1$-st flexibly chosen variable is observed;
    \item The value of $g$ is known at one point, i.e. there exist known $\x_0$ and $\p_0$ such that $g(x_0)=\p_0$;
    \item $X = \prod_{j = 1}^{d_{\y_z}} X_j$ where each set $X_j\subseteq\Real$ is an interval with nonempty interior;
    \item $g_{j}(\cdot)$ is continuous everywhere and differentiable on the interior of $X_j$, and the set
    \[
    \left\{\x_j\in\X_j\::\:\partial_{\x_j}g(\x_j)=0\right\}
    \] 
    has Lebesgue measure zero for every $j$.
\end{enumerate}
\end{assumption}
Assumptions~\ref{assm:hedonic function}(i)-(ii) allow us to identify the scale and the location, respectively, of the multivariate function $g$. Since we can always relabel both outputs and inputs, Assumption~\ref{assm:hedonic function}(i) is equivalent to assuming that at least one price (not necessary $\p_{1}$) is observed.
\par
We now turn to our rank condition. This condition ensures that the system of equations generated from (\ref{eq:eulers2}) has sufficient variation to recover terms such as $g_j(x_j) / \partial_{x_j} g_j(x_j)$.
\begin{defn}\label{def:rank condition}
We say that $h:\prod_{j = 1}^{d_{\y_z}} X_j\to\Real$ satisfies the rank condition at a point $\x_{-1}\in \prod_{j = 2}^{d_{\y_z}} X_j$ if there exists a collection $\{t_l\}_{l=1}^{d_{\y_z}-1}\subseteq X_1$ such that
\begin{enumerate}
    \item $x^*_l=(t_l, x_{-1}\tr)\tr\in\prod_{j = 1}^{d_{\y_z}} X_j$;
    \item The square matrix
    \begin{equation*}
    \left[\begin{array}{ccc}
        \partial_{x_2}h(\x^*_{1}) &
        \dots&
        \partial_{x_{d_{\y_z}}}h(\x^*_{1}) \\
        \partial_{x_2}h(\x^*_{2}) &
        \dots&
        \partial_{x_{d_{\y_z}}}h(\x^*_{2}) \\
        \dots&\dots&\dots\\
        \partial_{x_2}h(\x^*_{d_{\y_z}-1}) &
        \dots&
        \partial_{x_{d_{\y_z}}}h(\x^*_{d_{\y_z}-1})
    \end{array}
    \right]
\end{equation*}
is nonsingular.
\end{enumerate}
\end{defn}
We will apply this rank condition to $\tilde{\pi}$ in place of $h$. It is helpful to recall that by Hotelling's lemma, partial derivatives of $\tilde{\pi}$ take the form
\[
\partial_{x_j} \tilde{\pi}(x,\e) =  \partial_{p_j} \pi (\p,\e ) |_{\p = g(x) } \partial_{x_j} g_j(x_j) =\y_j (g(x),\e) \partial_{x_j} g_j(x_j)\,,
\]
where $\y_j(g(x),\e)$ is the supply for good $j$. Thus, this rank condition applied to $\tilde{\pi}$ may equivalently be interpreted as a rank condition involving the supply function for the goods as well as certain derivatives of $g$. In words, variation in observed prices should induce enough variation in supply of goods with unobserved prices. 

The following result provides conditions under which the price-proxy function $g$ is identified. We note that while our exposition above covered the case of unrestricted profits, the following result holds for the more general setting of restricted profits. Thus, instead of the function $\tilde\pi$, we will use its restricted version defined via $\tilde\pi_r(\x,\e) = \pi_r(\y^*_{-z},g(\x),\e)$, where $y^*_{-z}$ is fixed.
\begin{thm}\label{thm: attributes and profits}
Suppose Assumption~\ref{assm:hedonic function} holds. Then $g$ is identified over the support of $\rand{x}$ if for some $\y^*_{-z}$, the following conditions hold:
\begin{enumerate}
\item  $\tilde\pi_r(\x,\e)$ is identified for each $\x$ and $\e$ in the support; 
\item For every $\x_{-1}\in\prod_{j = 2}^{d_{\y_z}} X_j$, there exists $\e^{**}$ in the support such that $\tilde\pi_r(\cdot,\e^{**})$ satisfies the rank condition at $x_{-1}$.
\end{enumerate}
\end{thm}
To interpret (i), recall that Theorem~\ref{prop: profit func discret I} provides conditions under which $\tilde{\pi}_r$ is identified from the conditional distribution of $\pi_r(\rand{\y}_{-z},g(\rand{x}),\rand{e})$ conditional $\rand{\x}=\x$ and $\rand{\y}_{-z}=\y_{-z}$. To apply those results one just needs to replace $\rand{\p}_z$ by $\rand{\x}$. Here we highlight that given \textit{some} way to identify a structural function of the form of $\tilde{\pi}_r$, we can identify $g$. Thus, if a researcher has another means of identifying the structural function $\tilde{\pi}_r$, then this theorem can be applied.
\par
Part (ii) requires sufficiently rich variation in the reduced form profit function $\tilde{\pi}_r$ for \textit{some} value of productivity $e^{**}$. To further interpret the rank condition, we study it in two parametric examples in Online Appendix~D. There we show that the rank condition can be satisfied for the \citet{diewert73genleontief} profit function, but can fail for every possible parameter value with Cobb-Douglas technology. The reason Cobb-Douglas fails is that its profit function is additively separable when logs are taken.
\begin{rem}[Other Degrees of Homogeneity]
It is straightforward to generalize our technique to a homogeneous function of any degree $\alpha\geq0$ since the main identifying equation (\ref{eq:eulers2}) can be rewritten as
\begin{equation} \label{eq:homog alpha}
  \sum_{j=1}^{d_\y}\partial_{x_j}\tilde{\pi}(\x,\e) \frac{g_{j}(\x_j)}{\partial_{x_j} g_j(x_j)}=\alpha\tilde{\pi}(x,\e)\,.
\end{equation}
Here we study the restricted profit function, so $\alpha = 1$, but an analogous equation holds for other homogeneous structural functions. As one example, recall the supply function is homogeneous of degree $0$ in prices for a price-taking, profit-maximizing firm.
\end{rem}

\begin{rem}[Aggregation]
The key shape restriction used for identification in this section is homogeneity of a structural function. Importantly, homogeneity is a shape restriction that is preserved under expectations. Note that while we use homogeneity of degree $1$ here, this is true for any degree of homogeneity. See in particular Equation~\ref{eq:homog alpha}, which has structure that is preserved under expectations. For this reason, our results work as well with a representative agent analysis involving mean structural demand. We formalize this in Online Appendix~G.
\end{rem}

\subsection{Value as Proxy} \label{subsec:housing}

This section shows how to interpret \citet{epple2010new} through the lens of price proxies. Specifically, we show that average house values in a market can be used as a proxy for a missing output price. We use this setup as well in the empirical illustration in Appendix~\ref{app: application}.
\par
\citet{epple2010new} consider the production of housing in which all goods and services provided by a house are treated as a single output. The analyst observes total revenue of selling a house, and the price of land. Variation in these observables is driven by market variation. Importantly, output and its price are \textit{both} unobserved. Each source of unobservability is recognized as an important problem for the measurement of housing production. Building on \citet{epple2010new} we show how average values in a market serve as a price proxy for this missing price.
\par 
In contrast to \citet{epple2010new}, who work with a representative firm, we study identification in the presence of heterogeneity. As in \citet{epple2010new} we assume constant returns to scale in land and materials, so we can write
\[
y_{o}=f(m,e),
\]
where $f$ is the production function per-acre, and output $y_o$ and materials $m$ are in units per acre (land). The production set associated with this production function is $\Y(e) =\{(\y_{o},-m):\y_{o}\leq f(m,e)\}$. Firms treat land as pre-determined and choose $m$ and $y_o$. We work with the profit function per-acre, written as
\[
\pi(\p_{o},\p_m,\p_{l},\e)=\max_{(\y_{o},-m)\in \Y(\e)}\p_{o}\y_{o}-\p_m m-\p_{l},
\]
where $p_o$, $p_m$, and $p_l$ are prices of output, materials, and land, respectively.
Since the price of materials is unobserved, \citet{epple2010new} assume that it is the same across markets and equals 1. We will make the same assumption and drop $\p_m$ from the notation. 
\par
Since land is pre-determined, its price $\rand{\p}_l$ does not affect the optimal choice of output or materials. Thus, the value of housing $v(\rand{\p}_{o},\rand{\e})=\rand{\p}_{o}\y_{o}(\rand{\p}_{o},\rand{\e})$ and the average value of housing in a market with price $\rand{\p}_o=p_o$, denoted $\overline{v}(\p_{o})=\int v(p_{o},\e)dF_{\rand{\e}}(\e)$, do not depend on price of land $\rand{\p}_l$. Since $y_o(\p_o,\e)$ is monotone in $\p_o$, the average value $\overline{v}(\p_{o})$ is also monotone in $\p_o$. Importantly, $\overline{\rand{v}}$ is identified when we observe total revenue $\rand{\p}_o \rand{\y}_o$.
\begin{lem}
Suppose the distribution of firm productivity $F_{\rand{\e}}$ is the same across markets and the other assumptions of this section hold. If $\overline{v}(\p_{o})$ is strictly increasing in $\p_o$, then average value of housing per market $\overline{\rand{v}}$ is a price proxy, i.e. there exists a function $g$ such that 
\[
\rand{\p}_o=g(\overline{\rand{v}})\:\as
\]
\end{lem}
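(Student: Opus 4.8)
The plan is to exhibit the required price-proxy link function simply as the inverse of the map $\p_o \mapsto \overline{v}(\p_o)$, using that strict monotonicity makes this map injective. This is the same mechanism used in the preceding subsection to turn aggregate consumer demand into a price proxy, so the argument will be short.

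First I would collect the structural facts already established just above the statement: under constant returns to scale in $(M,L)$ and the normalization $\p_m = 1$, the per-acre problem $\max_m \p_o f(m,\e) - m$ has optimizers $m(\p_o,\e)$ and $\y_o(\p_o,\e) = f(m(\p_o,\e),\e)$ that do not involve the land price $\p_l$, so the firm-level value $v(\p_o,\e) = \p_o\, \y_o(\p_o,\e)$ is a function of $(\p_o,\e)$ only, and is finite and measurable in $\e$ by the production-correspondence conditions of Definition~\ref{defn:production correspondence} (which rule out infinite profits). I would then use the hypothesis that the productivity distribution $F_{\rand{\e}}$ is the same in every market to conclude that $\overline{v}(\p_o) = \int v(\p_o,\e)\, dF_{\rand{\e}}(\e)$ is one and the same function of $\p_o$ across all markets, and that it is identified because it equals the market average of the observed total revenue $\p_o \y_o$. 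Finally, invoking the hypothesis that $\overline{v}(\cdot)$ is strictly increasing on the support of $\p_o$, hence injective there, I would define $g$ to be the inverse of $\overline{v}$ on its range $\{\overline{v}(\p_o)\}$; this yields $g(\overline{v}(\p_o)) = \p_o$ for every $\p_o$ in the support, i.e. $\overline{v}$ satisfies the definition of a price proxy for $\p_o$ with link function $g$, which is the claim.

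The argument involves essentially no computation; the one point deserving care is the claim that $\overline{v}$ is genuinely a market-invariant function of $\p_o$ rather than a market-specific object. This rests on two ingredients I would make explicit: (a) the per-acre (constant-returns) formulation, which purges $\p_l$ — the only other price varying across markets — from the firm's optimal choices, and (b) the cross-market constancy of $F_{\rand{\e}}$. I would add a remark, paralleling the discussion after the aggregate-demand lemma in Section~\ref{subsec:aggregate_quantity_proxy}, that if either ingredient failed — a market-dependent productivity distribution, or an additional market-level supply shifter entering $\y_o$ beyond $\p_o$ — then $\overline{v}$ would no longer pin down $\p_o$ and the proxy relation would break. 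If one worries that the argmax defining $\y_o(\p_o,\e)$ need not be a singleton, it suffices to fix any measurable selection, since only the induced $\overline{v}$ matters for the conclusion.
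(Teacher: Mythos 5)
Your proposal is correct and follows essentially the same route the paper takes (the paper's argument is given in the surrounding text rather than in the appendix): the per-acre formulation purges $\p_l$ from the optimal choices so that $v(\p_o,\e)=\p_o\y_o(\p_o,\e)$ depends only on $(\p_o,\e)$, cross-market constancy of $F_{\rand{\e}}$ makes $\overline{v}$ a single market-invariant function of $\p_o$, and strict monotonicity lets you take $g=\overline{v}^{-1}$ on its range. Your added remarks about which ingredients the market-invariance rests on, and about measurable selection when the argmax is not a singleton, are sensible refinements of the same argument.
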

This equation is analogous to Equation $6$ in \citet{epple2010new} if we interpret their results as a representative agent analysis.
\par
We note here that by using value as a price proxy for output, if profits were observed and the price of materials ($\rand{p}_m$) varied, we could directly use the average value $\overline{\rand{v}}$ and identify $g$ using Theorem~\ref{thm: attributes and profits}. Here, we do not observe profits and the price of materials is assumed fixed at $1$. We thus impose an addition zero-profit assumption as in \citet{epple2010new}. While that paper assumes a single type of firm, which attains zero profits, we assume that profits are zero \textit{on average} in a given market\footnote{\citet{melitz2014heterogeneous} show that free-entry and constant returns of scale imply that ex-ante expected profits are zero, net of entry cost. Here we can assume entry cost is zero. In equilibrium, firms will have zero-profits on average \textit{just before} firms with negative profits leave the market.}: 
\[
\int\pi(\p_{o},\p_{l},\e)dF_{\rand{\e}}(\e)=\p_o\overline{y}_o(\p_{o})-\overline{m}(\p_{o})-\p_l=0,
\]
where $\overline{y}_o$ and $\overline{m}$ are the realizations of the aggregate output per-acre and the aggregate demand for materials per-acre in a given market. Since $\rand{\p}_l$ and $\overline{\rand{v}}$ are observed, the equilibrium assumption nonparametrically recovers a revenue function from production minus materials cost (recall that $\rand{\p}_m=1\:\as$),
\[
p_l = \tilde\pi(\overline{v}) := g(\overline{v})\overline{y}_o(g(\overline{v}))-\overline{m}(g(\overline{v})).
\]
Moreover, since $g(\overline{v})\overline{y}_o(g(\overline{v}))=\overline{v}$ by definition, we also identify material costs
\[
 \tilde{r}(\overline{v})=-\overline{m}(g(\overline{v})).
\]
We identify the function $g$ since we identify $\tilde\pi(\overline{v})$ and $\tilde\pi(\overline{v})-\tilde{r}(\overline{v})$. In particular, $g$ will solve the following differential equation:
\begin{equation}\label{eq: epple ode}
\dfrac{\partial_{\overline{v}}g(\overline{v})}{g(\overline{v})}=\dfrac{\partial_{\overline{v}}\tilde\pi(\overline{v})}{\tilde\pi(\overline{v})-\tilde{r}(\overline{v})}=\dfrac{\partial_{\overline{v}}\tilde\pi(\overline{v})}{\overline{v}}.
\end{equation}
Knowing $g$ we can identify $y_o(p_o,e)$ for different levels of heterogeneity since the observed $\rand{v}$ is equal to $g(\overline{\rand{v}})y_o(g(\overline{\rand{v}}),\rand{e})$. Thus, our approach generalizes \citet{epple2010new} to allow for unobserved heterogeneity in productivity. For a formal generalization of the results in Section~\ref{sec:homog} to settings with other observables see Online Appendix~F.

\section{Identification of the Production Correspondence}\label{sec: identification general}
In Section~\ref{sec: restricted prof func}, we showed how to identify the restricted profit function allowing endogenous entry and correlation between fixed quantities and productivity, without requiring instruments. Section~\ref{sec:homog} extends this result to settings when some prices are not observed but the analyst has price proxies, and provides examples of such proxies. 
\par
We now focus on how any of these identification results for the restricted profit function can be used to identify the primitive object of interest: the production correspondence. For the sake of notational simplicity from now on, we focus on the profit function though the results can be adapted to the restricted profit function by conditioning on $y_{-z}$.
\par
Recall that we start with identification of the profit function $\pi(\p,\cdot)$ only over the support of prices. For notational simplicity, we work with prices and not price proxies.\footnote{More generally we can identify the profit function over the support of $g(\rand{x})$, where $\rand{x}$ is the vector of price proxies.} The support of prices may consist of all nonnegative numbers, or may be much smaller, i.e. finite. We present a sharp identification result for the production correspondence that covers both cases.
\par
First, we note that $\pi(\cdot, \e)$ is homogeneous of degree $1$ in prices. It is also convex in prices, hence continuous. These features lead to consideration of the following richness assumption, which ensures $\Y(\cdot)$ may be recovered uniquely. Let $\P(\e)$ denote the conditional support of $\rand{\p}$ conditional on $\rand{\e}=\e$ (if $\rand{\p}$ and $\rand{e}$ are independent, then $\P(\e)$ does not vary with $\e$).
\begin{assumption} \label{assm:rich price}
\[
\mathrm{int}\left(\mathrm{cl}\left(\bigcup_{\lambda > 0 }\left\{\lambda \p\::\:\p\in\P(\e)\right\}\right)\right)= \Real^{d_\y}_{++}
\]
for all $e$, where $\mathrm{cl}(A)$ and $\mathrm{int}(A)$ are the closure and the interior of $A$, respectively.
\end{assumption}

\begin{figure}
\begin{center}
\begin{tikzpicture}[scale=0.6]
\draw[thick,->] (0,0) -- (6,0) node[anchor=north west] {$\p_1$};
\draw[thick,->] (0,0) -- (0,6) node[anchor=south east] {$\p_2$};
\draw[-,very thick] (0,0) to [out=90,in=145] (5,4.5);
\draw[-,very thick] (0,0) to [out=0,in=250] (3.5,3.8);
\draw[-,very thick] (3.8,4) to [out=0,in=195] (5,4.5);
\draw [fill=white] (0,0) circle[radius=.1];
\end{tikzpicture}
\end{center}
\caption{The set $\P(\e)$ (depicted by black curve) satisfies Assumption~\ref{assm:rich price} and has an empty interior. Dots represent ``holes'' in the support. Thus, $\P(\e)$ is not a connected set.}\label{fig: rich price degenerate support}
\end{figure}
The set 
\[
\bigcup_{\lambda > 0 }\left\{\lambda \p\::\:\p\in\P(\e)\right\}
\]
consists of all prices where $\pi(\cdot,\e)$ is known because of homogeneity. If that set has ``holes,'' then we can fill them by taking the closure of the set since $\pi(\cdot,\e)$ is convex, hence continuous.\footnote{Beyond continuity, the manner in which convexity affects the data requirements that ensure point identification is subtle, and depends on the shape of $\Y(\cdot)$. We provide an illustrative example in Online Appendix~E.} Assumption~\ref{assm:rich price} means that after we consider the implications of homogeneity and continuity, it is as if we have full variation in prices. Figure~\ref{fig: rich price degenerate support} is an example of a set satisfying this assumption. Another example is the Cartesian product of all natural numbers, $P(\e) = \{1, 2, \ldots \}^{d_\y}$. Thus, Assumption~\ref{assm:rich price} does not impose that the support of $\rand{\p}$ contains an open ball.

\begin{figure}
\begin{center}
\begin{tikzpicture}[scale=0.6]
\draw[thick,->] (-2,0) -- (6,0) node[anchor=north west] {$\y_2$};
\draw[thick,->] (5,-1) -- (5,4) node[anchor=south east] {$\y_1$};
\draw[-,thick] (5,0) to [out=110,in=-5] (-1,4) to [out=-5,in=-1] (-2,4.1);
\draw [dashed] (-2,4.25) -- (2.5,3.5) -- (6,-0.3);
\draw [fill=black] (0,3.9) circle[radius=.1];
\draw [fill=black] (3.9,1.9) circle[radius=.1];
\end{tikzpicture}
\end{center}
\caption{$\tilde\Y(\e)$ and $\Y'(\e)$ for $d_{\y}=2$ and $\P(\e)=\{\p^*,\p^{**}\}$. $\tilde\Y(\e)$ is the area under the dashed lines. $\Y'(\e)$ is the area under the solid curve. Dashed lines correspond to two hyperplanes $p^{* \prime} y=\pi(\p^*,\e)$ and $\p^{** \prime} y=\pi(\p^{**},\e)$. They are tangential to the solid curve.}\label{fig: envelope}
\end{figure}
\par
\begin{thm}\label{thm:identification of Y} 
Let $\pi(\p,\e)$ be identified by some previous argument over the set $\p \in P(\e)$ for all $e$. Moreover, let $\tilde\Y(\cdot)$ be defined via
\[
\tilde\Y(\e)=\left\{\y\in\Real^{d_{\y}}\::\:\p\tr \y \leq \pi(\p,\e),\: \forall \p \in \P(\e) \right\}
\]
for all $\e\in\E$. Then
\begin{enumerate}
\item $\tilde\Y(\cdot)$ can generate the data and for each $\e\in\E$, $\tilde{\Y}(\e)$ is a closed, convex set that satisfies free disposal.\footnote{By generate the data we mean that the profit function induced by $\tilde{Y}$ agrees with the identified profit function $\pi(\p,\e)$ for all $\e \in \E$ and  $\p \in \P(\e)$.}
\item A production correspondence $\Y'(\cdot)$ can generate the data if and only if 
\[
\max_{\y\in\Y'(\e)}\p\tr\y=\max_{\y\in\tilde\Y(\e)}\p\tr\y
\] 
for every $\e\in\E$ and $\p\in\P(\e)$. It follows that for any such $\Y'(\cdot)$, $\Y'(\e) \subseteq \tilde{\Y}(\e)$, for each $\e\in\E$.
\item If Assumption~\ref{assm:rich price} holds, then $\tilde{\Y}(\cdot)$ is the only production correspondence that can generate the data.
\end{enumerate}
\end{thm}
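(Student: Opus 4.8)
The plan is to handle the three parts in sequence, using the classical duality between closed convex sets satisfying free disposal and their support functions. For part (i), I would first verify that $\tilde\Y(\e)$, being an intersection of (possibly infinitely many) closed half-spaces of the form $\{\y : \p\tr\y \le \pi(\p,\e)\}$, is automatically closed and convex. Free disposal follows because each defining price $\p \in \P(\e)$ lies in $\Real^{d_\y}_+$ (prices are nonnegative), so if $\p\tr\y \le \pi(\p,\e)$ and $\y^* \le \y$ componentwise then $\p\tr\y^* \le \p\tr\y \le \pi(\p,\e)$. To show $\tilde\Y$ generates the data, I would fix $\e$ and $\p \in \P(\e)$ and show $\max_{\y\in\tilde\Y(\e)}\p\tr\y = \pi(\p,\e)$. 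The inequality ``$\le$'' is immediate from the definition of $\tilde\Y(\e)$. For ``$\ge$'', I would use that $\pi(\cdot,\e)$ is the support function of the true (identified-on-$\P(\e)$) production set, which is closed, convex, and nonempty by Definition~\ref{defn:production correspondence}; concretely, there exists $\y^\dagger \in \Y(\e)$ with $\p\tr\y^\dagger = \pi(\p,\e)$ (the max is attained by the recession cone property), and such $\y^\dagger$ satisfies $\p'\tr\y^\dagger \le \pi(\p',\e)$ for every other $\p' \in \P(\e)$ simply because $\y^\dagger \in \Y(\e)$ and $\pi(\p',\e)$ is by definition the maximum of $\p'\tr\y$ over $\Y(\e)$. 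Hence $\y^\dagger \in \tilde\Y(\e)$ and it achieves the value $\pi(\p,\e)$.

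For part (ii), the ``if'' direction is just the definition of generating the data. For ``only if,'' suppose $\Y'(\cdot)$ is a production correspondence whose profit function agrees with $\pi(\p,\e)$ on $\P(\e)$ for all $\e$. Then for each $\e$ and each $\p \in \P(\e)$, every $\y \in \Y'(\e)$ satisfies $\p\tr\y \le \max_{\y'\in\Y'(\e)}\p\tr\y' = \pi(\p,\e)$, so $\y \in \tilde\Y(\e)$; that is, $\Y'(\e) \subseteq \tilde\Y(\e)$, and consequently the two sets have the same support function on $\P(\e)$ (the inclusion gives ``$\le$'' and the data-generation hypothesis gives equality). This yields the displayed equality and the inclusion claim simultaneously.

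Part (iii) is where the real content lies, and I expect it to be the main obstacle. Given Assumption~\ref{assm:rich price}, I want to conclude $\tilde\Y(\e) = \Y(\e)$ (hence $\tilde\Y$ is the unique data-generating correspondence, invoking part (ii)). The key point is that knowing the support function $\pi(\cdot,\e)$ on $\P(\e)$ determines it everywhere on $\Real^{d_\y}_{++}$: homogeneity of degree one extends it to the cone $\bigcup_{\lambda>0}\{\lambda\p : \p\in\P(\e)\}$, and convexity (hence continuity on the interior of its domain) extends it uniquely to the closure of that cone, whose interior is $\Real^{d_\y}_{++}$ by the assumption. The delicate step is to argue that the support function on the open orthant $\Real^{d_\y}_{++}$ pins down a \emph{unique} closed convex set satisfying free disposal, even though we never see the support function at boundary or negative price directions. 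Here free disposal is exactly what rescues uniqueness: for a closed convex set $C$ with $C - \Real^{d_\y}_+ = C$, the support function $\sigma_C(\p)$ is $+\infty$ for any $\p \notin \Real^{d_\y}_+$, and by lower-semicontinuity its values on the boundary of the orthant are determined as limits from the interior; thus $\sigma_C$ restricted to $\Real^{d_\y}_{++}$ determines $\sigma_C$ on all of $\Real^{d_\y}$, and a closed convex set is determined by its support function (the biconjugate theorem). Applying this to both $\Y(\e)$ and $\tilde\Y(\e)$ — both closed, convex, free-disposal, with the same support function on $\Real^{d_\y}_{++}$ after the homogeneity/continuity extension — gives $\tilde\Y(\e) = \Y(\e)$, and part (ii) then shows no other correspondence can generate the data. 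I would be careful to note that the recession cone property guarantees these support functions are finite (not $+\infty$) on $\Real^{d_\y}_{++}$, so the extension argument is not vacuous.
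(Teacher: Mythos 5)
Your proposal is correct and follows essentially the same route as the paper's proof: parts (i) and (ii) are the same support-function/half-space arguments (which the paper states tersely and you spell out, correctly using a maximizer $\y^\dagger\in\Y(\e)$ to show the supremum over $\tilde\Y(\e)$ is attained at $\pi(\p,\e)$), and part (iii) uses the identical extension-by-homogeneity-then-continuity step under Assumption~\ref{assm:rich price}. The only difference is that where the paper concludes by citing Corollary 9.18 of \citet{kreps2012}, you unpack that citation into an explicit argument that free disposal forces the support function to be $+\infty$ off the nonnegative orthant and that a closed convex set is recovered from its (closed) support function, which is a faithful rendering of the same classical duality rather than a new approach.
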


Parts (i) and (ii) of Theorem~\ref{thm:identification of Y} are a sharp identification result stating the most that can be said about the production correspondence under our assumptions. These results are related to \citet{varian1984nonparametric}, Theorem 15.\footnote{The set $\tilde{\Y}(\e)$ is related to the ``outer'' set considered in \citet{varian1984nonparametric}, Section 7. The set $\tilde{\Y}(\e)$ is constructed from price and profit information, however, rather than price and quantity information as in \citet{varian1984nonparametric}.} However, \citet{varian1984nonparametric} works only with finite datasets, which are comparable to having a finite support of prices in our setting. In addition, \citet{varian1984nonparametric} observes prices and quantities while we observe prices and profits. Recall that observing prices and quantities implies observation of profits. Finally, \citet{varian1984nonparametric} does not consider unobservable heterogeneity.

Theorem~\ref{thm:identification of Y}(ii) establishes that $\tilde{\Y}(\cdot)$ is the envelope of all production correspondences that can generate the data (see Figure~\ref{fig: envelope}). We note, however, that $\tilde{\Y}(\cdot)$ may not be a production correspondence because it need not satisfy the recession cone property (recall Definition~\ref{defn:production correspondence}(iii)).\footnote{To see this, suppose that a firm of type $\e \in \E$ has 2-dimensional output/input set, prices are a constant vector $\P(\e) = \{ (1,1)\tr \}$, and profits at that price are given by $\pi( (1,1)\tr, \e) = 0$. Then the set $\tilde{\Y}(\e)$ is $\left\{ \y \in \Real^2 :\y_1 +\y_2 \leq 0 \right\}$. This set induces infinite profits for a price-taking firm whenever $\p_1 \neq \p_2$. Hence, this set violates the recession cone property, which is necessary for the firm problem to have a maximizer since $\tilde{\Y}(\e)$ is closed and nonempty, e.g. \citet{kreps2012}, Proposition 9.7. Note from part (iii), when Assumption~\ref{assm:rich price} holds it follows that $\tilde{\Y}$ is a production correspondence, and thus satisfies the recession cone property.}
\par
Theorem~\ref{thm:identification of Y}(iii) is related to classic work on the identification of a deterministic production set from a deterministic profit function.\footnote{See e.g. \citet{kreps2012}, Corollary 9.18 for a textbook result.} In this paper, however, we begin with the distribution of profits and prices. Part (iii) shows that with this distribution, it is possible to identify the distribution of features of $\Y(\cdot)$, such as the distribution of possible profit-maximizing quantities. We emphasize that this is true even if quantities are unobservable. An additional manner in which (iii) differs from textbook analysis is that, in econometric settings, it is not always natural to assume that all prices are observed ($\P(\e) = \Real^{d_\y}_{++}$). Theorem~\ref{thm:identification of Y} clarifies the variation in prices sufficient for nonparametric identification of production sets. We note that while Assumption~\ref{assm:rich price} is sufficient for point identification of $\Y$, it is not necessary as illustrated in Online Appendix~E.
\par

\begin{rem} \label{rem:inputs}
Our identification analysis does not impose any \textit{a priori} restrictions that certain dimensions of $\Y(\e)$ correspond to inputs, i.e. weakly negative numbers. This additional restriction can be imposed by modifying the set constructed in Theorem~\ref{thm:identification of Y}. Specifically, the set $\tilde{\Y}(\e)$ constructed in this theorem may be intersected with an appropriate half-space that encodes that certain dimensions (corresponding to inputs) must be nonpositive. We note that an analogous restriction for outputs is not informative because of the assumption of free disposal.
\end{rem}


\section{Sharp Counterfactual Bounds} \label{sec:sharpbounds}
Theorem~\ref{thm:identification of Y} makes use of a shape restriction to characterize the identified set of the production correspondence for profit-maximizing, price-taking firms. This shape restriction may be used for a dual purpose of providing sharp counterfactual bounds. This follows a long tradition in revealed preference. \citet{varian1982nonparametric,varian1984nonparametric} has exploited the close connections between empirical content, recoverability of structural functions, and counterfactuals. Recent work in demand analysis building on these connections includes \citet{blundell2003nonparametric, blundell2017individual,allen2019identification}, and \citet{aguiar2018stochastic}. In this section we describe a method to bound objects of interest outside of the support of the data.
\par
Since homogeneity and convexity of the heterogeneous profit function allow us to identify it over $\mathrm{cl}\left(\bigcup_{\lambda > 0 }\left\{\lambda \p\::\:\p\in\P(\e)\right\}\right)$, we can associate the conditional support $\P(\e)$ (of prices condition on $\rand{\e}=\e$) with the set where $\pi(\cdot,\e)$ is identified. That is why, for notational simplicity and in this section only, we assume that $\P(\e)$ is a closed subset of the unit sphere $\mathbb{S}^{d_\y - 1}$ for all $e$, and we consider counterfactual prices with norm normalized to 1. 
\par
We first present a result characterizing quantities consistent with profit maximization. Theorem~\ref{thm:identification of Y}(ii) is the basis for the following proposition.
\begin{prop}\label{prop: identified y}
Let $\P(\e)$ be a finite subset of the unit sphere $\mathbb{S}^{d_\y - 1}$. Given $\P(\e)$ and $\{\pi(\p,\cdot)\}_{\p\in\P(\cdot)}$, the set of output/input functions $\{\y_p(\cdot)\}_{\p\in\P(\cdot)}$ can generate $\{\pi(\p,\cdot)\}_{\p\in\P(\cdot)}$ if and only if
\begin{align*}
&\p\tr \y_\p(\e) = \pi(\p,\e)\,,\quad \forall \p \in \P(\e),\e\in\E\,, \\
&\p^{* \prime}\y_{\p^{*}}(\e) \geq \p^{* \prime} \y_{\p}(\e)\,,\:\:\quad \forall\p,\p^{*}\in\P(\e),\e\in\E\,.
\end{align*}
\end{prop}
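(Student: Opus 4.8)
The plan is to prove this as a direct consequence of Theorem~\ref{thm:identification of Y}(ii), after translating the language of production correspondences into the language of output/input functions. The key observation is that a collection $\{\y_\p(\cdot)\}_{\p\in\P(\cdot)}$ ``generating'' $\{\pi(\p,\cdot)\}_{\p\in\P(\cdot)}$ should mean precisely that there is \emph{some} production correspondence $\Y'(\cdot)$ with $\y_\p(\e)$ an optimal choice at price $\p$ for type $\e$, i.e. $\y_\p(\e)\in\argmax_{\y\in\Y'(\e)}\p\tr\y$ and that max equals $\pi(\p,\e)$. So the two directions amount to: (necessity) any such collection satisfies the two displayed conditions; (sufficiency) any collection satisfying the two conditions arises from some production correspondence.

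For \textbf{necessity}, suppose $\{\y_\p(\cdot)\}$ comes from a production correspondence $\Y'(\cdot)$. The first condition $\p\tr\y_\p(\e)=\pi(\p,\e)$ is immediate since $\y_\p(\e)$ attains the max. For the second, fix $\p,\p^*\in\P(\e)$. Since $\y_\p(\e)\in\Y'(\e)$ and $\y_{\p^*}(\e)$ \emph{maximizes} $\p^{*\prime}\y$ over $\Y'(\e)$, we get $\p^{*\prime}\y_{\p^*}(\e)\geq\p^{*\prime}\y_\p(\e)$. This is just the standard ``revealed preference'' / cyclical-monotonicity-type inequality for a maximizer, and needs nothing beyond $\y_\p(\e)$ being feasible.

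For \textbf{sufficiency}, this is where the real content (and the main obstacle) lies: given a collection satisfying the two inequalities, I must exhibit a production correspondence generating it. The natural candidate is $\Y'(\e) := \tilde\Y(\e)=\{\y : \p\tr\y\leq\pi(\p,\e)\ \forall\p\in\P(\e)\}$ from Theorem~\ref{thm:identification of Y}, or more economically the convex hull of the finitely many points $\{\y_\p(\e)\}_{\p\in\P(\e)}$ together with free-disposal and recession-cone padding (with $\P(\e)$ finite this is a polyhedron, so the recession cone property is easy to arrange). I would argue: (a) each $\y_{\p}(\e)$ lies in $\tilde\Y(\e)$, which follows from the second displayed condition $\p^{*\prime}\y_\p(\e)\leq\p^{*\prime}\y_{\p^*}(\e)=\pi(\p^*,\e)$ for every $\p^*\in\P(\e)$ (using the first condition); (b) for each $\p\in\P(\e)$, $\y_\p(\e)$ actually \emph{attains} $\max_{\y\in\tilde\Y(\e)}\p\tr\y$, because $\p\tr\y_\p(\e)=\pi(\p,\e)$ by the first condition, while $\tilde\Y(\e)\subseteq\{\y:\p\tr\y\leq\pi(\p,\e)\}$ by definition gives the reverse inequality for every feasible $\y$. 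So $\tilde\Y(\cdot)$ generates $\{\pi(\p,\cdot)\}$ with $\y_\p(\e)$ as a maximizer. The only remaining wrinkle is that $\tilde\Y(\e)$ with $\P(\e)$ finite can fail the recession cone property (as the footnote after Theorem~\ref{thm:identification of Y} notes), so I would instead take $\Y'(\e)$ to be $\hull\{\y_\p(\e):\p\in\P(\e)\}$ minus the nonnegative orthant (i.e. $\hull\{\y_\p(\e)\} + (-\Real^{d_\y}_+)$), which is closed, convex, free-disposal, polyhedral hence satisfies the recession cone property, and by (a)--(b) plus the finiteness of $\P(\e)$ still has $\y_\p(\e)$ attaining $\pi(\p,\e)$ at each $\p\in\P(\e)$ — one checks that adding the disposal cone and taking the hull of these maximizers cannot raise the value at any $\p\in\P(\e)$ above $\pi(\p,\e)$, again by the second displayed inequality. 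Invoking Theorem~\ref{thm:identification of Y}(ii), which says a production correspondence generates the data iff its support-function values at $\P(\e)$ agree with $\pi$, completes the argument. I expect the bookkeeping around which padded set to use — ensuring simultaneously closedness, free disposal, the recession cone property, and that no $\y_\p(\e)$ gets ``dominated'' after padding — to be the fiddly part, but each individual verification is routine.
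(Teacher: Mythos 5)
Your proposal is correct and follows essentially the same route as the paper: necessity via the standard revealed-preference argument, and sufficiency by constructing the free-disposal convex hull $\hull(\{\y_\p(\e)\}_{\p\in\P(\e)})+\Real^{d_\y}_{-}$, verifying the recession cone property from finiteness of $\P(\e)$, and checking that each $\y_\p(\e)$ still attains $\pi(\p,\e)$ using the second displayed inequality. The only cosmetic difference is your initial detour through $\tilde\Y(\e)$ before settling on the same candidate set the paper uses.
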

The vector $\y_\p(\e)$ is interpreted as a candidate supply vector given price $\p$ and productivity $\e$; it need not be unique and thus may not be equivalent to the supply function. Recall that as discussed in Remark~\ref{rem:inputs}, we do not impose \textit{a priori} restrictions that certain components of $\Y(\e)$ are inputs; this would correspond to imposing additional sign restrictions on the functions $y_p(\cdot)$ described in the proposition.
\par
Proposition~\ref{prop: identified y} essentially states that for each $\e$ there must exist output/input vectors such that the weak axiom of profit maximization holds \citep{varian1984nonparametric}. We note, however, that the primitive observables of our paper are the \textit{distribution} of profits and prices. 
\par
We can adapt Proposition~\ref{prop: identified y} to answer counterfactual questions by considering a hypothetical tuple $(p^c,y_{p^c})$ of prices and quantities. If Proposition~\ref{prop: identified y} applies with these additional counterfactual values, then they are feasible given the theory. In more detail, we present bounds on counterfactual objects, potentially with additional restrictions. The counterfactual values involve a function $C$ of interest. The restrictions involve a function $s$ that depends on the counterfactual price $p^c$ and quantity $y_{p^c}$. We encode the restrictions by the combinations such that $s(p^c, y_{p^c}) = 0$. For instance, if the counterfactual price is fixed to a given vector $\overline{\p}^{\text{c}}$ and no restrictions are imposed on $y_{p^c}$, then $s(\p^{\text{c}},\y_{\p^{\text{c}}})=\p^{\text{c}}-\overline{\p}^{c}$. The upper bound with heterogeneity level $\e$ is given by
\begin{align*}
    \overline{C}(\e)=&\sup_{\p^{\text{c}},\y_{\p^{\text{c}}},\{\y_{\p}\}_{\p\in\P(\e)}}C(\p^{\text{c}},\y_{\p^{\text{c}}})\,,\\
    \text{s.t. }& s(\p^{\text{c}},\y_{\p^{\text{c}}})=0\,,\\ 
        &\p\tr \y_\p = \pi(\p,\e)\,,\quad \forall \p \in \P(\e)\,, \\
        &\p^{* \prime}\y_{\p^*} \geq \p^{* \prime} \y_{\p}\,,\:\:\quad\forall\p,\p^{*}\in\P(\e)\cup\{\p^{\text{c}}\}\,.
\end{align*}
The lower bound is given by
\begin{align*}
    \underline{C}(\e)=&\inf_{\p^{\text{c}},\y_{\p^{\text{c}}},\{\y_{\p}\}_{\p\in\P(\e)}}C(\p^{\text{c}},\y_{\p^{\text{c}}})\,,\\
    \text{s.t. }& s(\p^{\text{c}},\y_{\p^{\text{c}}})=0\,,\\ 
        &\p\tr \y_\p = \pi(\p,\e)\,,\quad \forall \p \in \P(\e)\,, \\
        &\p^{* \prime}\y_{\p^*} \geq \p^{* \prime} \y_{\p}\,,\:\:\quad\forall\p,\p^{*}\in\P(\e)\cup\{\p^{\text{c}}\}\,.
\end{align*}
We provide some examples covered by this general setup. Note that these bounds hold for each $\e$, and thus one may also bound the distribution of $\overline{C}(\rand\e)$ and $\underline{C}(\rand\e)$. We reiterate that these upper and lower bounds apply to prices on the unit sphere, though they may be adapted for prices off the unit sphere as illustrated in the following examples.

\begin{example}[Profit bounds for a counterfactual price] \label{ex:counter1}
Suppose that we are interested in upper and lower bounds for profits at a given counterfactual price $\overline{\p}^{\text{c}}$. When prices $\p^{\text{c}}$ are on the unit sphere, we may specify $C(\p^{\text{c}},\y_{\p^{\text{c}}})=\p^{\text{c} \prime}\y_{\p^{\text{c}}}$ and $s(\p^{\text{c}},\y_{\p^{\text{c}}})=\p^{\text{c}}-\overline{\p}^{\text{c}}$. Then the problem can be simplified to get
\begin{align*}
    &\overline{C}(\e)=\sup_{\y\in\tilde\Y(\e)}\overline{\p}^{\text{c}\prime}\y\,,\\
    &\underline{C}(\e)=\max_{\p\in\P(\e)}\inf_{\y\in\tilde\Y(\e)\::\:\p\tr\y=\pi(\p,\e)}\overline{\p}^{\text{c}\prime}\y\,,
\end{align*}
where $\tilde\Y(\e)$ is the envelope of all production possibility sets consistent with the data defined in Theorem~\ref{thm:identification of Y}.
The above bounds are sharp in the following sense: if $\overline{C}(\e)$ is finite, then it is feasible, i.e. there exists a production set that can generate $\overline{C}(\e)$. If $\overline{C}(\e)$ is not finite, then for any finite level $K$ there exists a production set that can generate $C(\p^{\text{c}},\y_{\p^{\text{c}}}) > K$. Analogous statements hold for the lower bounds $\underline{C}(\e)$.
Recall that we assume the support of prices $\P(\e)$ is a subset of the unit sphere. This may be imposed in empirical settings by replacing prices with normalized prices $\rand{\p}/\norm{\rand{\p}}$. For counterfactual questions involving a price off the unit sphere $\overline{\p}^{\text{c}}$, one can bound counterfactual profits at price $\overline{\p}^{\text{c}}/\norm{\overline{\p}^{\text{c}}}$ and then multiply the upper and lower bounds by $\norm{\overline{\p}^{\text{c}}}$.
\end{example}

\begin{example}[Quantity bounds for a counterfactual price] \label{ex:counter2} Suppose that we are interested in the upper and lower bounds for $u\tr\y_{\p^{\text{c}}}$ for a given counterfactual price $\overline{\p}^{\text{c}}$, where $u$ is a vector. For example, with $u=(1,0,\dots,0)\tr$ we are interested in bounds on the first component of $\y$. Then $C(\p^{\text{c}},\y_{\p^{\text{c}}})=u\tr\y_{\p^{\text{c}}}$ and $s(\p^{\text{c}},\y_{\p^{\text{c}}})=\p^{\text{c}}-\overline{\p}^{c}$.
\end{example}

\begin{example}[Profit bounds for a counterfactual quantity] \label{ex:counter3}
Suppose a regulator is considering imposing a new regulation that the first component of the output/input vector is fixed at $\overline{\y}^{\text{c}}_1$. For example, in analysis of health care \citep{bilodeau2000hospital} a hospital may be required to treat a certain number of patients. To bound profits we may write the objective function as $C(\p^{\text{c}},\y_{\p^{\text{c}}})=\p^{\text{c} \prime} \y_{\p^{\text{c}}}$. The constraint is given by $s(\p^{\text{c}},\y_{\p^{\text{c}}})=\y_{1,p^c}-\overline{\y}_1^{\text{c}}$.\footnote{Note that the problem may not have a solution since the set of parameters that satisfy restrictions may be empty.} Bounds on profits with this quantity may be useful for a regulator wondering whether a hospital of type $\e$ would be profitable with the hypothetical regulation. If the upper bound on profits is negative, the answer is definitively no. If the lower bound on profits is positive, the answer is definitively yes.\footnote{This maintains the assumptions of price-taking, profit-maximizing behavior with a technology that is described by a production correspondence.} An additional question a regulator might ask is which types of firms could still be profitable. This can be addressed by studying functions $\overline{C}(\cdot)$ and $\underline{C}(\cdot)$ as $\e$ varies. Note that the constraints $s$ are general, and inequality constraints may be incorporated as well by using indicator functions.
\end{example}

When $\P(\e)$ is finite, computing bounds in Examples~\ref{ex:counter1} and~\ref{ex:counter2} is straightforward since they are the values of linear programs. Example~\ref{ex:counter3} is also a linear program if we add the additional constraint that the counterfactual price is fixed, $p^c = \overline{p}^c$. In general, the computational difficulty of the bounds $\overline{C}$ and $\underline{C}$ depends on the nature of the objective function and the constraint.

\section{Estimation of Production Sets and Consistency}\label{sec:estimation}

The previous identification results describe how to identify the profit or restricted profit function. Appendix~\ref{app: estimation and simulation} describes one estimator of the restricted profit function, but there are many depending on assumptions concerning exogeneity or whether productivity is discrete or continuous. This section links \textit{any} estimator of the restricted profit function to an induced estimator of the corresponding production set. As in previous section, for notational convenience we work with the profit function, though the analysis applies to the restricted profit function by conditioning. In the restricted case, we would instead estimate the restricted production correspondence.

We now describe how an estimator $\hat{\pi}(\cdot, \e)$ of the profit function may be used to construct an estimator $\hat{\Y}(\e)$ of the production possibility set for a firm with productivity level $\e$. The main result in this section relates the estimation error of $\hat{\pi}$ (for $\pi$) and that of the constructed set $\hat{\Y}$ (for $\Y$). Consistency and rates of convergence results for $\hat{\pi}$ thus have analogous statements for $\hat{\Y}$. 

As setup, we now formalize our notions of distance both for functions and sets. We present our result for a fixed $\e\in\E$. We assume that $\pi(\cdot,\e)$ is identified over $\P(\e)=\P=\Real^{d_\y}_{++}$ (we assume Assumption~\ref{assm:rich price}). Given a fixed $\e\in\E$ and $\hat\pi(\cdot,\e)$, a natural estimator for $\Y(\e)$ is
\[
\hat\Y(\e)=\left\{\y\in\Real^{d_\y}\::\:\p\tr\y\leq\hat\pi(\p,\e),\forall\p\in\P\right\}\,.
\]
This set is a plug-in estimator motivated by Theorem~\ref{thm:identification of Y}.
A commonly used notion of distance between convex sets is the Hausdorff distance. The Hausdorff distance between two convex sets $A, B \subseteq \Real^{d_\y}$ is given by
\[
d_H (A,B) = \max \left\{ \sup_{a \in A } \inf_{b \in B} \norm{a - b}, \sup_{b \in B } \inf_{a \in A} \norm{ a - b} \right\}\,.
\]
Unfortunately, the Hausdorff distance between $\Y(\e)$ and $\hat\Y(\e)$ can be infinite. For this reason we will consider the Hausdorff distance between certain extensions of these sets. The following example illustrates why the original distance may be infinite.
\begin{example}\label{ex: infinite d_H}
Suppose that $d_{\y}=2$ and for some $\e\in\E$,
\begin{align*}
    \Y(\e)&=\left\{\y\in\Real\times\Real_{-}\::\:y_1\leq \sqrt{-y_2} \right\}\,,\\
    \hat\Y^m(\e)&=\left\{\y\in\Real\times\Real_{-}\::\:y_1\leq (1-1/m)\sqrt{-y_2} \right\}\,,\quad m\in\Natural.
\end{align*}
Note that although $\lim_{m\to\infty}(1-1/m)\sqrt{-y_2}=\sqrt{-y_2}$ for every finite $\y_2\leq 0$, the Hausdorff distance between these sets is infinite for every finite $m\in\Natural$. 
\end{example}
Example~\ref{ex: infinite d_H} illustrates a technical concern with the Hausdorff distance that arises because of the unboundedness of production possibility sets. However, in empirical applications one may be interested in production possibility sets in regions that correspond to prices that are bounded away from zero. Thus, instead of working with all possible prices we will work only with certain empirically relevant compact convex subsets of $\Real^{d_{\y}}_{++}$. We consider the Hausdorff distance between extensions such as
\begin{align*}
    \Y_{\bar\P}(\e)&=\left\{\y\in\Real^{d_{\y}}\::\:\p\tr\y\leq \pi(\p,\e),\:\forall \p\in \bar\P \right\}\\
    \hat\Y_{\bar\P}(\e)&=\left\{\y\in\Real^{d_{\y}}\::\:\p\tr\y\leq \hat\pi(\p,\e),\:\forall \p\in \bar\P \right\},
\end{align*}
where $\bar\P\subseteq \P$ is convex and compact. These sets nest the original sets (e.g. $\Y(\e)\subseteq\Y_{\bar\P}(\e)$) because the inequalities hold only for $\p\in\bar{\P}$, not for every $\p\in\P$. Moreover, the parts of the production possibility frontiers of the sets $\Y(\e)$ and $\Y_{\bar\P}(\e)$ coincide at points that are tangential to price vectors from $\bar\P$ (see Figure~\ref{fig: extended sets}).
\begin{figure}
\begin{center}
\begin{tikzpicture}[scale=0.6]
\draw[thick,->] (-1.2,4.5) -- (-0.62,6) node[anchor=north west] {$\p^{**}$};
\draw[thick,->] (2.4,1) -- (3.8,1.5) node[anchor=south east] {$\p^{*}$};
\draw[-,thick] (2.2,0) to (2,2) to [out=100,in=-10] (0,4) to (-2,4.2);
\draw [dashed] (2.8,0) -- (1.9,2.5);
\draw [dashed] (-2,4.8) -- (0.2,4);
\end{tikzpicture}
\end{center}
\caption{$\Y(\e)$ and $\Y_{\bar\P}(\e)$ for $d_{\y}=2$ and $\bar\P=\{\p\in\P\::\:\delta\leq\p_2/\p_1\leq 1/\delta,\, \norm{\p}\leq 1\}$, $0<\delta<1$. $\Y(\e)$ is the area under the solid curve. $\Y_{\bar\P}(\e)$ is the area under the dashed lines. Dashed lines correspond to two hyperplanes $\p^{*\prime}\y=\pi(\p^*,\e)$ and $\p^{**\prime}\y=\pi(\p^{**},\e)$. They are tangential to the solid curve. $\p^*$ is such that $\p^{*}_2/\p_1^{*}=\delta$ and $\p^{**}$ is such that $\p^{**}_2/\p_1^{**}=1/\delta$.}\label{fig: extended sets}
\end{figure}
\par
We now turn to the main result in this section, which establishes an equality relating the distance between $\hat{\pi}$ and $\pi$, and the distance between extensions of $\hat{\Y}$ and $\Y$. Our distance for these profit functions is given by
\[
\tilde{d}_{\bar\P}(\e) = \sup_{\p\in\bar\P}\norm{\dfrac{\hat\pi(\p,\e)-\pi(\p,\e)}{\norm{\p}}}\,.
\]
To state the following result, let $\mathcal{\bar\P}$ be a collection of all compact, convex, and nonempty subsets of $\P$.
\begin{thm}\label{thm:consistency}
Maintain the assumption that $\pi(\cdot,\e)$ is homogeneous of degree 1 and convex.\footnote{Recall that this is equivalent to price-taking, profit-maximizing behavior with technology described by a production correspondence.} Suppose, moreover, that for every $\e\in\E$, $\hat\pi(\cdot,\e)$ is an estimator of $\pi(\cdot,\e)$ that is homogeneous of degree $1$ and continuous. If $\hat\pi(\cdot,\e)$ is convex, then
\[
d_{H}(\Y_{\bar\P}(\e),\hat\Y_{\bar\P}(\e))= \tilde{d}_{\bar\P}(\e)\quad\as
\]
for every $\bar\P\in\mathcal{\bar\P}$. 
\end{thm}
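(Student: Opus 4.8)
The plan is to exploit the classical duality between compact convex sets and their support functions. Recall that for a closed convex set $A \subseteq \Real^{d_\y}$ satisfying free disposal, its support function $h_A(\p) = \sup_{\y \in A} \p\tr\y$ (restricted to the nonnegative orthant) is finite exactly on the prices where $A$ is "bounded in the $\p$-direction," and there is a classical formula stating that $d_H(A,B) = \sup_{\norm{\p}=1} \abs{h_A(\p) - h_B(\p)}$ when the supremum is taken over directions in which both support functions are finite. The key observations are: (a) by convexity and homogeneity of degree $1$, $\pi(\cdot,\e)$ is precisely the support function of $\Y(\e)$ (this is Theorem~\ref{thm:identification of Y}-type reasoning, or standard duality), and likewise $\hat\pi(\cdot,\e)$ is the support function of $\hat\Y(\e)$; (b) restricting the defining inequalities to $\p \in \bar\P$ corresponds to taking the set whose support function, on the cone generated by $\bar\P$, agrees with $\pi(\cdot,\e)$ — that is, $\Y_{\bar\P}(\e)$ is the intersection of halfspaces $\{\p\tr\y \le \pi(\p,\e)\}$ over $\p\in\bar\P$, and its support function restricted to $\mathrm{cone}(\bar\P)$ equals $\pi$ there by convexity/homogeneity.

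First I would set up notation: fix $\e$, write $h = \pi(\cdot,\e)$, $\hat h = \hat\pi(\cdot,\e)$, $Y = \Y_{\bar\P}(\e)$, $\hat Y = \hat\Y_{\bar\P}(\e)$. I would verify that $Y$ and $\hat Y$ are nonempty (free disposal gives points far in the negative orthant), closed, convex, and — crucially — that their Hausdorff distance is finite, using that $\bar\P$ is compact and bounded away from the boundary of the orthant, so the recession cones of $Y$ and $\hat Y$ are controlled and in fact equal (both contain $-\Real^{d_\y}_+$ and nothing pointing into a direction $\p$ with $\p\tr(\cdot) > 0$ for some $\p \in \bar\P$; compactness of $\bar\P$ makes this a closed condition, giving the same recession cone, hence finite $d_H$). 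Next I would show that the support function of $Y$ evaluated at $\p \in \bar\P$ equals $h(\p)$: the inequality $h_Y(\p) \le h(\p)$ is immediate from the definition of $Y$; the reverse uses that $h$ is convex and homogeneous of degree $1$, so $h(\p) = \max_{\y : \text{feasible}} \p\tr\y$ is attained — concretely, since $h$ is finite and convex on the open orthant it has a subgradient $\y_\p$ at $\p$ with $\p\tr\y_\p = h(\p)$ and $\q\tr\y_\p \le h(\q)$ for all $\q$ (by homogeneity $h(\q) \ge h(\p) + \y_\p\tr(\q-\p) = \y_\p\tr\q$), so $\y_\p \in Y$ and thus $h_Y(\p) \ge \p\tr\y_\p = h(\p)$. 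The same argument applies verbatim to $\hat Y$ and $\hat h$ using convexity and homogeneity of $\hat h$.

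Then the core computation: apply the classical support-function formula for Hausdorff distance. Since $Y$ and $\hat Y$ are closed convex with the same recession cone, and their support functions restricted to $\bar\P$ are $h|_{\bar\P}$ and $\hat h|_{\bar\P}$ respectively, while in all other directions (outside $\mathrm{cone}(\bar\P)$) the relevant support-function values are $+\infty$ for both and hence don't contribute, one gets
\[
d_H(Y,\hat Y) = \sup_{\p \in \bar\P,\ \norm{\p}=1} \abs{h(\p) - \hat h(\p)}.
\]
By homogeneity of degree $1$ of both $h$ and $\hat h$, $\abs{h(\p) - \hat h(\p)} = \norm{\p}\cdot\abs{h(\p/\norm{\p}) - \hat h(\p/\norm{\p})}$, so $\sup_{\p\in\bar\P} \abs{h(\p)-\hat h(\p)}/\norm{\p}$ over all of $\bar\P$ equals the supremum over the unit-normalized directions in $\bar\P$, which is exactly $\eta_{\bar\P}(\e)$. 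This yields the claimed equality; the "a.s." is just because $\hat\pi$ is a random function and the identity holds on the event (of probability one) where $\hat\pi(\cdot,\e)$ is convex, continuous, and homogeneous, as hypothesized.

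The main obstacle I anticipate is justifying the support-function/Hausdorff-distance formula in this \emph{unbounded} setting — the classical statement is usually given for compact convex sets, and here $Y,\hat Y$ are unbounded (that is precisely the point of Example~\ref{ex: infinite d_H} and the reason the authors pass to the $\bar\P$-extensions). The fix is to argue carefully that because $\bar\P$ is compact and contained in the open orthant, both $Y$ and $\hat Y$ decompose (in the sense of Minkowski sums, or at least have a common "bounded-direction" structure) so that the relevant directions for the Hausdorff sup are exactly those in $\mathrm{cone}(\bar\P)\cap\mathbb{S}^{d_\y-1}$; one may either cite a generalization of the support-function distance formula to closed convex sets with equal recession cone, or prove the two inequalities $d_H \le \eta_{\bar\P}$ and $d_H \ge \eta_{\bar\P}$ directly. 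The "$\ge$" direction is easy: pick $\p^*$ nearly attaining $\eta_{\bar\P}$ and a point of one set achieving $h(\p^*)$; its distance to the other set is at least $(h(\p^*)-\hat h(\p^*))/\norm{\p^*}$ by the supporting-hyperplane inequality. The "$\le$" direction is where one must work: given $\y \in Y$, one must produce $\hat\y \in \hat Y$ within $\eta_{\bar\P}$, which amounts to showing that moving $\y$ in the direction $-\mathbf{1}$ (or an appropriate direction in the common recession cone) by at most $\eta_{\bar\P}$ lands inside $\hat Y$ — this uses that violating $\p\tr\hat\y \le \hat h(\p)$ for $\p\in\bar\P$ is cured by such a translation precisely because $\inf_{\p\in\bar\P}\p\tr\mathbf{1} > 0$ (compactness and positivity of $\bar\P$), so I would track that infimum-constant explicitly and check the normalization makes it come out to exactly $\eta_{\bar\P}$ rather than a constant multiple.
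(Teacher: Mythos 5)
Your overall architecture is the same as the paper's: identify $\pi(\cdot,\e)$ and $\hat\pi(\cdot,\e)$ as the support functions of $\Y_{\bar\P}(\e)$ and $\hat\Y_{\bar\P}(\e)$ on $\mathrm{cone}(\bar\P)$ (your subgradient argument for $h_{\Y_{\bar\P}(\e)}(\p)=\pi(\p,\e)$ is exactly Case 1 of the paper's Lemma~\ref{lem:polar}), show both support functions are $+\infty$ off that cone so those directions impose no constraint (the paper proves this via a polar-cone/bipolar argument in Lemma~\ref{lem:polar}; you assert it via ``equal recession cones,'' which is correct but is the step that actually needs the convexity and strict positivity of $\bar\P$), and then invoke a support-function formula for the Hausdorff distance of \emph{unbounded} closed convex sets (the paper's Lemma~\ref{lem:truncation}).

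The genuine gap is in your fallback ``direct'' proof of $d_H\leq\eta_{\bar\P}(\e)$. Translating $\y\in\Y_{\bar\P}(\e)$ by $-t\,\mathbf{1}/\norm{\mathbf{1}}$ requires, for each unit-normalized $\p\in\bar\P$, that $t\,\p\tr\mathbf{1}/\sqrt{d_\y}\geq\eta_{\bar\P}(\e)$, i.e.\ $t\geq\eta_{\bar\P}(\e)\sqrt{d_\y}/(\p\tr\mathbf{1})$; since $\p\tr\mathbf{1}\leq\sqrt{d_\y}$ with equality only when $\p\propto\mathbf{1}$, this forces a displacement of size $\eta_{\bar\P}(\e)\cdot\sqrt{d_\y}/\inf_{\p\in\bar\P\cap\mathbb{S}^{d_\y-1}}\p\tr\mathbf{1}$, which is strictly larger than $\eta_{\bar\P}(\e)$ whenever $\bar\P$ contains any direction not proportional to $\mathbf{1}$. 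So the constant you worried about does \emph{not} come out to $1$, and no single fixed displacement direction can work: the nearest point of $\hat\Y_{\bar\P}(\e)$ to $\y$ moves in a direction depending on which constraints $\y$ saturates. The correct route---and the one the paper takes---is to avoid choosing a direction altogether and use the equivalence $A\subseteq B+\rho\mathbb{B}^{d_\y-1}\iff h_A(u)\leq h_B(u)+\rho$ for all $u\in\mathbb{S}^{d_\y-1}$ (valid for closed convex sets, with the convention that the inequality holds when both sides are $+\infty$), combined with $h_{B+\rho\mathbb{B}^{d_\y-1}}=h_B+\rho$ on the unit sphere. With your Lemma-\ref{lem:polar}-type facts in hand, this immediately gives $d_H=\sup_{\p\in\bar\P}\abs{h(\p/\norm{\p})-\hat h(\p/\norm{\p})}=\eta_{\bar\P}(\e)$ with no slack. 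Your ``$\geq$'' direction is fine as written.
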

\par
Theorem~\ref{thm:consistency} is a nontrivial extension of a well-known relation between the Hausdorff distance and the support functions of convex \emph{compact} sets to convex, closed, and \emph{unbounded} sets.\footnote{See \citet{kaido2014asymptotically} for a recent application of this result for convex compact sets.} Homogeneity of an estimator can be imposed by rescaling the data by dividing by one of the prices. Unfortunately, convexity can be more challenging to impose and so we turn to a related result that covers cases in which $\hat{\pi}$ is not convex. To formalize our result, we introduce two additional parameters:
\[
R_{\bar\P}(\e)=\sup_{\p\in\bar\P}\dfrac{\pi(\p,\e)}{\norm{\p}}\,,\quad r_{\bar\P}(\e)=\inf_{\p\in\bar\P}\dfrac{\pi(\p,\e)}{\norm{\p}}\,.
\]
\begin{prop} \label{prop:nonconvex}
Maintain the assumption that $\pi(\cdot,\e)$ is homogeneous and convex. Suppose, moreover, that for every $\e\in\E$, $\hat\pi(\cdot,\e)$ is an estimator of $\pi(\cdot,\e)$ that is homogeneous of degree $1$ and continuous. If $\tilde{d}_{\bar\P}(\e)=o_{p}(1)$ and $0<r_{\bar\P}(\e)<R_{\bar\P}(\e)<\infty$, then
\[
d_{H}(\Y_{\bar\P}(\e),\hat\Y_{\bar\P}(\e))\leq \tilde{d}_{\bar\P}(\e)\dfrac{R_{\bar\P}(\e)}{r_{\bar\P}(\e)}\dfrac{1+\tilde{d}_{\bar\P}(\e)/R_{\bar\P}(\e)}{1-\tilde{d}_{\bar\P}(\e)/r_{\bar\P}(\e)}
\]
with probability approaching $1$, for every $\bar\P\in\mathcal{\bar\P}$. In particular,
\[
d_{H}(\Y_{\bar\P}(\e),\hat\Y_{\bar\P}(\e)) = o_p(1)\,.
\]
\end{prop}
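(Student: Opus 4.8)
The plan is to convert the uniform closeness of $\hat\pi$ to $\pi$ into a \emph{multiplicative} sandwich of the two induced sets, and then pass from that sandwich to a Hausdorff bound. I work throughout on the event $\{\eta_{\bar\P}(\e)<r_{\bar\P}(\e)\}$, which has probability approaching $1$ because $\eta_{\bar\P}(\e)=o_p(1)$ and $r_{\bar\P}(\e)$ is a fixed positive constant; on this event the right-hand side of the claimed inequality is finite and positive (here $R_{\bar\P}(\e)<\infty$ is used), and on its complement there is nothing to prove since the conclusion is only asserted with probability approaching $1$. If $\eta_{\bar\P}(\e)=0$ the two sets coincide, so assume $0<\eta_{\bar\P}(\e)<r_{\bar\P}(\e)$.

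\emph{Step 1 (sandwich).} For every $\p\in\bar\P$ we have $\abs{\hat\pi(\p,\e)-\pi(\p,\e)}\le\eta_{\bar\P}(\e)\norm{\p}$ by the definition of $\eta_{\bar\P}(\e)$, and $\pi(\p,\e)\ge r_{\bar\P}(\e)\norm{\p}>0$ by the definition of $r_{\bar\P}(\e)$; hence $\hat\pi(\p,\e)$ lies between $(1-\eta_{\bar\P}(\e)/r_{\bar\P}(\e))\pi(\p,\e)$ and $(1+\eta_{\bar\P}(\e)/r_{\bar\P}(\e))\pi(\p,\e)$. Since $\Y_{\bar\P}(\e)$ and $\hat\Y_{\bar\P}(\e)$ are the intersections over $\p\in\bar\P$ of the half-spaces $\{\y:\p\tr\y\le\pi(\p,\e)\}$ and $\{\y:\p\tr\y\le\hat\pi(\p,\e)\}$, and $0\in\mathrm{int}\,\Y_{\bar\P}(\e)$ (again since $r_{\bar\P}(\e)>0$), scaling about the origin gives
\[
\Bigl(1-\tfrac{\eta_{\bar\P}(\e)}{r_{\bar\P}(\e)}\Bigr)\Y_{\bar\P}(\e)\ \subseteq\ \hat\Y_{\bar\P}(\e)\ \subseteq\ \Bigl(1+\tfrac{\eta_{\bar\P}(\e)}{r_{\bar\P}(\e)}\Bigr)\Y_{\bar\P}(\e).
\]
This uses homogeneity of $\hat\pi$ but not its convexity, which is what lets us drop the convexity hypothesis of Theorem~\ref{thm:consistency}; combined with $\inf_{\p\in\bar\P}\hat\pi(\p,\e)/\norm{\p}\ge r_{\bar\P}(\e)-\eta_{\bar\P}(\e)$, the left inclusion also reads $\Y_{\bar\P}(\e)\subseteq\bigl(1+\eta_{\bar\P}(\e)/(r_{\bar\P}(\e)-\eta_{\bar\P}(\e))\bigr)\hat\Y_{\bar\P}(\e)$, so each set is sandwiched between two dilations of the other.

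\emph{Step 2 (conversion and conclusion).} Both sets are closed and convex with the common recession cone $C_{\bar\P}=\{d:\p\tr d\le0\ \forall \p\in\bar\P\}$, so their support functions share the effective domain $\mathrm{cone}(\bar\P)$ and satisfy $h_{\Y_{\bar\P}(\e)}(\p/\norm{\p})=\pi(\p,\e)/\norm{\p}\in[r_{\bar\P}(\e),R_{\bar\P}(\e)]$ for $\p\in\bar\P$. Running the support-function argument behind Theorem~\ref{thm:consistency} — which reduces the Hausdorff distance of these closed, possibly unbounded, convex sets to the discrepancy of their support functions over $\mathrm{cone}(\bar\P)$ — and inserting the dilation bounds of Step 1, so that $h_{\hat\Y_{\bar\P}(\e)}(\p/\norm{\p})$ lies within the factor $1+\eta_{\bar\P}(\e)/(r_{\bar\P}(\e)-\eta_{\bar\P}(\e))$ of $h_{\Y_{\bar\P}(\e)}(\p/\norm{\p})\le R_{\bar\P}(\e)+\eta_{\bar\P}(\e)$, yields
\[
d_H\bigl(\Y_{\bar\P}(\e),\hat\Y_{\bar\P}(\e)\bigr)\ \le\ \frac{\eta_{\bar\P}(\e)\bigl(R_{\bar\P}(\e)+\eta_{\bar\P}(\e)\bigr)}{r_{\bar\P}(\e)-\eta_{\bar\P}(\e)}\ =\ \eta_{\bar\P}(\e)\,\frac{R_{\bar\P}(\e)}{r_{\bar\P}(\e)}\,\frac{1+\eta_{\bar\P}(\e)/R_{\bar\P}(\e)}{1-\eta_{\bar\P}(\e)/r_{\bar\P}(\e)}.
\]
An alternative that avoids invoking the support-function identity is to strip off the common cone, write $\Y_{\bar\P}(\e)=\hull(\mathrm{ext}\,\Y_{\bar\P}(\e))+C_{\bar\P}$ and similarly for $\hat\Y_{\bar\P}(\e)$, use $d_H(A+C_{\bar\P},B+C_{\bar\P})\le d_H(A,B)$, and apply the elementary compact-set estimate to the dilation-sandwiched cores. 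In either case, since $\eta_{\bar\P}(\e)=o_p(1)$ while $r_{\bar\P}(\e)$ and $R_{\bar\P}(\e)$ are fixed, finite, and positive, the bound is $o_p(1)$, giving the last display.

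The hard part is Step 2: for unbounded sets one cannot pass from ``each is a bounded dilation of the other'' to ``Hausdorff-close'' by naively contracting a far-out point toward the origin, since that costs an amount proportional to the point's norm — this is exactly the pathology in Example~\ref{ex: infinite d_H}. What rescues the argument is that the discrepancy is localized: the two frontiers differ by a bounded perturbation on the compact part, and far out they are (nearly) parallel faces in directions of $C_{\bar\P}$, whose Hausdorff distance is the perpendicular offset, controlled by the $\pi$-values and hence by $R_{\bar\P}(\e)/r_{\bar\P}(\e)$. Making this precise — via the support-function machinery of Theorem~\ref{thm:consistency} or via the recession-cone decomposition — is where the work lies; Step 1 is immediate.
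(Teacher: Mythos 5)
Your proof is correct, but it controls the support-function discrepancy by a genuinely different argument than the paper's. Both routes share the first reduction: $d_{H}(\Y_{\bar\P}(\e),\hat\Y_{\bar\P}(\e))=\sup_{\p\in\bar\P}\abs{h_{\Y_{\bar\P}(\e)}(\p/\norm{\p})-h_{\hat\Y_{\bar\P}(\e)}(\p/\norm{\p})}$, i.e.\ Lemma~\ref{lem:truncation} applied after replacing the possibly non-convex $\hat\pi(\cdot,\e)$ by the support function $h_{\hat\Y_{\bar\P}(\e)}$ (legitimate because $h_{\hat\Y_{\bar\P}(\e)}(\p)\leq\hat\pi(\p,\e)$ on $\bar\P$, so both functions cut out the same half-space intersection); you invoke this implicitly, the paper explicitly. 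Where you diverge is in bounding that discrepancy. The paper proves Lemma~\ref{lem:KKT} (a Carath\'eodory/active-constraint decomposition for the semi-infinite program) to write each $\p^{*}\in\bar\P$ as a conic combination of at most $d_{\y}$ active constraints at a maximizer over $\hat\Y_{\bar\P}(\e)$, then uses convexity of $\pi(\cdot,\e)$ and the multiplier bound $\sum_{\p}\lambda_{\p}\leq(R_{\bar\P}(\e)+\eta_{\bar\P}(\e))/(r_{\bar\P}(\e)-\eta_{\bar\P}(\e))$. You instead use $\pi(\p,\e)\geq r_{\bar\P}(\e)\norm{\p}$ to convert the additive error $\eta_{\bar\P}(\e)\norm{\p}$ into the multiplicative sandwich $(1-\eta_{\bar\P}(\e)/r_{\bar\P}(\e))\Y_{\bar\P}(\e)\subseteq\hat\Y_{\bar\P}(\e)\subseteq(1+\eta_{\bar\P}(\e)/r_{\bar\P}(\e))\Y_{\bar\P}(\e)$, which upon taking support functions gives $\abs{h_{\Y_{\bar\P}(\e)}-h_{\hat\Y_{\bar\P}(\e)}}\leq(\eta_{\bar\P}(\e)/r_{\bar\P}(\e))\,h_{\Y_{\bar\P}(\e)}\leq\eta_{\bar\P}(\e)R_{\bar\P}(\e)/r_{\bar\P}(\e)$ on normalized prices. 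This is shorter, dispenses with Lemmas~\ref{lem:KKT} and~\ref{lem:support func converg} entirely, and in fact yields the sharper constant $\eta_{\bar\P}(\e)R_{\bar\P}(\e)/r_{\bar\P}(\e)$, which implies the stated bound since the remaining factor exceeds one on the event $\eta_{\bar\P}(\e)<r_{\bar\P}(\e)$. Two small points to fix in the write-up: the bound $h_{\Y_{\bar\P}(\e)}(\p/\norm{\p})\leq R_{\bar\P}(\e)+\eta_{\bar\P}(\e)$ in Step~2 should simply be $R_{\bar\P}(\e)$ (this only loosens your estimate, so nothing breaks), and the proposed ``alternative'' via $\Y_{\bar\P}(\e)=\hull(\mathrm{ext}\,\Y_{\bar\P}(\e))+C_{\bar\P}$ is not generally available --- if $\bar\P$ does not span $\Real^{d_{\y}}$ the set contains lines and has no extreme points (a single half-space already defeats it), and even in the line-free case the convex hull of the extreme points need not be compact --- so the support-function route is the one to keep.
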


\section{Conclusion}\label{sec:conclusion}
In this paper we provide an update to classical duality theory in order to identify heterogeneous production sets in the presence of endogeneity, measurement error, omitted prices, and unobservable quantities. Our framework's main strength is to unpack rich heterogeneity as well as rich substitution/complementarity patterns with market level variation, using values of optimization problems. We achieve this by exploiting all shape constraints imposed by the economic environment we consider. This includes a key restriction that firms can be ranked in terms of productivity, and there are finitely many types of firms. Our identification results are constructive and can be applied in many available data sets.

\section*{Acknowledgments}

We thank the editor, the associate editor, and three anonymous referees for their comments and suggestions. We are grateful to Paul Grieco, Lance Lochner, Rosa Matzkin, Salvador Navarro, David Rivers, Susanne Schennach, Holger Sieg, and Al Slivinsky for useful comments and encouragement. We also thank the ceminar participants at Duke University, University of Montreal, McMaster University, and attendants of NASMES 2019, Empirical Microeconomics Workshop at University of Calgary, MEG 2019, CESG 2019, NAWMES 2020, vNAPW XI, WARP 2020, and CIREQ Montreal Econometrics Conference.

\bibliographystyle{econ}
\phantomsection\addcontentsline{toc}{section}{\refname}\bibliography{ref}
\appendix

\counterwithin{thm}{section}
\counterwithin{prop}{section}
\counterwithin{lem}{section}
\counterwithin{cor}{section}
\counterwithin{assumption}{section}
\counterwithin{defn}{section}
\counterwithin{rem}{section}

\setcounter{equation}{0}
\renewcommand{\theequation}{\thesection.\arabic{equation}}

\section{Proofs of Main Results}\label{app: proofs}
\subsection{Proof of Lemma~\ref{lemma: rich support}}
Fix $\yr^*$ and $\pu^*$. By homogeneity of degree 1 of the restricted profit function in prices and Assumption~\ref{assm:strict monot}, 
\[
\Delta\pi_r(\yr^*,\lambda\pu^*,\e)=\lambda\Delta\pi_r(\yr^*,\pu^*,\e)>0
\]
for every $e$ and $\lambda>0$. Since $\cup_{\lambda>0}\{\lambda\pu^*\}$ in the conditional support, we always can find $\lambda$ large enough and $\e^*$ such that Assumption~\ref{assm: measurement error}(ii) is satisfied.

\subsection{Proof of Theorem~\ref{prop: profit func discret I}}
First, note that since the support of $\rands{\eta}$ is a connected set (Assumption~\ref{assm: measurement error}(i)) and $\rand{e}$ is discrete, the conditional support of $\rands{\pi}_r$ conditional on $\rand{\y}_{-z}=\yr$ and $\rand{\p}_{z}=\pu$ is a union of connected sets for all $\yr$ and $\pu$ in their joint support. Hence, we can find the shortest (with respect to Lebesgue measure) isolated connected segment of the support for every $\yr$ and $\pu$. Next, among those short segments we can find the shortest one. By construction this segment will correspond to $(\yr^*,\pu^*,\e^*)$ from Assumption~\ref{assm: measurement error}(ii). As a result, under Assumption~\ref{assm: measurement error}, we can find an interval $[a,b]$ in the support of $\rands{\pi}_r$ conditional on  $\rand{\y}_{-z}=\yr^*$, $\rand{e}=\e^*$, and $\rand{\p}_{z}=\pu^*$ such that 
\[
\Prob{a\leq\pi_r(\yr^*,\pu^*,\e^*)+\rands{\eta}\leq b}=1
\]
and 
\[
\Prob{a\leq\pi_r(\yr^*,\pu^*,\e)+\rands{\eta}\leq b}=0
\]
for any $\e\neq\e^*$.
Hence, we identify 
\[
\pi_r(\yr^*,\pu^*,\e^*)=\Exp{\rands{\pi}_r|a\leq\rands{\pi}_r\leq b, \rand{\y}_{-z}=\yr^*, \rand{e}=\e^*, \rand{\p}_{z}=\pu^*},
\]
where we leverage that $\rands{\eta}$ has mean zero even after conditioning.

Thus, we can also recover the distribution of $\rands{\eta}$ by subtracting the identified $\pi_r(\yr^*,\pu^*,\e^*)$ from the known distribution of $\rands{\pi}_r|a\leq\rands{\pi}_r\leq b, \rand{\y}_{-z}=\yr^*, \rand{e}=\e^*, \rand{\p}_{z}=\pu^*$. Since $\rands{\eta}$ and $\pi_r(\rand{\y}_{-z},\rand{\p}_{z},\rand{\e})$ have bounded support and are independent conditional on $\rand{\y}_{-z}=\yr$ and $\rand{\p}_{z}=\pu$, we can constructively identify the moment generating function of  $\pi_r(\rand{\y}_{-z},\rand{\p}_{z},\rand{\e})$ conditional on $\rand{\y}_{-z}=\yr$ and $\rand{\p}_{z}=\pu$ as the ratio of the moment generating functions of $\rands{\pi}_r$ conditional on $\rand{\y}_{-z}=\yr$ and $\rand{\p}_{z}=\pu$  and $\rands{\eta}$. Since the distribution of $\pi_r(\rand{\y}_{-z},\rand{\p}_{z},\rand{\e})$ conditional on $\rand{\y}_{-z}=\yr$ and $\rand{\p}_{z}=\pu$ is discrete, its moment generating function is sufficient for its identification. Note that the moment generating function of $\rands{\eta}$ is well-defined and is never equal to zero since $\rands{\eta}$ is a bounded random variable. 
\par
Assumption~\ref{assm: monotone selection} implies that whenever a type $\e$ occurs with positive probability conditional on $\yr$ and $\pu$, then higher types also occur with positive probability. Assumption~\ref{assm:strict monot} then implies that the ranking over restricted profits is equivalent to the ranking over productivity $e$. As a result, if some firm of type $\e$ does not operate given $\yr$ and $\pu$, then it has to be a low type. Let $\Pi_r(\yr,\pu)$ be the support of $\pi_r(\rand{\y}_{-z},\rand{\p}_{z},\rand{\e})$ conditional on $\rand{\y}_{-z}=\yr$ and $\rand{\p}_{z}=\pu$. Fix some $\yr$ and $\pu$. Since the support of $\rand{\e}$ is finite, the set $\Pi_r(\yr,\pu)$ will also be finite. As a result, Assumption~\ref{assm:strict monot} implies that
\[
\pi_r(\yr,\pu,d_{e})=\max\left[\Pi_r(\yr,\pu)\right].
\]
That is, the most productive firm will make more profits than any other firm. Note that the firm with productivity $e=d_{e}-1$, if it is present in the market, will be the second one in terms of restricted profits : 
\[
\pi_r(\yr,\pu,d_{e}-1)
=\max\left[\Pi_r(\yr,\pu,s)\setminus\{\pi_r(\yr,\pu,d_{e})\}\right].
\]
In general, given $\yr$ and $\pu$, if the firm with productivity $e$ operates ($\abs{\Pi_r(\yr,\pu)}>d_{e}-e$), then
\[
\pi_r(\yr,\pu,e)
=\max\left[\Pi_r(\yr,\pu)\setminus\bigcup_{e'>e}\{\pi_r(\yr,\pu,e')\}\right].
\]
Note that we may not be able to identify the structural restricted profit function for arguments in which $\e$ is too low.

\subsection{Proof of Theorems~\ref{thm: attributes and profits}}
Fix some $x_{-1}$, and take $\y^*_{z}$ from the statement of the theorem and  $\e^{**}\in\E$ from condition (ii). We abuse notation and drop $\e^{**}$ and $\y^*_{-z}$. By homogeneity of degree 1 of $\pi_r(\cdot)$ we have that for every $\x$
\begin{align}\label{eq:1}
    \sum_{j=1}^{d_{\y_z}}\partial_{g_j}\pi_r(g(\x))g_{j}(\x_j)=\pi_r(g(\x))\,.
\end{align}
Moreover, since $\tilde\pi_r(\x)=\pi_r(g(\x))$ (recall that we dropped $\e^{**}$ and $\y^*_{-z}$ from the notation) and $\partial_{x_j}g_k(x_k)=0$ for $j\neq k$, we have that
\begin{align}\label{eq:2}
    \partial_{\x_j}\tilde\pi_r(\x)=\sum_{k}\partial_{g_k}\pi_r(g(\x))\partial_{x_j}g_{k}(\x_k)=\partial_{g_j}\pi_r(g(\x))\partial_{x_j}g_{j}(\x_j)\,,
\end{align}
for every $j=1,\dots,d_{\y_z}$.
Combining (\ref{eq:1}) and (\ref{eq:2}) we get that
\begin{align*}
     \sum_{j=1}^{d_{\y_z}}\partial_{\x_j}\tilde\pi_r(\x)\dfrac{1}{\partial_{\x_j}(\log(g_j(\x_j)))}=\tilde\pi_r(\x)
\end{align*}
as long as $0< \left| \dfrac{\partial_{\x_j}g_j(\x_j)}{g_j(\x_j)} \right| <\infty$ for every $j=1,\dots,d_{\y_z}$. This latter condition is satisfied for almost every $x_j$ with respect to Lebesgue measure by Assumption~\ref{assm:hedonic function}(iv). From Assumption~\ref{assm:hedonic function}(i), $g_1(x_1)=x_1$, so we obtain that
\begin{align}\label{eq:3}
     \sum_{j=2}^{d_{\y_z}}\partial_{\x_j}\tilde\pi_r(\x)\dfrac{1}{\partial_{\x_j}(\log(g_j(\x_j)))}=\tilde\pi_r(\x)-\partial_{\x_1}\tilde\pi_r(\x)x_1.
\end{align}
\par
Let $\tilde{t}=\left(\dfrac{1}{\partial_{\x_j}(\log(g_j(\x_j)))}\right)_{j=2,\dots,d_{\y_z}}$. Note that $\tilde{t}$ does not depend on $x_{1}$. Since $\tilde\pi_r$ satisfies the rank condition there exists a nonsingular $A(\tilde\pi_r(x^*))$ and $b$ such that equation (\ref{eq:3}) can be rewritten as
\begin{align}
    A\tilde{t}=b\,,
\end{align}
where $b=(b_l)_{l=1,\dots,d_{\y_z}-1}$ and $b_l=\tilde\pi_r(\x^*_{l})-\partial_{\x_{1}}\tilde\pi_r(\x^*_{l})t_l$. Since $A(\tilde\pi_r(x^*))$ is of full rank and is identified, and $b$ is identified, $\tilde{t}$ is identified. Since the choice of $x_{-1}$ was arbitrary and we know the location (Assumption~\ref{assm:hedonic function}(ii)), we identify $g_{j}(\cdot)$ for every $j=1,\dots,d_{\y_z}$.

\subsection{Proof of Theorem~\ref{thm:identification of Y}}
It is immediate that $\tilde\Y(\e)$ is closed, convex, and satisfies free disposal for every $\e\in\E$. Moreover, $\max_{\y\in\tilde\Y(\e)}\p\tr\y=\pi(\p,\e)$ for every $\p\in\P(\e)$ and $\e\in\E$. Thus, conclusion (i) follows from the fact that $\pi(\p,\e)$ is identified for each $\p \in \P(\e)$ and $\e\in\E$ by Theorem~\ref{prop: profit func discret I}. 
\par
To establish conclusion (ii), recall that under the assumptions of Theorem~\ref{prop: profit func discret I}, any given production set $\Y'(\e)$ can generate the data if and only if $\max_{\y\in\Y'(\e)}\p\tr\y=\pi(\p,\e)$ for every $\p\in\P(\e)$. The set $\tilde{\Y}(\e)$ is constructed as the largest set (not necessary production set) consistent with profit maximization. This set is closed, convex, and satisfies free disposal. Since a production correspondence \textit{also} must satisfy the recession cone property, we obtain that $\Y'(\e) \subseteq \tilde{\Y}(\e)$.
\par
To prove (iii), note that since $\pi(\cdot,\e)$ is homogeneous of degree $1$ for every $\e\in\E$ we can identify $\pi(\cdot,\e)$ over 
\[
\bigcup_{\lambda > 0 }\left\{\lambda \p\::\:\p\in\P(\e)\right\}\,.
\] 
Next, since $\pi(\cdot,\e)$ is convex it is continuous, hence it is identified over 
\[
\mathrm{int}\left(\mathrm{cl}\left(\bigcup_{\lambda > 0 }\left\{\lambda \p\::\:\p\in\P(\e)\right\}\right)\right)\,.
\]
When Assumption~\ref{assm:rich price} holds, identification of $\Y(\cdot)$ follows from Corollary 9.18 in \citet{kreps2012}.

\subsection{Proof of Proposition~\ref{prop: identified y}}
Fix some $\e\in\E$. To simplify notation we drop $\e$ from the objects below (e.g. $\pi(\p,\e)=\pi(\p)$ and $\y_p(\e)=\y_p$).
Suppose $\{\y_p\}_{\p\in\P}$ can generate $\{\pi(\p)\}_{\p\in\P}$. Since  $\{\y_p\}_{\p\in\P}$ are profit-maximizing output/input vectors we must have $\p\tr\y_{\p}=\pi(\p)$. To prove that $\p^{* \prime}\y_{\p^{* \prime}}\geq \p^{* \prime}\y_{p}$ for all $\p,\p^{*}\in\P$, assume the contrary. But then $\y_{\p_{*}}$ is not maximizing profits at $\p^{*}$ since $\y_p$ is available. The contradiction proves necessity.
\par
To prove sufficiency consider 
\[
\Y^*=\hull(\{\y_p\}_{\p\in\P})+\Real^{d_\y}_{-}\,,
\]
where $\hull(A)$ denotes the convex hull of a set $A$, i.e. the smallest convex set containing $A$. The summation is the Minkowski sum. $\Y^*$ is sometimes referred to as the free-disposal convex hull of $\{\y_p\}_{\p\in\P}$. In particular, note that $\Y^*$ is convex, closed, and satisfies free disposal.

We obtain that for every $p\in\Real^{d_\y}_{++}\cap\mathbb{S}^{d_\y-1}$,
\[
\sup_{\y\in\Y^*}\p\tr\y=\sup_{\y\in\hull(\{\y_p\}_{\p\in\P})}\p\tr\y+\sup_{\y\in\Real^{d_{\y}}_{-}}\p\tr\y=\sup_{\y\in\hull(\{\y_p\}_{\p\in\P})}\p\tr\y\,.
\]
Because $\P$ is finite, $\{\y_p\}_{\p\in\P}$ is bounded. Thus, its convex hull $\hull(\{\y_p\}_{\p\in\P})$ is also bounded. This implies that $\sup_{\y\in\Y'}\p\tr\y$ is finite for every $p\in\Real^{d_\y}_{++}\cap\mathbb{S}^{d_\y-1}$, hence the recession cone property is satisfied for the set $\Y^*$.\footnote{We note that \citet{varian1984nonparametric} studies a result related to this proposition, taking as primitives a deterministic dataset of prices and quantities. He does not verify the recession cone property.} 
\par
It is left to show that 
\[
\pi(\p,\e)=\p\tr\y_{\p}=\sup_{\y\in\Y^*}\p\tr\y
\]
for every $\p\in\P\cap\mathbb{S}^{d_\y-1}$. The first equality is assumed. Suppose the second equality is not true for some $\p^{*}$. Then there exists $\tilde\y\in\Y^*$ such that $\p^{* \prime}\y_{\p^{*}}<\p^{* \prime}\tilde\y$. Since $\tilde\y\in\Y^*$ it can be represented as a finite convex combination of points from $\{\y_\p\}_{\p\in\P}$. But since
\[
\p^{* \prime} \y_{\p^{*}}\geq\p^{* \prime} \y_\p\,,
\]
for all $\p,\p^{*}\in\P$ it has to be the case that 
\[
\p^{* \prime}\y_{\p^{*}}\geq \p^{* \prime}\tilde\y.
\]
The contradiction completes the proof. Since the choice of $\e$ was arbitrary the result holds for all $\e\in\E$.
\subsection{Proof of Theorem~\ref{thm:consistency} and Proposition~\ref{prop:nonconvex}}
The Hausdorff distance between two convex sets $A, B \subseteq \Real^{d_\y}$ is given by
\[
d_H (A,B) = \max \left\{ \sup_{a \in A } \inf_{b \in B} \| a - b \|, \sup_{b \in B } \inf_{a \in A} \| a - b \| \right\}\,.
\]
Alternatively, the Hausdorff distance can be defined as 
\[
d_H (A,B) = \inf\{\rho\geq 0\::\:A\subseteq B+\rho \mathbb{B}^{d_\y-1},B\subseteq A+\rho \mathbb{B}^{d_\y-1}\}\,,
\]
where $\mathbb{B}^{d_\y-1}=\{\y\in\Real^{d_\y}\::\:\norm{\y}\leq 1\}$ is the unit ball and $\inf\{\emptyset\}=\infty$.
The support function of a closed convex set $A$ is defined for $u \in \mathbb{R}^{d_\y}$ via $h_A(u) = \sup_{w \in A} u'w$. If $A$ is unbounded in direction $u$, then $h_A(u)=\infty$.
\par
As preparation, we need a technical lemma. This lemma involves a polar cone, which for a set $C$ is defined by
\[
\mathrm{PolCon}(C) = \{u\in\Real^{d_\y}\::\:u\tr\p\leq 0,\,\forall\p\in C\}.
\]
\begin{lem} \label{lem:polar}
Let $\bar\P\subseteq\mathbb{S}^{d_\y-1}$ be a closed set such that $\cup_{\lambda>0}\{\lambda\p,\p\in\bar\P\}$ is a closed, convex cone, and let $a:\Real^{d_\y}\to\Real$ be a convex, homogeneous of degree 1 function. Define
\[
A=\{\y\in\Real^{d_\y}\::\:\p\tr\y\leq a(\p),\,\forall\p\in\bar\P\}.
\]
If $\mathrm{PolCon}(\bar\P)$ is nonempty, then for any $u\in \mathbb{S}^{d_\y-1}$,
\[
h_{A}(u)=\begin{cases}
a(u),& \text{if } u\in\bar\P,\\
+\infty,& \text{otherwise}.
\end{cases}
\]
\end{lem}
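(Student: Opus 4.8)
\emph{Plan of proof.} The strategy is to pass to the closed convex cone $C=\bigcup_{\lambda>0}\{\lambda\p:\p\in\bar\P\}$ generated by $\bar\P$ and then split into the cases $u\in\bar\P$ and $u\notin\bar\P$. I would first dispatch three preliminary observations. (a) Since $a$ is a finite convex function on all of $\Real^{d_\y}$, it is continuous and its subdifferential is nonempty at every point; in particular, taking any $y^{\ast}\in\partial a(0)$ gives $\p\tr y^{\ast}\le a(\p)$ for all $\p$, so $A$ is nonempty. (b) By homogeneity of degree $1$ of $a$, the inequality $\p\tr\y\le a(\p)$ is invariant under $\p\mapsto\lambda\p$ with $\lambda>0$, so $A=\{\y:q\tr\y\le a(q),\,\forall q\in C\}$; hence, since $A$ is a nonempty closed convex set, its recession cone equals $\{v:\p\tr v\le0,\,\forall\p\in\bar\P\}=\mathrm{PolCon}(\bar\P)=\mathrm{PolCon}(C)$. (c) Because $\bar\P\subseteq\mathbb{S}^{d_\y-1}$, a unit vector $u$ belongs to $C$ if and only if $u\in\bar\P$ (rescaling a unit vector of $\bar\P$ to norm one returns that same vector).

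For the case $u\in\bar\P$, the bound $h_A(u)=\sup_{\y\in A}u\tr\y\le a(u)$ is immediate because $u\tr\y\le a(u)$ is one of the defining constraints of $A$. For the matching lower bound I would pick $y^{\ast}\in\partial a(u)$: plugging $z=\lambda u$ into the subgradient inequality and using homogeneity yields $u\tr y^{\ast}=a(u)$, while the subgradient inequality at an arbitrary $q$ gives $q\tr y^{\ast}\le a(q)$, so in particular $y^{\ast}\in A$. Therefore $h_A(u)\ge u\tr y^{\ast}=a(u)$, and equality follows.

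The substantive case is $u\in\mathbb{S}^{d_\y-1}\setminus\bar\P$, where by (c) $u\notin C$. Here I would strictly separate the point $u$ from the closed convex set $C$: there exist $w\ne0$ and $\alpha\in\Real$ with $w\tr u>\alpha\ge w\tr q$ for all $q\in C$. Since $C$ is a cone containing $0$ we get $\alpha\ge0$, and if $w\tr q_{0}>0$ for some $q_{0}\in C$ then $w\tr(tq_{0})\to\infty$ contradicts the bound; hence $w\tr q\le0$ for all $q\in C$, i.e.\ $w\in\mathrm{PolCon}(C)=\mathrm{PolCon}(\bar\P)$ with $w\tr u>\alpha\ge0$. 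By (b), $w$ is a recession direction of $A$, so starting from any $y_{0}\in A$ (nonempty by (a)) we have $y_{0}+tw\in A$ for all $t\ge0$ and $u\tr(y_{0}+tw)=u\tr y_{0}+t\,(u\tr w)\to+\infty$, so $h_A(u)=+\infty$.

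I expect the third paragraph to be the delicate point: the hypothesis that $C$ is closed and convex is used precisely to obtain a \emph{strictly} separating hyperplane, and one must be careful to convert the separating functional into a genuine element of $\mathrm{PolCon}(\bar\P)$ (exploiting that $C$ is a cone) and to know that $A\ne\emptyset$ so that a recession direction actually certifies $h_A(u)=+\infty$. The hypothesis that $\mathrm{PolCon}(\bar\P)$ be nonempty serves to guarantee a nontrivial recession cone when $\bar\P$ is a proper subset of the sphere; note $0\in\mathrm{PolCon}(\bar\P)$ always, and when $u\notin\bar\P$ the separation argument itself produces a \emph{nonzero} element of $\mathrm{PolCon}(\bar\P)$.
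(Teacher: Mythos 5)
Your proof is correct and follows essentially the same route as the paper's: the same case split on $u\in\bar\P$ versus $u\notin\bar\P$, with the first case handled via convexity and homogeneity of $a$ and the second by producing a nonzero element of $\mathrm{PolCon}(\bar\P)$ having positive inner product with $u$ and riding it to $+\infty$ along a recession direction of $A$. The only (cosmetic) differences are that you invoke strict separation directly where the paper cites the bipolar theorem, and you supply details the paper leaves implicit, namely the subgradient construction showing $h_A(u)=a(u)$ is attained and the verification that $A\neq\emptyset$.
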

\begin{proof}
Case 1. Take $u\in\bar\P$. Since $a(\cdot)$ is convex and homogeneous of degree 1 $h_A(u)=a(u)$.
\par
Case 2. Take $u\in\mathbb{S}^{d_\y-1}\setminus\bar\P$. First, we establish that there always exists $u^*\in\mathrm{PolCon}(\bar\P)$ such that $u\tr u^*>0$. To prove this suppose to the contrary that for every $u^*\in\mathrm{PolCon}(\bar\P)$, $u\tr u^*\leq 0$, it follows that $u\in \mathrm{PolCon}(\mathrm{PolCon}(\bar\P))$. The latter is not possible, since $\mathrm{PolCon}(\mathrm{PolCon}(\bar\P))$ is the smallest closed convex cone containing $\bar\P$ (\citealp{rock70}, Theorem 14.1), and $u\not\in\bar\P$ by assumption.

For some $u^*$ that satisfies $u\tr u^*>0$, consider $\y^m=\y^0+mu^*$, $m=1,2,\dots$, where $\y^0$ is an arbitrary point from $A$. Since $u^*\in\mathrm{PolCon}(\bar\P)$, by construction $u^{* \prime} p \leq 0$ for all $\p\in\bar\P$. Using this fact, note that $\y^m\in A$ for all $m=1,2,\dots$ since
\[
\p\tr\y^m=\p\tr\y^0+m u^{*\prime}\p\leq a(\p)+0
\]
for all $\p\in\bar\P$. Finally,
\[
h_A(u)\geq u\tr\y^m=u\tr\y^0+mu\tr u^*
\]
diverges to $+\infty$, since $u\tr u^*>0$.
\end{proof}
\par
We now provide a key lemma. This result generalizes a classical result that holds for $\bar\P = \mathbb{S}^{d_\y-1}$. To our knowledge this result is new, and it may be of independent interest.
\begin{lem}\label{lem:truncation}
Let $d_{\y}\geq 2$ and let the functions $a,b:\Real^{d_{\y}}_{++}\to\Real$ be convex and homogeneous of degree $1$. Define
\begin{align*}
    A&=\left\{\y\in\Real^{d_{\y}}\::\:\p\tr\y\leq a(\p),\:\forall \p\in \bar\P \right\}\,,\\
    B&=\left\{\y\in\Real^{d_{\y}}\::\:\p\tr\y\leq b(\p),\:\forall \p\in \bar\P \right\}\,,
\end{align*}
where $\bar\P\subseteq\Real^{d_\y}_{++}$ is convex and compact. Then
\[
d_H(A,B)=\sup_{\p\in\bar\P}\norm{a(\p/\norm{\p})-b(\p/\norm{\p})}\,.
\]
\end{lem}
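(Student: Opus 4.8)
The plan is to read off the support functions of $A$ and $B$ from Lemma~\ref{lem:polar} and then invoke the standard fact that inclusion of closed convex sets is equivalent to pointwise domination of support functions.

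First I would reduce to $\bar\P\subseteq\mathbb{S}^{d_\y-1}$. Since $a$ and $b$ are positively homogeneous of degree $1$ and the defining inequalities $\p\tr\y\le a(\p)$ and $\p\tr\y\le b(\p)$ are positively homogeneous in $\p$, replacing each $\p\in\bar\P$ by $\p/\norm{\p}$ leaves $A$, $B$, and the right-hand side of the asserted identity unchanged. Moreover, the conical hull of $\bar\P$ is a closed convex cone (as $\bar\P$ is compact, convex, and does not contain the origin), and $\mathrm{PolCon}(\widehat{\bar\P})$ is nonempty, containing $-(1,\dots,1)\tr$, where $\widehat{\bar\P}:=\{\p/\norm{\p}:\p\in\bar\P\}$; so the hypotheses of Lemma~\ref{lem:polar} hold for $\widehat{\bar\P}$ (extending $a,b$ to $\Real^{d_\y}$ by positive homogeneity and lower semicontinuity if one wants them defined everywhere, which is harmless since only their values on $\widehat{\bar\P}\subseteq\Real^{d_\y}_{++}$ matter). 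Henceforth I write $\bar\P$ for $\widehat{\bar\P}$.

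Applying Lemma~\ref{lem:polar} to $a$ and to $b$ yields, for $u\in\mathbb{S}^{d_\y-1}$,
\[
h_A(u)=\begin{cases}a(u),& u\in\bar\P,\\ +\infty,&\text{otherwise},\end{cases}\qquad
h_B(u)=\begin{cases}b(u),& u\in\bar\P,\\ +\infty,&\text{otherwise}.\end{cases}
\]
I would note that $A$ and $B$ are nonempty --- both contain $-M(1,\dots,1)\tr$ for $M$ large, using that $\bar\P$ is a compact subset of $\Real^{d_\y}_{++}$ and that $a,b$ are bounded below on it --- and that $\rho^{*}:=\sup_{u\in\bar\P}\norm{a(u)-b(u)}$ is finite, since convex functions are continuous on the open orthant and $\bar\P$ is compact. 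Then I would use the standard fact that for a nonempty closed convex set $A$ and closed convex set $D$, $A\subseteq D$ iff $h_A(u)\le h_D(u)$ for all $u$ (equivalently for all unit $u$). Applied with $D=B+\rho\mathbb{B}^{d_\y-1}$, which is closed and convex with $h_D=h_B+\rho\norm{\cdot}$, this gives: $A\subseteq B+\rho\mathbb{B}^{d_\y-1}$ iff $h_A(u)\le h_B(u)+\rho$ for all unit $u$; for $u\notin\bar\P$ this is $+\infty\le+\infty$, which is automatic, and for $u\in\bar\P$ it is $a(u)\le b(u)+\rho$. Hence $A\subseteq B+\rho\mathbb{B}^{d_\y-1}$ iff $\rho\ge\sup_{u\in\bar\P}(a(u)-b(u))$, and symmetrically $B\subseteq A+\rho\mathbb{B}^{d_\y-1}$ iff $\rho\ge\sup_{u\in\bar\P}(b(u)-a(u))$.

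Finally, feeding this into the definition $d_H(A,B)=\inf\{\rho\ge0:A\subseteq B+\rho\mathbb{B}^{d_\y-1},\ B\subseteq A+\rho\mathbb{B}^{d_\y-1}\}$ recalled above gives $d_H(A,B)=\max\{\sup_{u\in\bar\P}(a(u)-b(u)),\sup_{u\in\bar\P}(b(u)-a(u))\}=\rho^{*}=\sup_{u\in\bar\P}\norm{a(u)-b(u)}$, which is the claim once the normalization is undone. I expect the main obstacle to be precisely the unboundedness of $A$ and $B$: their support functions are $+\infty$ off $\bar\P$, so one cannot write $d_H(A,B)=\sup_u\norm{h_A(u)-h_B(u)}$ directly. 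Routing the argument through the inclusion characterization $A\subseteq B+\rho\mathbb{B}^{d_\y-1}$ --- which tolerates infinite support-function values and renders the off-$\bar\P$ directions vacuous --- is the device that handles this, with Lemma~\ref{lem:polar} supplying the exact form of the support functions.
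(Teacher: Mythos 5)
Your proposal is correct and follows essentially the same route as the paper's proof: both reduce to the normalized set $\{\p/\norm{\p}:\p\in\bar\P\}$, invoke Lemma~\ref{lem:polar} to compute the support functions of $A$ and $B$ (finite on the normalized $\bar\P$, $+\infty$ elsewhere), and then characterize $d_H(A,B)$ through the inclusion criterion $A\subseteq B+\rho\mathbb{B}^{d_\y-1}$, under which the off-$\bar\P$ directions become vacuous. Your added remarks on nonemptiness of $A$ and $B$ and finiteness of the supremum are harmless refinements of the same argument.
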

\begin{proof}
For closed convex sets $C,D\subseteq\Real^{d_\y}$ the following is true: $C\subseteq D$ if and only if $h_{C}(u)\leq h_{D}(u)$ for all $u\in\mathbb{S}^{d_\y-1}$. Hence,
\begin{align*}
&\{\rho\in\Real_{+}\::\:A\subseteq B+\rho\mathbb{B}^{d_\y-1},B\subseteq A+\rho\mathbb{B}^{d_\y-1}\} \iff\\ &\{\rho\in\Real_{+}\::\:h_A(u)\leq h_{B+\rho\mathbb{B}^{d_\y-1}}(u),h_B(u)\leq h_{A+\rho\mathbb{B}^{d_\y-1}}(u),\forall u\in\mathbb{S}^{d_\y-1}\}\,.
\end{align*}
Because $\bar{P}$ is a subset of $\Real^{d_\y}_{++}$, its polar cone $\mathrm{PolCon}(\overline{P})$ is nonempty; in particular the polar cone contains the negative unit vector $(-1,\ldots, -1)\tr$. The set $\bar{P}$ satisfies the conditions of Lemma~\ref{lem:polar}, and so we obtain that $h_A(u)=h_{B+\rho\mathbb{B}^{d_\y-1}}(u)=h_B(u)=h_{A+\rho\mathbb{B}^{d_\y-1}}(u)=\infty$ for all $u\in\mathbb{S}^{d_\y-1}\setminus\{p/\norm{p}\,,\,\p\in\bar\P\}$. Hence,  
\begin{align*}
\{\rho\in\Real_{+}& \::\:A\subseteq B+\rho\mathbb{B}^{d_\y-1},B\subseteq A+\rho\mathbb{B}^{d_\y-1}\} \\
& = \{\rho\in\Real_{+}\::\:\:h_A(u)\leq h_{B+\rho\mathbb{B}^{d_\y-1}}(u), \\
& \qquad \qquad h_B(u)\leq h_{A+\rho\mathbb{B}^{d_\y-1}}(u),\forall u\in\{p/\norm{p}\::\p\in\bar\P\}\}\\
& = \{\rho\in\Real_{+}\::\:\:h_A(u)\leq h_{B}(u)+h_{\rho\mathbb{B}^{d_\y-1}}(u), \\
& \qquad \qquad h_B(u)\leq h_{A}(u)+h_{\rho\mathbb{B}^{d_\y-1}}(u),\forall u\in\{p/\norm{p}\::\:\p\in\bar\P\}\}\\
& = \{\rho\in\Real_{+}\::\:h_A(u)\leq h_{B}(u)+\rho,h_B(u)\leq h_{A}(u)+\rho,\forall u\in\{p/\norm{p}\::\p\in\bar\P\}\}\\
& = \{\rho\in\Real_{+}\::\:\sup_{u\in\{p/\norm{p}\::\:\p\in\bar\P\}}\norm{h_A(u)-h_{B}(u)}\leq \rho\}\,.
\end{align*}
Now note that $a(\p)$ and $b(\p)$ are values of the support functions of $A$ and $B$ evaluated at $p\in\bar\P$, respectively, since $a(\cdot)$ and $b(\cdot)$ are homogeneous of degree $1$ and convex. Thus,
\[d_{H}(A,B)=\sup_{\p\in\bar\P}\norm{a(\p/\norm{\p})-b(\p/\norm{\p})}.
\]
\end{proof}
To prove Theorem~\ref{thm:consistency} note that since $\pi(\cdot,\e)$ and $\hat\pi(\cdot,\e)$ are homogeneous of degree $1$, we have
\begin{align*}
    \pi(\p,\e)/\norm{\p}&=\pi\left(\p/\norm{\p},\e\right)\,,\\
    \hat\pi(\p,\e)/\norm{\p}&=\hat\pi\left(\p/\norm{\p},\e\right)\,,
\end{align*}
for all $p\in\bar\P$ and $\e\in\E$. Thus, Theorem~\ref{thm:consistency} is obtained as corollary.

We now turn to the proof of Proposition~\ref{prop:nonconvex}. We first present two lemmas, which are modifications of Lemmas 6 and 7 in \citet{brunel2016concentration}.
\begin{lem}\label{lem:KKT}
Assume that $\bar\P\subseteq\mathbb{S}^{d_\y-1}\bigcap\P$ is compact and $\cup_{\lambda>0}\{\lambda\p\::\:\p\in\bar\P\}$ is convex.
Let $a:\bar\P\to\Real$ be a continuous function. Let $A=\{\y\in\Real^{d_\y}\::\:\p\tr\y\leq a(\p),\,p\in\bar\P\}$ be nonempty. It follows that for all $\p^*\in\bar\P$ there exists $\y^*\in A$ such that $h_{A}(\p^*)=\p^{*\prime}\y^*$. Moreover, there exists $\P^*\subseteq\bar\P$ such that
\begin{enumerate}
    \item The cardinality of $\P^*$ is less than or equal to $d_{\y}$;
    \item $\p\tr\y^*=a(\p)$ for all $p\in\P^*$;
    \item $\p^*=\sum_{\p\in\P^*}\lambda_{\p}\p$ for some nonnegative numbers $\lambda_{p}$.
\end{enumerate}
\end{lem}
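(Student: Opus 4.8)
The plan is to read the claim as a statement about a linear program with a compactly indexed family of constraints, and to prove it through a Karush--Kuhn--Tucker / Farkas-type description of the maximizer followed by Carath\'eodory's theorem for conical hulls. It is convenient to fix notation for the eventual maximizer $\y^*$: write $g(\p)=a(\p)-\p\tr\y^*$ for the slack and $I(\y^*)=\{\p\in\bar\P:g(\p)=0\}$ for the active index set.

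First I would dispose of attainment. For any $\p^*\in\bar\P$ the inequality $\p^{*\prime}\y\le a(\p^*)$ is one of the constraints defining $A$, so $h_A(\p^*)=\sup_{\y\in A}\p^{*\prime}\y\le a(\p^*)<\infty$. That this supremum is \emph{achieved} is the one point that is not automatic, since $A$ is unbounded; I would obtain it from compactness of $\bar\P$ and continuity of $a$ --- for example by invoking solvability of linear semi-infinite programs (the Haar dual of $\max\{\p^{*\prime}\y:\y\in A\}$ is consistent because $\p^*\in\bar\P$), or directly via the recession cone $0^+A=\{w\in\Real^{d_\y}:\p\tr w\le 0\ \forall\p\in\bar\P\}$, whose interior contains $-\mathbf{1}=(-1,\dots,-1)\tr$ because compactness of $\bar\P$ inside the open positive orthant gives $c:=\inf_{\p\in\bar\P}\sum_j\p_j>0$. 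I expect this to be one of the two delicate points.

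Next I would establish the normal-cone identity $N_A(\y^*)=\operatorname{cone}(I(\y^*))$, where $\operatorname{cone}(S)=\{\sum_k\lambda_k\p_k:\p_k\in S,\ \lambda_k\ge 0\}$. Since $I(\y^*)$ is a closed subset of the compact set $\bar\P\subseteq\Real^{d_\y}_{++}$, its convex hull is compact and omits the origin, so $\operatorname{cone}(I(\y^*))$ is a \emph{closed} convex cone. The inclusion $\operatorname{cone}(I(\y^*))\subseteq N_A(\y^*)$ is immediate from the definition of $A$. For the reverse, take $v\in N_A(\y^*)$ and suppose $v\notin\operatorname{cone}(I(\y^*))$; separating $v$ from this closed convex cone produces $w$ with $w\tr v>0$ and $w\tr\p\le 0$ for all $\p\in I(\y^*)$. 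The naive perturbation $\y^*+tw$ need not remain in $A$ for any $t>0$, because constraints indexed by $\p$ near $I(\y^*)$ can have arbitrarily small slack; the remedy is to perturb along $w-s\mathbf{1}$ for a small $s>0$. On $I(\y^*)$ one has $\p\tr(w-s\mathbf{1})=\p\tr w-s\sum_j\p_j\le-sc<0$, so by continuity this stays negative on a relative neighborhood $U\supseteq I(\y^*)$ in $\bar\P$; on $U$, $\p\tr(\y^*+t(w-s\mathbf{1}))<\p\tr\y^*\le a(\p)$ for every $t>0$, while on $\bar\P\setminus U$ --- compact and disjoint from $I(\y^*)$, hence with slack $g$ bounded below by some $c'>0$ --- feasibility holds for all sufficiently small $t>0$. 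Picking $s$ small enough that $w\tr v-s\,\mathbf{1}\tr v>0$, the point $\y^*+t(w-s\mathbf{1})$ lies in $A$ and strictly increases $v\tr\cdot$, contradicting $v\in N_A(\y^*)$. Since $\y^*$ maximizes the linear functional $\p^{*\prime}\cdot$ over $A$, we conclude $\p^*\in N_A(\y^*)=\operatorname{cone}(I(\y^*))$, i.e.\ $\p^*=\sum_k\lambda_k\p_k$ with $\p_k\in I(\y^*)$ and $\lambda_k\ge 0$.

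Finally, Carath\'eodory's theorem for conical hulls in $\Real^{d_\y}$ lets this representation be taken with at most $d_\y$ nonzero terms; letting $\P^*\subseteq I(\y^*)$ collect the indices that appear yields $|\P^*|\le d_\y$, $\p\tr\y^*=a(\p)$ for all $\p\in\P^*$, and $\p^*=\sum_{\p\in\P^*}\lambda_\p\p$ with $\lambda_\p\ge 0$, which are exactly conclusions (i)--(iii). The main obstacle I anticipate is the interface between active and nearly-active constraints in the separation step: the improving direction has to be tilted by $-\mathbf{1}$ so the perturbed point stays feasible despite vanishing slacks near $I(\y^*)$, and it is precisely there --- together with the closedness of $\operatorname{cone}(I(\y^*))$ and the attainment claim --- that compactness of $\bar\P$ and its strict positivity enter essentially.
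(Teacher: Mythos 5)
Your handling of conclusions (i)--(iii) is correct but takes a genuinely different route from the paper. Where you prove the normal-cone identity $N_A(\y^*)=\mathrm{cone}(I(\y^*))$ from first principles --- closedness of the conical hull of the compact active set bounded away from the origin, separation, and the tilted perturbation direction $w-s\mathbf{1}$ to preserve feasibility against nearly-active constraints --- and then apply Carath\'eodory's theorem for cones, the paper simply cites Theorem 2(b) of \citet{lopez2007semi} for the KKT representation. The only piece the paper proves directly is that the active set $\P'=\{\p\in\bar\P\,:\,\p\tr\y^*=a(\p)\}$ is nonempty: if every slack were positive it would be bounded below by some $\nu>0$ on the compact $\bar\P$, so $\y^*+\nu\p^*$ would remain feasible (using $\p\tr\p^*\leq1$ on the sphere) while strictly improving the objective. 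Your separation argument subsumes that step (if $I(\y^*)=\emptyset$ the cone is $\{0\}$ and your perturbation applies with $U=\emptyset$), so your version is the more self-contained one; the citation buys the paper brevity.

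The genuine gap is the attainment claim, which you flag as delicate but do not close, and neither of your suggested fixes works. Consistency of the Haar dual yields only $h_A(\p^*)\leq a(\p^*)<\infty$, not primal solvability; and $-\mathbf{1}\in\mathrm{int}(0^+A)$ does not make the superlevel sets $\{\y\in A:\p^{*\prime}\y\geq\alpha\}$ compact, since $0^+A=\mathrm{PolCon}(\bar\P)$ can contain nonzero $w$ with $\p^{*\prime}w=0$ whenever $\p^*$ generates a boundary ray of the cone spanned by $\bar\P$. Indeed, for general continuous $a$ the claim fails: take $d_\y=2$, $\bar\P=\{(\cos\theta,\sin\theta):\theta\in[\pi/4,\pi/3]\}$, $\p^*$ at $\theta=\pi/4$, and $a(\theta)=1-\sqrt{\theta-\pi/4}$. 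For any $\y$ the map $\theta\mapsto\p(\theta)\tr\y$ is smooth, so it cannot equal $1$ at $\theta=\pi/4$ while staying below $1-\sqrt{\theta-\pi/4}$; yet feasible points with value $1-\delta$ exist for every $\delta>0$, so $h_A(\p^*)=1$ is not attained. To be fair, the paper's own proof is no stronger here --- it asserts attainment ``by the supporting hyperplane theorem,'' which does not deliver a maximizer over an unbounded $A$ --- and the issue survives into Lemma~\ref{lem:support func converg}, where the result is applied to the non-convex $b_n$. Attainment does hold when $a$ extends to a convex, homogeneous-of-degree-one function on $\Real^{d_\y}_{++}$ (a subgradient at $\p^*$ is a maximizer), so the clean repair is to add that hypothesis; as written, neither your argument nor the paper's establishes the first assertion of the lemma.
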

\begin{proof}
Fix some $\p^*\in\bar\P$. Note that $h_{A}(\p^*)\leq a(\p^*)<\infty$. Since $A$ is closed, by the supporting hyperplane theorem $h_{A}(\p^*)=\p^{*\prime}\y^*$ for some $\y^*\in A$.
\par
The rest of the lemma follows from Theorem 2(b) in \citet{lopez2007semi} if we show that $\P'=\{\p\in\bar\P\::\:\p\tr\y^*=a(\p)\}$ is nonempty. By way of contradiction assume that $\P'$ is empty. Hence,  $\p\tr\y^*<a(\p)$ for all $p\in\bar\P$. Since the function $a(\cdot)-\cdot\tr\y^*$ is strictly positive on a compact $\bar\P$, there exists $\nu>0$ that bounds $a(\cdot)-\cdot\tr\y^*$ from below. Hence, for every $\p\in\bar\P$,
\begin{align*}
    \p\tr(\y^*+\nu\p^*)=\p\tr\y^*+\nu\p\tr\p^*\leq a(\p)-\nu+\nu\p\tr\p^*\leq a(\p)\,.
\end{align*}
Thus, $(\y^*+\nu\p^*)\in A$. But the later is not possible since $\p^*(\y^*+\nu\p^*)=a(\p^*)+\nu>a(\p^*)$ implies that $\y^*$ is not a maximizer. Thus, $P'$ is nonempty.
\end{proof}
\begin{lem}\label{lem:support func converg}
Assume that $\bar\P\subseteq\mathbb{S}^{d_\y-1}\bigcap\P$ is compact and $\cup_{\lambda>0}\{\lambda\p\::\:\p\in\bar\P\}$ is convex.
Let $a:\bar\P\to\Real$ be continuous convex homogeneous of degree $1$ function and $\{b_n:\bar\P\to\Real\}$ be a sequence of continuous homogeneous of degree $1$ functions such that 
\begin{align*}
    A&=\left\{\y\in\Real^{d_{\y}}\::\:\p\tr\y\leq a(\p),\:\forall \p\in \bar\P \right\}\,,\\
    B_n&=\left\{\y\in\Real^{d_{\y}}\::\:\p\tr\y\leq b_n(\p),\:\forall \p\in \bar\P \right\}\,,
\end{align*}
are nonempty for all $n\in\Natural$. Assume that $\eta_n=\sup_{\p\in\bar\P}\norm{a(\p)-b_n(\p)}=o(1)$ and $0<r=\inf_{\p\in\bar\P} a(\p)<R=\sup_{\p\in\bar\P}a(\p)<\infty$. Then there exists $N>0$ such that 
\[
\sup_{\p\in\bar\P}\norm{a(\p)-h_{B_n}(\p)}\leq \tilde{d}_{n}\dfrac{R}{r}\dfrac{1+\eta_n/R}{1-\eta_n/r}
\]
for all $n>N$.
\end{lem}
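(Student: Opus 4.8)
The plan is to sandwich $B_n$ between two dilations of $A$ and read off the bound from support functions. The one ingredient to record first is that, since $a$ is convex and positively homogeneous of degree $1$, one has $h_A(\p)=a(\p)$ for every $\p\in\bar\P$ — this is exactly the support-function identification already used in the proof of Lemma~\ref{lem:truncation}, and it follows from the trivial inequality $h_A\le a$ on $\bar\P$ together with Lemma~\ref{lem:KKT} (for $\p^*\in\bar\P$ one obtains $\y^*\in A$ and a conic representation $\p^*=\sum_{\p\in\P^*}\lambda_\p\p$ with $\p\tr\y^*=a(\p)$ on $\P^*$, so $h_A(\p^*)\ge\p^{*\prime}\y^*=\sum_{\p\in\P^*}\lambda_\p a(\p)\ge a(\p^*)$ by subadditivity). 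Note that $r>0$ forces $0\in A$, so the origin sits a positive distance inside $A$ along every direction of $\bar\P$, and $r\le h_A(\cdot)\le R$ on $\bar\P$.

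Next I would fix $N$ with $\eta_n<r$ for all $n>N$ (possible since $\eta_n=o(1)$) and fix $n>N$. The outer estimate is immediate: any $\y\in B_n$ obeys the constraint at $\p$, so $h_{B_n}(\p)\le b_n(\p)\le a(\p)+\eta_n$ for $\p\in\bar\P$. The substantive step is the inclusion $(1-\eta_n/r)A\subseteq B_n$. To check it, take $\y\in A$ and $\p\in\bar\P$: if $\p\tr\y\le 0$ then $(1-\eta_n/r)\p\tr\y\le 0<r-\eta_n\le a(\p)-\eta_n\le b_n(\p)$; and if $\p\tr\y>0$ then, using $\p\tr\y\le a(\p)$ and $a(\p)\ge r$,
\[
(1-\eta_n/r)\,\p\tr\y\le(1-\eta_n/r)\,a(\p)=a(\p)-\tfrac{\eta_n}{r}a(\p)\le a(\p)-\eta_n\le b_n(\p).
\]
In either case $(1-\eta_n/r)\y\in B_n$, so by monotonicity and homogeneity of support functions $h_{B_n}(\p)\ge(1-\eta_n/r)h_A(\p)=(1-\eta_n/r)a(\p)$ on $\bar\P$.

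Combining the two estimates, $-\eta_n\le a(\p)-h_{B_n}(\p)\le\frac{\eta_n}{r}a(\p)\le\frac{\eta_n R}{r}$ for every $\p\in\bar\P$, whence $\sup_{\p\in\bar\P}\norm{a(\p)-h_{B_n}(\p)}\le\eta_n R/r$ (using $R\ge r$). Finally, since $0\le\eta_n<r$ the factor $\frac{1+\eta_n/R}{1-\eta_n/r}$ is at least $1$, so $\eta_n R/r\le\eta_n\frac{R}{r}\frac{1+\eta_n/R}{1-\eta_n/r}$, which is the claimed inequality for all $n>N$ (the stated right-hand side is looser than necessary, but it is the form used in Proposition~\ref{prop:nonconvex}).

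I expect the dilation inclusion $(1-\eta_n/r)A\subseteq B_n$ to be the only real obstacle: it is the step that upgrades the purely one-sided bound $h_{B_n}\le b_n\le a+\eta_n$ — all that the (possibly non-convex) $b_n$ directly yields — into two-sided control of $h_{B_n}$, and it is exactly where the hypothesis $r>0$ is used. The identity $h_A=a$ on $\bar\P$ is the other ingredient, but it is not new, being the support-function identification already invoked in the proof of Lemma~\ref{lem:truncation} and resting on Lemma~\ref{lem:KKT} plus sublinearity of $a$.
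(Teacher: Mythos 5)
Your proof is correct, but it takes a genuinely different route from the paper's. The paper argues ``dually'': it applies Lemma~\ref{lem:KKT} to $B_n$ at each $\p^*\in\bar\P$ to write $\p^*=\sum_{\p\in\P_n^*}\lambda_{\p,n}\p$ as a conic combination of constraints of $B_n$ that are active at the maximizer $\y_n^*$, then bounds $\abs{a(\p^*)-h_{B_n}(\p^*)}$ by $\eta_n\max\{1,\sum_\p\lambda_{\p,n}\}$ and controls the multiplier sum by $(R+\eta_n)/(r-\eta_n)$. You argue ``primally'': the one-sided bound $h_{B_n}\le b_n\le a+\eta_n$ plus the dilation inclusion $(1-\eta_n/r)A\subseteq B_n$ (which is where $r>0$ and $\eta_n<r$ enter, exactly as in the paper) give two-sided control of $h_{B_n}$ directly. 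Your case analysis for the inclusion is correct, the identification $h_A=a$ on $\bar\P$ that you need is legitimately available (it is Case~1 of Lemma~\ref{lem:polar}, or your Lemma~\ref{lem:KKT}-plus-sublinearity argument), and the resulting bound $\eta_n R/r$ is in fact \emph{sharper} than the stated $\eta_n\frac{R}{r}\frac{1+\eta_n/R}{1-\eta_n/r}$, which you correctly observe dominates it since the extra factor is at least $1$ for $0\le\eta_n<r$. What each approach buys: yours avoids invoking the semi-infinite-programming representation of Lemma~\ref{lem:KKT} for the (possibly non-convex-functioned) sets $B_n$ and does not require attainment of $h_{B_n}$, yielding a cleaner and slightly tighter estimate; the paper's multiplier argument is the one that generalizes when no interior point of $A$ in the relevant directions is available (i.e., when one cannot shrink toward the origin), though under the stated hypothesis $r>0$ that generality is not needed here.
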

\begin{proof}
Fix some $\p^*\in\bar\P$ and some $n$ such that $\eta_n<r$. By Lemma \ref{lem:KKT} there exists a finite set $\P_n^*$, a collection of nonnegative numbers $\{\lambda_{p,n}\}_{p\in\P_n^*}$ and $\y^*_n\in B_{n}$ such that $h_{B_n}=\p^{*\prime}\y^*_n$, $\p^*=\sum_{p\in\P_n^*}\lambda_{p,n}p$, and $\p\tr\y^*_n=b_n(\p)$ for all $p\in\P^*_n$. Note that for all $p\in\p_n^*$ we have that $b_n(\p)=h_{B_n}(\p)$. Then 
\begin{align}
    a(\p^*)&=h_{A}(\p^*)=h_{A}\left(\sum_{\p\in\P_n^*}\lambda_{\p,n}\p\right)\leq\sum_{\p\in\P_n^*}\lambda_{\p,n}h_{A}(\p)= 
    \sum_{\p\in\P_n^*}\lambda_{\p,n}a(\p)\leq \sum_{\p\in\P_n^*}\lambda_{\p,n}(b_n(\p)+\eta_n)\\
    \nonumber&=\sum_{\p\in\P_n^*}\lambda_{\p,n}\p\tr\y^*_n+\eta_n\sum_{\p\in\P_n^*}\lambda_{\p,n}=\p^{*\prime}\y^*_n+\eta_n\sum_{\p\in\P_n^*}\lambda_{\p,n}=h_{B_n}(\p^*)+\eta_n\sum_{\p\in\P_n^*}\lambda_{\p,n}\,.
\end{align}
Moreover,
\begin{align}\label{eq:support func conv 2}
    h_{B_n}(\p^*)\leq b_n(\p^*)\leq a(\p^*)+\eta_n\,.
\end{align}
Hence, $\norm{a(\p^*)-h_{B_n}(\p^*)}\leq \eta_n\max\{1,\sum_{\p\in\P_n^*}\lambda_{\p,n}\}$.
\par
Next note that the inequality in (\ref{eq:support func conv 2}) implies that
\[
\sum_{\p\in\P^*_n}\lambda_{\p,n}\p\tr\y^*_n=\p^{*\prime}\y^*_n=h_{B_n}(\p^*)\leq a(\p^*)+\eta\leq R+\eta_n\,.
\]
In addition, 
\[
\sum_{\p\in\P^*_n}\lambda_{\p,n}\p\tr\y^*_n=\sum_{\p\in\P^*_n}\lambda_{\p,n}b_n(\p)\geq \sum_{\p\in\P^*_n}\lambda_{\p,n}(a(\p)-\eta_n)\geq \sum_{\p\in\P^*_n}\lambda_{\p,n}(r-\eta_n)\,.
\]
Hence,
\[
\sum_{\p\in\P^*_n}\lambda_{\p,n}\leq \dfrac{R+\eta_n}{r-\eta_n}\,.
\]
As a result,
\[
\norm{a(\p^*)-h_{B_n}(\p^*)}\leq \eta_n\max \left\{1,\sum_{\p\in\P_n^*}\lambda_{\p,n} \right\}=\eta_n\max\left\{1,\dfrac{R+\eta_n}{r-\eta_n}\right\}=\eta_n\dfrac{R}{r}\dfrac{1+\eta_n/R}{1-\eta_n/r}\,.
\]
\end{proof}
To prove Theorem~\ref{thm:consistency} note that since $\pi(\cdot,\e)$ and $\hat\pi(\cdot,\e)$ are homogeneous of degree $1$, we have
\begin{align*}
    \pi(\p,\e)/\norm{\p}&=\pi\left(\p/\norm{\p},\e\right)\,,\\
    \hat\pi(\p,\e)/\norm{\p}&=\hat\pi\left(\p/\norm{\p},\e\right)\,.
\end{align*}
To prove Proposition~\ref{prop:nonconvex}, note that by Lemma~\ref{lem:truncation}, with probability $1$,
\[
d_{H}(\Y_{\bar\P}(\e),\hat\Y_{\bar\P}(\e))=\sup_{\p\in\bar\P}\norm{\pi(\p/\norm{\p},\e)-h_{\hat\Y_{\bar\P}(\e)}(\p/\norm{\p})}\,. 
\]
The conclusion then follows by applying Lemma~\ref{lem:support func converg} to the right hand side of the equality above. 

\section{Additional Details on Estimation}\label{app: estimation and simulation}

This section presents an estimator that is used in the illustrative empirical application in Appendix~\ref{app: application}. The estimator builds on the constructive identification result of Theorem~\ref{prop: profit func discret I} and applies with continuous measurement error and discrete heterogeneity in productivity. It proceeds in two steps. First, we find a ``minimal-width'' region of profits that is used to estimate the distribution of measurement error. This uses the well-separatedness structure of Theorem~\ref{prop: profit func discret I}. Second, we use the estimate of the distribution of measurement error to estimate the distribution of structural profit.

\subsection{Estimation of Restricted Profit Function}\label{app: estimation of profit}
To simplify the exposition, in this section we assume the researcher observes data on (unrestricted) profits and prices from $M$ markets $\{\rands{\pi}_{i,m},\rand{p}_{m}\}_{i=1,\dots,N; m=1,\dots, M}$. Here, $\rands{\pi}_{i,m}$ is the observed profit of firm $i$ in market $m$, which may be mismeasured. The index $i$ can be market specific, so in particular firm $1$ in market $1$ may differ from firm $1$ in market $2$. For each market $m$, all firms face the same price vector $\rand{p}_{m}$. There are $N$ firms in every market.\footnote{We assume the same number of firms in every market only to simplify the exposition.} The general case with restricted profits can be handled similarly. We assume that $\rand{p}_{m}=\rand{p}_{m'}$ with probability $1$ if and only if $m=m'$ (i.e., markets have different prices). We require the number of firms per market $N$ to grow to infinity. The number of markets $M$ can be fixed, grow to a finite constant, or diverge to infinity as long as it grows slower than $N$.
\subsubsection{Estimation of the Measurement Error Distribution}
\par
With this setup, we can estimate the distribution of measurement error. We do so by first finding a partition of profits in which some region has ``minimal width.'' To formalize this we first describe how we partition. For a finite set of distinct reals $\mathcal{T}=\{t_{\ell}\}_{{\ell}=1}^L$ and $\kappa>0$, let $t^{({\ell})}$ be the ${\ell}$-th smallest element of $\mathcal{T}$. Next, let $\{T^{\kappa}_k\}_{k=1}^{K_{\kappa}}$ be a smallest (in terms of cardinality) partition of $\mathcal{T}$ such that $\max T^{\kappa}_k<\min T^{\kappa}_{k+1}$ for all $k=1,\dots,K_{\kappa}-1$, and $\abs{t^{(\ell)}-t^{(j)}}\leq \abs{\ell-j}\kappa$ for any $T^{\kappa}_{k}$ and any $t^{({\ell})},t^{(j)}\in T^{\kappa}_{k}$. Such partition always exists but may not be unique. For our purposes, any such partition works.\footnote{This partition is related to so-called density-based clustering. See \citet{kriegel2011density} for a review.} Let $d(T^{\kappa}_k)=\left(\max T^{\kappa}_k-\min T^{\kappa}_k\right)$ be the diameter of the set $T^{\kappa}_k$.  Given the partition, let
$k^{*}$ be the smallest integer such that $d(T^{\kappa}_{k^*})\leq d(T^{\kappa}_k)$ for each $k = 1, \ldots, K_{\kappa}$. That is, $T_{k^*}^{\kappa}$ is the  \emph{first-shortest} element of the partition (see Figure~\ref{fig: construction of C} for an example). Finally, let 
\[
\mathcal{C}(\mathcal{T},\kappa)=\left\{t-\dfrac{1}{\abs{T_{k^*}^{\kappa}}}\sum_{t'\in T_{k^*}^{\kappa}}t'\right\}_{t\in T_{k^*}^{\kappa}}.
\]
That is, the operator $\mathcal{C}$ takes the set $\mathcal{T}$ and threshold $\kappa>0$, computes the set $T_{k^*}^{\kappa}$, and then re-centers this set such that the sample average of elements of it is zero. That is, 
\[
\dfrac{1}{\abs{\mathcal{C}(\mathcal{T},\kappa)}}\sum_{t\in \mathcal{C}(\mathcal{T},\kappa)}t=0.
\]
\begin{figure}
\begin{center}
\begin{tikzpicture}[scale=0.9]
\draw [fill=black] (-6,0) circle[radius=.1];
\draw [fill=black] (-5.89,0) circle[radius=.1];
\draw [fill=black] (-5.69,0) circle[radius=.1];
\draw [fill=black] (-4.97,0) circle[radius=.1];
\draw [fill=black] (-4.24,0) circle[radius=.1];
\draw [fill=black] (-4.23,0) circle[radius=.1];
\draw [fill=black] (-3.54,0) circle[radius=.1];
\draw [fill=black] (-3.29,0) circle[radius=.1];

\draw [fill=black] (-1,0) circle[radius=.1];
\draw [fill=black] (-0.7,0) circle[radius=.1];
\draw [fill=black] (-0.6,0) circle[radius=.1];
\draw [fill=black] (-0.4,0) circle[radius=.1];

\draw [fill=black] (3,0) circle[radius=.1];
\draw [fill=black] (3.89,0) circle[radius=.1];
\draw [fill=black] (3.69,0) circle[radius=.1];
\draw [fill=black] (4.97,0) circle[radius=.1];
\draw [fill=black] (4.24,0) circle[radius=.1];
\draw [fill=black] (4.23,0) circle[radius=.1];
\draw [fill=black] (5.54,0) circle[radius=.1];
\draw [fill=black] (5.29,0) circle[radius=.1];

\node at (-4.5,-0.6) {$T^{\kappa}_1$};
\node at (-0.75,-0.6) {$T^{\kappa}_2$};
\node at (4.25,-0.6) {$T^{\kappa}_3$};
\end{tikzpicture}
\end{center}
\caption{Partitioning into 3 sets. $T^{\kappa}_{2}$ is the shortest element of the partition ($k^*=2$). }\label{fig: construction of C}
\end{figure}

\par
Given a sequence of positive reals $\kappa_N$ that slowly converges to $0$, let $m_N^*$ be a market that has the smallest  $d\left(\mathcal{C}\left(\{\rands{\pi}_{i,m}\}_{i=1}^N,\kappa_N)\right)\right)$. Then, under the assumptions of Theorem~\ref{prop: profit func discret I}, the elements of $\mathcal{C}\left(\{\rands{\pi}_{i,m_N^*}\}_{i=1}^N,\kappa_N)\right)$ mimic the unobserved realizations of the measurement error. Thus, we can apply any consistent estimator (e.g., kernels or sieves) to $\mathcal{C}\left(\{\rands{\pi}_{i,m_N^*}\}_{i=1}^N,\kappa_N)\right)$ to obtain a consistent estimator of the p.d.f. of the measurement error. 
\begin{proposition}
Take $\kappa_N$ such that $\kappa_N=o(1)$ and $\log(N)/(N\kappa_N)=o(1)$. Assume the assumptions of Theorem~\ref{prop: profit func discret I} are satisfied. Assume $\rands{\eta}$ admits a continuous p.d.f. $f_{\rands{\eta}}$. Suppose there is an estimator $\hat{f}_{\rands{\eta}}(\cdot, \{\rands{\eta}_i\})$ that is consistent for $f_{\rands{\eta}}$, based on an i.i.d. sample from $f_{\rands{\eta}}$, denoted $\{\rands{\eta}_i\}$. Let $m^*_N$ be such that $d\left(\mathcal{C}(\{\rands{\pi}_{i,m_N^*}\},\kappa_N)\right)\leq d\left(\mathcal{C}(\{\rands{\pi}_{i,m}\},\kappa)\right)$ for all $m$ with probability $1$. It follows that $\hat{f}_{\rands{\eta}}\left(\cdot, \mathcal{C}(\{\rands{\pi}_{i,m_N^*}\},\kappa_N)\right)$ is a consistent estimator of $f_{\rands{\eta}}$.
\end{proposition}
\begin{proof}
First, note that for any $\kappa>0$ and two random variables $\rands{\eta}_1$ and $\rands{\eta_2}$ that are independently and identically distributed according to $f_{\rands{\eta}}$,
\begin{align*}
    &p(\kappa)=\Prob{\abs{\rands{\eta}_1-\rands{\eta}_2}\leq\kappa}=\int_{-K_1}^{-K_1+\kappa}[F_{\rands{\eta}}(x+\kappa)-F_{\rands{\eta}}(-K_1)]f_{\rands{\eta}}(x)dx+\\
    &\int_{K_2-\kappa}^{K_2}[F_{\rands{\eta}}(K_2)-F_{\rands{\eta}}(x-\kappa)]f_{\rands{\eta}}(x)dx+\int_{-K_1+\kappa}^{K_2-\kappa}[F_{\rands{\eta}}(x+\kappa)-F_{\rands{\eta}}(x-\kappa)]f_{\rands{\eta}}(x)dx,
\end{align*}
where $F_{\rands{\eta}}$ is the c.d.f. of $\rands{\eta}$ supported on $[-K_1,K_2]$, where $K_1,K_2>0$.
The first term in the above equation can be bounded by
\begin{align*}
\int_{-K_1}^{-K_1+\kappa}[F_{\rands{\eta}}(x+\kappa)-F_{\rands{\eta}}(-K_1)]f_{\rands{\eta}}(x)dx & \leq \max_x f_{\rands{\eta}}(x) \int_{-K_1}^{-K_1 + \kappa} [F_{\rands{\eta}}(-K_1+2\kappa)-F_{\rands{\eta}}(-K_1)] dx \\
& \leq \max_x f_{\rands{\eta}}(x)\dfrac{F_{\rands{\eta}}(-K_1+2\kappa)-F_{\rands{\eta}}(-K_1)}{2\kappa}2\kappa^2.
\end{align*}
Similarly, the second term is bounded above by
\[
\max_x f_{\rands{\eta}}(x) \dfrac{F_{\rands{\eta}}(K_2)-F_{\rands{\eta}}(K_2-2\kappa)}{2\kappa}2\kappa^2.
\]
As a result, since $\max_x f_{\rands{\eta}}(x)<\infty$ ($f_{\rands{\eta}}$ is continuous on a compact support) and $F_{\rands{\eta}}$ has a bounded and continuous derivative on a compact set, as $\kappa\to 0$,
\begin{align*}
    &p(\kappa)=2\kappa\int_{-K_1+\kappa}^{K_2-\kappa}[f_{\rands{\eta}}(x)+O(\kappa)]f_{\rands{\eta}}(x)dx+O(\kappa^2)
\end{align*}
and 
\[
\lim_{\kappa\to 0}\dfrac{p(\kappa)}{\kappa}=C=2\int_{-K_1}^{K_2}f^2_{\rands{\eta}}(x)dx>0.
\]
Second, note that given an i.i.d. sample $\{\rands{\eta}_i\}_{i=1}^n$ from $f_{\rands{\eta}}$
\begin{align*}
&\Prob{\max_{i}\min_{j\neq i}\abs{\rands{\eta}_i-\rands{\eta}_j}\leq \kappa}=\Prob{ \bigcap_{i = 1}^n \left\{ \min_{j\neq i}\abs{\rands{\eta}_i-\rands{\eta}_j}\leq \kappa \right\}}\\
&\geq \sum_{i=1}^n\Prob{\min_{j\neq i}\abs{\rands{\eta}_i-\rands{\eta}_j}\leq \kappa} -(n-1)=1-\sum_{i=1}^n\Prob{\min_{j\neq i}\abs{\rands{\eta}_i-\rands{\eta}_j}>\kappa}=\\
&=1-n\Prob{\abs{\rands{\eta}_1-\rands{\eta}_2}>\kappa}^{n-1}=1-n(1-p(\kappa))^{n-1}=1-n(1-C\cdot\kappa+o(\kappa))^{n-1}.
\end{align*}
Hence, 
\begin{align*}
&\lim_{N\to\infty}\Prob{\max_{i}\min_{j\neq i}\abs{\rands{\eta}_i-\rands{\eta}_j}\leq \kappa_N}\geq 1-\lim_{N\to\infty}N\exp(-C(N-1)\kappa_N)=1,
\end{align*}
where the last equality follows from the fact that $\kappa_N$ converges to $0$ slower that $\log(N)/N$.

This bound on the measurement error distribution implies that the largest distance between neighboring observations that are coming from the same productivity level becomes less than $\kappa_N$ with probability approaching 1 as $N$ increases. Hence, since $\kappa_N$ converges to zero and we pick the shortest element of the partition, $\mathcal{C}(\{\rands{\pi}_{i,m_N^*}\},\kappa_N)$ will contain i.i.d observations that correspond to the same productivity level with probability approaching 1. Thus, any consistent estimator that is based on an i.i.d. sample will be consistent. 
\end{proof}
\subsubsection{Estimation of the Structural Profit Function}
Given a consistent estimator of $f_{\rands{\eta}}$, we can estimate the distribution of $\pi(\rand{p}_m,\rand{e})$ conditional on a given market $m$. Note that since $\rand{e}$ has finite support, the observed distribution of profits in market $m$ is a finite mixture
\[
f_{\rands{\pi}|\rand{p}}(\cdot|p)=\sum_{e=1}^{d_e}f_{\rands{\eta}}(\cdot-\pi(p,e))\Prob{\rand{e}=e|\rand{p}=p}.
\]
Hence, any finite-mixture estimators can be applied if a consistent estimator of $f_{\rands{\eta}}$, $\hat{f}_{\rands{\eta}}$, is given. In the simplest case when the number of types $d_e$ is known,\footnote{The number of types can also be estimated. See, for example, \citet{manole2021estimating} and references therein.} we can define a parametric log-likelihood 
\[
\hat{L}(\theta)=\sum_{i=1}^N \log\left(\sum_{e=1}^{d_{e}}\hat{f}_{\rands{\eta}}(\rands{\pi}_{i,m}-\pi_e)\rho_e\right),
\]
where $\theta=\left((\pi_e)\tr_{e=1,\dots,d_{e}},(\rho_e)\tr_{e=1,\dots,d_{e}}\right)\tr$ is a vector of parameters of interest, and then find a maximum-likelihood estimator (MLE) as a solution to
\begin{align*}
&\max_{\theta}\hat{L}(\theta)\\
\text{s.t. }&\sum_{e=1}^{d_{e}} \rho_{e}=1,\\
&\rho_e\geq0, \quad e=1,\dots, d_{e},\\
&\pi_e\leq \pi_{e+1}, \quad e=1,\dots, d_{e}-1.
\end{align*}

\section{Illustrative Empirical Application}\label{app: application}
In this section, we analyze the production of houses using data from \citet{epple2010new} in line with the model described in Section~\ref{subsec:housing}. 
\par
\noindent\textbf{Data.} The data contains information on new housing construction in Allegheny County in Pennsylvania. For every dwelling $i$ we have information about total revenue from selling the house $\rand{v}_i$, the price of land $\rand{p}_{l,i}$, materials per-acre $\rand{m}_i$, and the geographic location of the house that we use to identify the zip-code for each house. From the original sample constructed in \citet{epple2010new}, we exclude houses with the value per unit of land and the price of land above 55 and 7, respectively. There are 5,641 houses in our sample. Table~\ref{tab: summary data} provides summary statistics of our sample. Figure~\ref{fig: pl and v dist} displays a distribution of the price of land and the value per unit of land.
For more details on the original data, see \citet{epple2010new}.
\begin{table}[h]
\centering
\begin{threeparttable}
\centering
\caption{Summary Statistics}\label{tab: summary data}
\begin{tabular}{lccccc}
\hline
\hline
Variable                    & Mean      & Median     & Std    & Min    & Max  \\ 
Value per unit of land      & 14.43   & 13.24 & 8.6 & 0.15 & 54.02\\
Price of land               & 2.26  & 2.11 & 1.28 & 0.05 & 6.99\\
\hline
\end{tabular}
\vspace{1ex}
\begin{tablenotes}
\item {\footnotesize Notes: These summary statistics illustrate the heterogeneity of the value per unit of land and price of land.}
\end{tablenotes}
\end{threeparttable}
\end{table}

\begin{figure}	
\centering
\includegraphics[width=.6\textwidth]{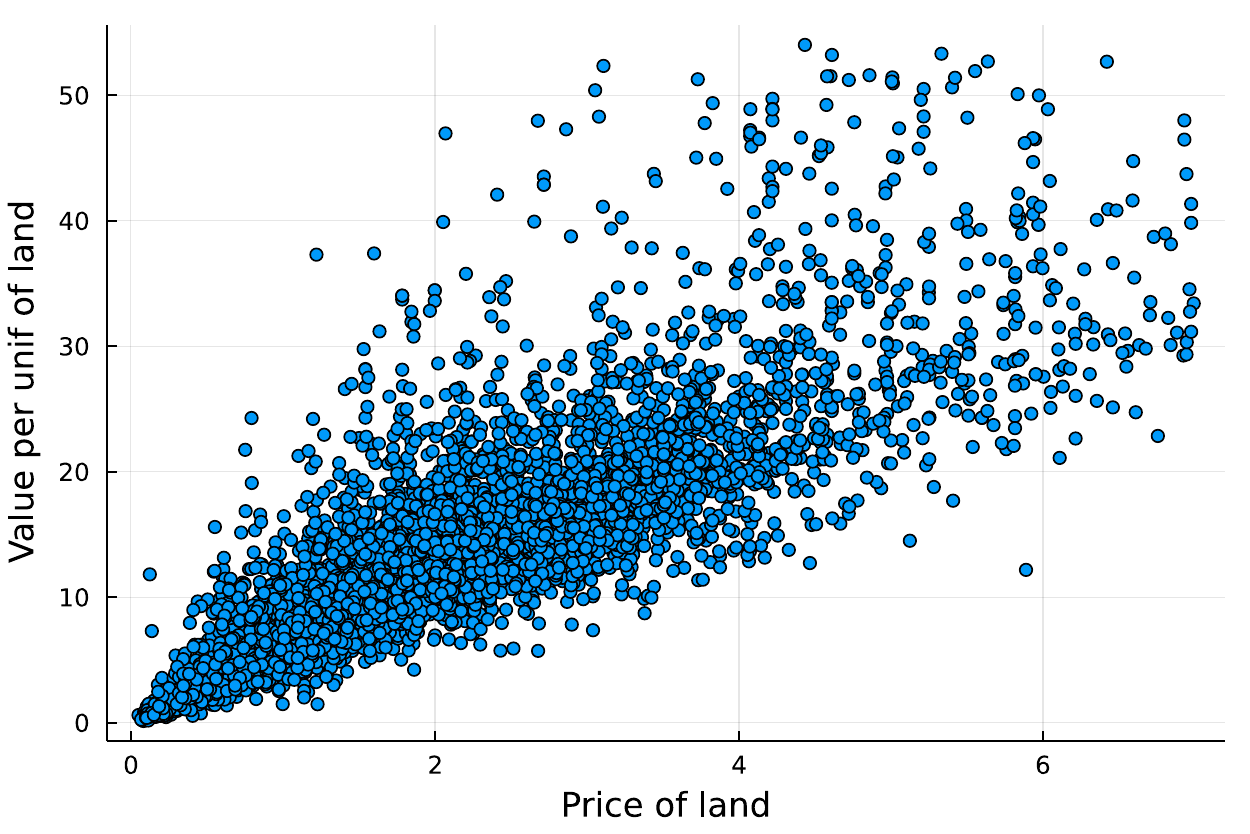}
\caption{\textsc{Distribution of Price of Land and Value per Unit of Land in Sample}}\label{fig: pl and v dist}
\end{figure}
\par
\noindent\textbf{Market definition.} We highlight that our method takes markets as known, but in practice we have to define them. We assume that within Allegheny County, local markets are determined by the location of the house (coordinates) and the price of land. To construct the markets, we use K-means clustering using location and the price of land. We select the number of cluster using the heuristic \emph{elbow} method. We end up with $8$ markets.  After clustering the observations, we average the price of land within the market to obtain the market level price of land $\rand{p}_m$. The distribution of the price of land and the value per unit of land across the local markets are depicted on Figures~\ref{fig: price of land market distribution} and~\ref{fig: value market distribution}.
\begin{figure}	
\centering
\includegraphics[width=.6\textwidth]{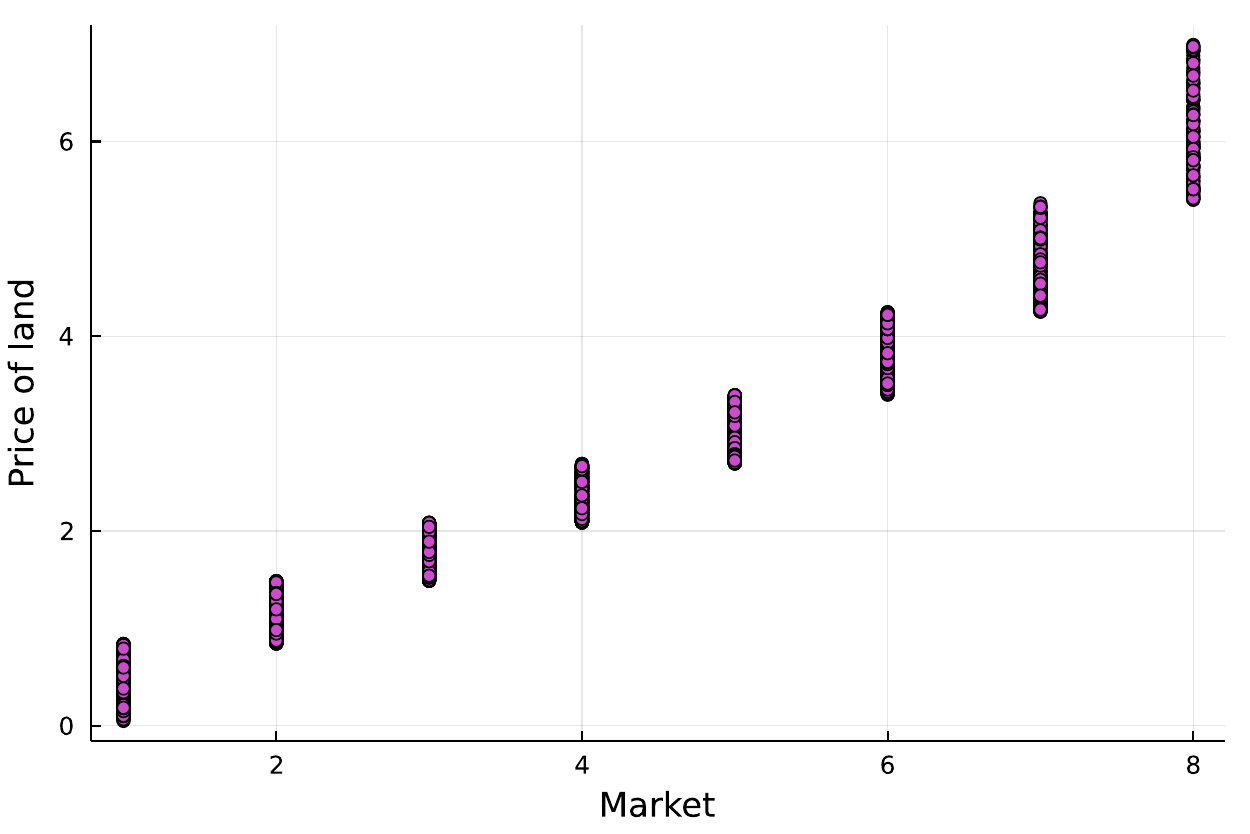}
\caption{\textsc{Distribution of Price of Land across Markets}}\label{fig: price of land market distribution}
\end{figure}
There is not much variation in the price of land  within most of the markets, but substantial variation across markets.\footnote{The largest across markets standard deviation is about $0.46$, which corresponds to market 8.} This is evidence of the validity of our assumption that firms within the same market face the same prices. At the same time, there is a lot of variation in valuation per unit of land across and within markets. Moreover, the valuations are clearly bounded from below and some markets display clear separated sets of points that line up with our assumptions of discrete heterogeneity and bounded support of measurement error.  
\begin{figure}	
\centering
\includegraphics[width=.6\textwidth]{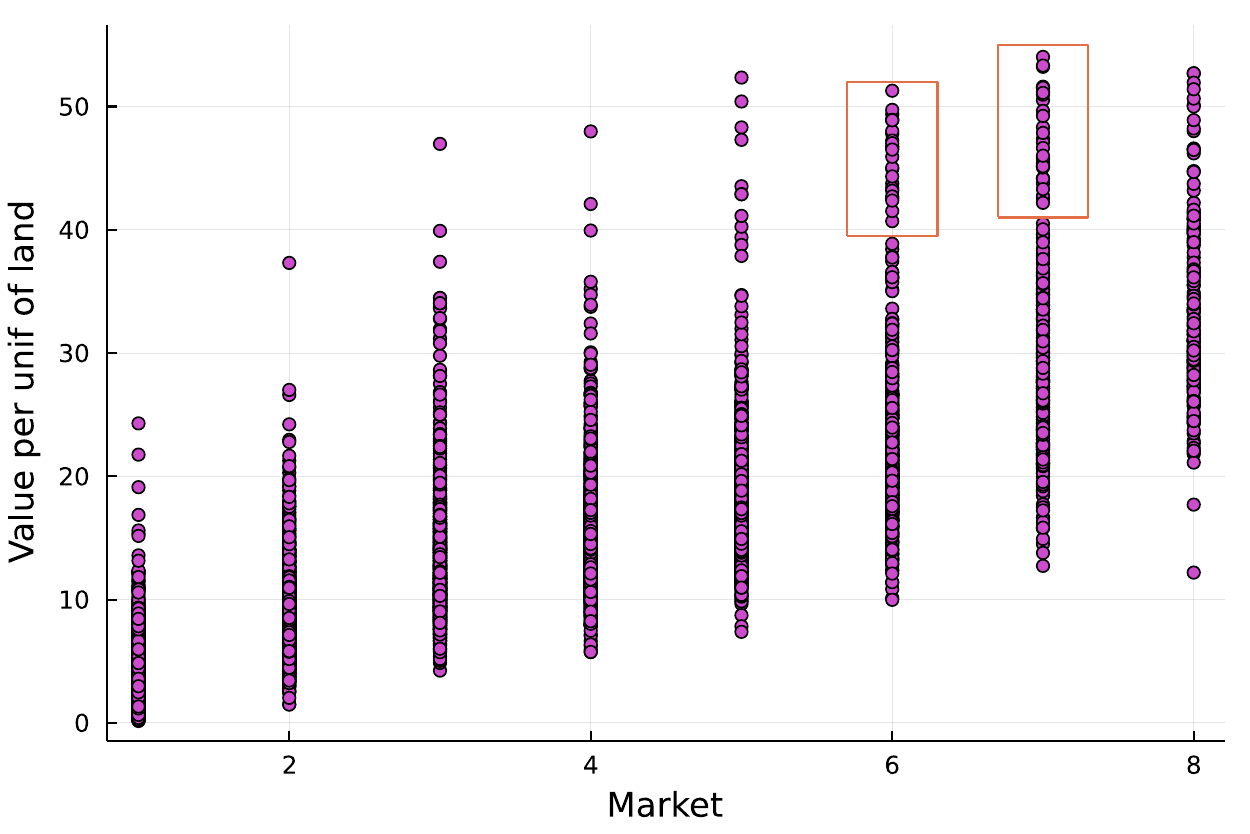}
\caption{\textsc{Distribution of Value per Unit of Land across Markets} The framed regions illustrate clear separations of the highest type for markets 6 and 7.}\label{fig: value market distribution}
\end{figure}
\par
\noindent\textbf{Estimation of the measurement error distribution.}
To estimate the distribution of the measurement error, we note that markets $6$ and $7$ exhibit two sets of observations with high value per unit of land, that are clearly separated  from the rest of observations (see Figure~\ref{fig: value market distribution}).\footnote{Formally, we use a density-based clustering technique in this step.} Moreover, these sets cover intervals of very similar length. This is consistent with the assumption that at least in one market at least one type is clearly separated from other types.\footnote{There other visible separations in Figure~\ref{fig: value market distribution}. However, those regions do not contain enough observations for nonparametric estimation of $f_{\rands{\eta}}$.} We merge observations from these two regions after recentering and then use the Epanechnikov kernel with the least-squares cross-validated bandwidth to estimate the measurement error p.d.f. 
\par
\noindent\textbf{Estimation of values per unit of land for different productivity levels.} For every market, given the estimated density of the measurement error, we apply the procedure for estimation of finite mixtures described in \citet{kim2020fast} to get a good starting point to obtain the MLE of values per unit of land for firms with different productivity $\{\hat{v}_m(e)\}_{m=1,e=1}^{M,d_e}$. We assume for simplicity that the number of types of firms is the same across markets and is equal to $d_e=4$, which is the minimal number of mixtures that is able to cover the observed support of mismeasured values per unit of land.\footnote{The results for $d_e=5$ and $d_e=6$ are qualitatively the same and available upon request.}
\par
\noindent\textbf{Proxy function.} To estimate the proxy function that maps average value per unit of land to the price of output, we follow \citet{epple2010new}. In particular, we use a 3rd order degree polynomial to estimate $\Exp{\rand{p}|\rand{\bar{v}}}$ and then solved the ordinary differential equation (\ref{eq: epple ode}). As a result, we can estimate the output price at every market $\{\hat{p}_{o,m}\}_{m=1}^M$. 
\par
\noindent\textbf{Supply.} To estimate the output level of firms with productivity $e$, we use $\{\hat{v}_{m}(e)/\hat{p}_{o,m}\}_{m=1,e=1}^{M,E}$. The resulting logarithm of supply as a function of the logarithm of the output price is depicted on Figure~\ref{fig: log supply}.
\begin{figure}
\centering
  \begin{tikzpicture}
    \begin{axis}[
    xmin=0.3, xmax=0.6,
    ymin=1.6, ymax=3.5,
    axis lines = left,
    xlabel = Logarithm of Price of output,
    ylabel = Logarithm of Ouput level,
    legend entries={$e=1$, $e=2$, $e=3$, $e=4$},
    legend style={at={(0.9,0.3)},anchor=west},
    ]
    \addplot +[smooth] coordinates {(0.3102519316493614, 1.9255372186797144)
 (0.3703529599892155, 2.1760919708407376)
 (0.3989099615318926, 2.3145943386353665)
 (0.4304505026032072, 2.3201667815870084)
 (0.47442364370006024, 2.4250161241437804)
 (0.5263274767388194, 2.460313217824752)
 (0.5516025914377112, 2.216870385750497)};
    \addplot +[smooth] coordinates {(0.3102519316493614, 1.9261593243179889)
 (0.3703529599892155, 2.8385810318206235)
 (0.3989099615318926, 2.912303202504435)
 (0.4304505026032072, 2.4490773564009385)
 (0.47442364370006024, 2.6141824660204014)
 (0.5263274767388194, 2.7016572977774973)
 (0.5516025914377112, 2.7968271644016407)};
    \addplot +[smooth] coordinates {(0.3102519316493614, 2.7450196383971033)
 (0.3703529599892155, 3.110407398117317)
 (0.3989099615318926, 3.234306080273413)
 (0.4304505026032072, 2.924264164009136)
 (0.47442364370006024, 3.0252586055021298)
 (0.5263274767388194, 2.9787199772289776)
 (0.5516025914377112, 3.0415934050209144)};
    \addplot +[smooth] coordinates {(0.3102519316493614, 3.259895639694167)
 (0.3703529599892155, 3.416281917684249)
 (0.3989099615318926, 3.4062215779461047)
 (0.4304505026032072, 3.3677515325860385)
 (0.47442364370006024, 3.353137616536492)
 (0.5263274767388194, 3.3440311460032253)
 (0.5516025914377112, 3.287030671428205)};
    \end{axis}
    \end{tikzpicture}
\caption{Supply curves for different productivity levels for 7 markets with the highest price of output.}
\label{fig: log supply}
\end{figure}
The supply curves are close to be monotonically increasing. We attribute nonmonotonicity to estimation error. Next we enforce monotonicity by finding the output levels that preserve monotonicity in the output price and minimize the Euclidean distance to the estimated output level. The resulting logarithm of supply, for firms with different productivity, as a function of the logarithm of the output price is depicted on Figure~\ref{fig: log supply2}.
\begin{figure}
\centering
  \begin{tikzpicture}
    \begin{axis}[
    xmin=0.3, xmax=0.6,
    ymin=1.6, ymax=3.5,
    axis lines = left,
    xlabel = Logarithm of Price of output,
    ylabel = Logarithm of Ouput level,
    legend entries={$e=1$, $e=2$, $e=3$, $e=4$},
    legend style={at={(0.9,0.3)},anchor=west},
    ]
    \addplot +[smooth] coordinates { (0.3102519316493614, 1.9255372186679525)
 (0.3703529599892155, 2.176091970839345)
 (0.3989099615318926, 2.314594311510357)
 (0.4304505026032072, 2.32016680855113)
 (0.47442364370006024, 2.3730264842572795)
 (0.5263274767388194, 2.373026484283812)
 (0.5516025914377112, 2.3730264842877262)};
    \addplot +[smooth] coordinates { (0.3102519316493614, 1.926159324319918)
 (0.3703529599892155, 2.716398487122548)
 (0.3989099615318926, 2.7163984871332825)
 (0.4304505026032072, 2.7163984871313525)
 (0.47442364370006024, 2.7163984872000118)
 (0.5263274767388194, 2.716398491967432)
 (0.5516025914377112, 2.7968271644054745)};
    \addplot +[smooth] coordinates { (0.3102519316493614, 2.7450196384007293)
 (0.3703529599892155, 3.057454253663)
 (0.3989099615318926, 3.057454255363498)
 (0.4304505026032072, 3.057454255509833)
 (0.47442364370006024, 3.0574542561825777)
 (0.5263274767388194, 3.057454257411642)
 (0.5516025914377112, 3.057454299799494)};
    \addplot +[smooth] coordinates { (0.3102519316493614, 3.259895639696557)
 (0.3703529599892155, 3.363316561389958)
 (0.3989099615318926, 3.3633165614069447)
 (0.4304505026032072, 3.3633165614135088)
 (0.47442364370006024, 3.3633165614202003)
 (0.5263274767388194, 3.363316561427752)
 (0.5516025914377112, 3.3633165614357368)};
    \end{axis}
    \end{tikzpicture}
\caption{Monotone supply curves for different productivity levels for 7 markets with the highest price of output.}
\label{fig: log supply2}
\end{figure}
\par
\noindent\textbf{Discussion.} Our results indicate that there is substantial heterogeneity in the supply of housing. Recall that the results in \citet{epple2010new} focus on a representative firm. In contrast, our results suggest that we cannot ignore heterogeneity.   For instance, factor-reallocation is total productivity enhancing when resources shift from the less productive to the more productive firms \citep{melitz2014heterogeneous}. 
\par
In the production of housing, heterogeneity can be interpreted as \emph{curb appeal} \citep{epple2010new}. The most productive firms ($e=4$) produce the houses with the highest curb appeal. In that sense, it is important to disentangle this heterogeneity when estimating the housing supply elasticity. Housing supply elasticity has been seen as a key parameter \citep{glaeser2005have} to understand the relationship between urban growth and new residential construction. An inelastic supply means that a positive regional shock will lead to higher paid workers and more expensive  houses. If the housing supply is elastic, then we can expect smaller price changes and expansion of the size of the city. Using Figure~\ref{fig: log supply2}, we computed the average elasticity for all types of firms (Table~\ref{tab: elasticities}). 

\begin{table}[h]
\centering
\begin{threeparttable}
\centering
\caption{Average Elasticities of Firms with Different Productivity}\label{tab: elasticities}
\begin{tabular}{lcccc}
\hline
\hline
                        & $e=1$ & $e=2$ & $e=3$ & $e=4$\\ 
Average Elasticity      & 1.73  & 2.72  & 0.87  & 0.29 \\
\hline
\end{tabular}
\vspace{1ex}
\end{threeparttable}
\end{table}
We observe from Table~\ref{tab: elasticities} that the most productive firm type ($e=4$) has a very inelastic supply. This means that houses with the highest curb appeal will mainly see an increase of prices (without a large expansion) due to a positive regional shock. In contrast, we observe that the least productive firm types ($e=1$ and $e=2$) have elastic supplies. This means that there will be an expansion in the construction (with an smaller increase of prices) of houses with the lowest curb appeal as a result of the same positive regional shock. 
\end{document}